\documentclass{article}
\usepackage[whole]{bxcjkjatype}
\usepackage{graphicx}
\usepackage{braket}
\usepackage{import}
\usepackage{amsmath}
\usepackage[margin=1in]{geometry}
\usepackage{amsfonts}
\usepackage[utf8]{inputenc}
\usepackage[T1]{fontenc}
\usepackage{verbatim}
\usepackage{algorithm}
\usepackage{algpseudocode}
\usepackage{booktabs}

\usepackage{multirow}
\algrenewcommand\algorithmicrequire{\textbf{Input:}}
\algrenewcommand\algorithmicensure{\textbf{Output:}}

\usepackage[sort&compress,numbers]{natbib}
\usepackage{url}
\usepackage{hyperref}
\hypersetup{colorlinks,linkcolor=blue,urlcolor=blue,citecolor=blue,hypertexnames=false}
\usepackage{graphicx}
\usepackage[T1]{fontenc}
\usepackage{amsthm}
\usepackage{amssymb}
\newtheorem{definition}{Definition}[section]
\newtheorem{proposition}{Proposition}[section]

\newtheorem{claim}{Claim}[subsection]
\newtheorem{theorem}{Theorem}[]

\newtheorem{lemma}{Lemma}[section]
\newtheorem{corollary}{Corollary}[section]

\newtheorem{remark}{Remark}[section]
\usepackage{vwcol}
\usepackage{tikz}
\usepackage{tikz-cd}
\usepackage{mathtools}
\usepackage{tasks}
\usepackage{appendix}
\usepackage{algorithm}
\usepackage{algpseudocode}
\usepackage{makecell}
\usepackage{wrapfig}
\usepackage{xspace}
\usepackage{makecell}
\usepackage{array}
\usepackage{multirow}
\usepackage{arydshln}
\usepackage{quantikz}
\usetikzlibrary{calc}
\usetikzlibrary{decorations.pathreplacing,calligraphy}
\usepackage{makecell}

\DeclareMathOperator{\im}{im}

\DeclareMathOperator{\CNOT}{\mathsf{CNOT}}

\usepackage{macros}

\title{Linear-Time Encodable and Decodable Quantum Error-Correcting Codes}
\author{
Adam Wills\thanks{Center for Theoretical Physics --- a Leinweber Institute, Massachusetts Institute of Technology, Cambridge, MA;\\Hon Hai (Foxconn) Research Institute, Taipei, Taiwan. Email: \texttt{a\_wills@mit.edu}}
\and Ting-Chun Lin\thanks{Department of Physics, University of California at San Diego, La Jolla, CA; Hon Hai (Foxconn) Research Institute, Taipei, Taiwan. Email: \texttt{til022@ucsd.edu}}
\and Rachel Yun Zhang\thanks{CSAIL, Massachusetts Institute of Technology, Cambridge, MA. Email: \texttt{rachelyz@mit.edu}}
\and Min-Hsiu Hsieh\thanks{Hon Hai (Foxconn) Research Institute, Taipei, Taiwan. Email: \texttt{min-hsiu.hsieh@foxconn.com}}
}

\begin{document}

\maketitle

\begin{abstract}
Recent years have seen rapid development in the subject of quantum coding theory, with breakthroughs on many exciting classes of codes, including quantum LDPC codes, quantum locally testable codes, and quantum codes with interesting transversal gates. However, a natural class of quantum codes, which has been well-studied classically, has not yet been treated: those which can be quickly encoded and decoded. This problem concerns the channel capacity setting, where a noise channel sits between perfect encoding and unencoding/decoding operations; this is the setting that is relevant for communication between fault-tolerant quantum computers.

In this work, we construct asymptotically good quantum codes that can be encoded and unencoded by quantum circuits of logarithmic depth and consisting of a linear total number of gates. The classical decoding algorithms also run in logarithmic depth and use $\mathcal{O}(n \log n)$ gates, or alternatively a linear number of gates but with higher depth. We further construct explicit and asymptotically good quantum codes whose encoding, unencoding and decoding all use a linear number of gates, and additionally whose encoding and unencoding may be run in logarithmic depth.

\end{abstract}

\section{Introduction}

Motivated by the need to overcome the pervasive effects of noise in quantum systems, recent years have seen a rapid expansion in the study of quantum coding theory. The field has seen the development of many classes of quantum codes, some of which have natural classical analogues, and some of which are inherently quantum. Most notably on the former, in 2021, the existence of asymptotically good quantum LDPC codes was resolved in the affirmative by Panteleev and Kalachev~\cite{panteleev2022asymptotically}, ending a 20-year search. 
This result mirrors the construction of the first explicit and asymptotically good classical LDPC codes—the expander codes of Sipser and Spielman~\cite{sipser1996expander}—and sparked a flurry of follow-up constructions \cite{9996782, 10.1145/3564246.3585101, lin2022goodquantumldpccodes}.

Interestingly, a related work due to Spielman~\cite{spielman1995linear}, which came out at approximately the same time, presented a breakthrough result using related techniques on a different class of classical codes. These were the first known asymptotically good codes to have linear-time encoding and decoding algorithms, where the \textit{time}, also called the \textit{complexity}, of an algorithm refers to the total number of gates used in the algorithm. Spielman also showed that these codes admit a parallel encoding algorithm of logarithmic depth and linear time, and a parallel decoding algorithm of logarithmic depth and $\mathcal{O}(n\log n)$ time.

The results of Spielman's paper apply in the \textit{code capacity} setting. That is, a classical encoding algorithm, which runs without faults, takes in a bit string representing the message to be sent, and outputs the codeword corresponding to that message. That codeword is then passed through a noise channel, and at the other end, a faultless decoding algorithm attempts to recover the message. The relevance of this setting on the classical side is quite clear; indeed, classical processors can be assumed for many purposes to be essentially noiseless, whereas noise occurring during the transmission of a signal, or during long-term storage can be more severe. On the other hand, it is understandable on the face of it why this class of codes has not received attention yet quantumly. Indeed, quantum information is extremely susceptible to noise from environmental sources, and imperfections in the quantum devices themselves cause further problems. Accordingly, quantum information can never be stored, or computed ``raw'', that is, not encoded in a code, and be assumed to remain accurate to any useful degree. As such, it is unsurprising that significant attention has been devoted to quantum codes and fault-tolerance schemes that assume imperfect operations, and do not allow important quantum information to be held unencoded~\cite{aharonov1997fault, gottesman2013fault, yamasaki2024time, nguyen2025quantum}; one might call this the setting of quantum fault tolerance.

On the other hand, the code capacity setting has also been widely studied in quantum error correction, and more broadly in quantum information theory~\cite{schumacher1996quantum, bennett1996mixed, knill1997theory, lloyd1997capacity, divincenzo1998quantum, shor2002quantum, shor2004equivalence, devetak2005private, devetak2005capacity, hastings2009superadditivity, HW2010TIT, Datta2013Smooth, BDH14TIT, ZZHS19TIT}. Here, the setting is the natural analogue of the classical code capacity setting. That is, a quantum message (for example, some number of qubits in a certain state) is encoded, via a faultless quantum operation, into a quantum codestate. This state is then sent through a noise channel. Then, a faultless quantum processor attempts to return the noisy codeword to the original message, usually assisted by a (faultless) classical algorithm. Of this latter process, the quantum part is called the unencoding operation, and the classical part is called the decoding operation. Many problems in this setting have been well-studied, especially on the capacities of various quantum channels, and conditions for the correctability of certain noise models.

The complexities of the operations essential to the channel capacity setting, the quantum encoding and unencoding, as well as the classical decoding, have not been studied. That is, quantum codes with low-depth or low-time encoders, unencoders and decoders, have not been discovered. However, we can consider a key reason why this is a worthwhile problem to study: \textit{quantum communication}. Given two fault-tolerant quantum processors, there are numerous reasons that they may wish to send information between them. In such a case, it would be generically expected that the environment between the quantum computers is far noisier than the environment of the processors themselves, in which quantum information is already protected by some host error-correcting code. That is, using the fault-tolerant operations of the devices, information could be perfectly encoded and unencoded from an error-correcting code, which is then run at the logical level of the quantum computers' host error-correcting code (stated another way, the two codes are concatenated). In such a case, having low-depth encoding, unencoding, and decoding operations is highly desirable to avoid a bottleneck in, for example, a distributed quantum computation or quantum information theoretic task that is being performed.

As well as the application in quantum communication, we believe studying the depth and complexity of the encoding, unencoding and decoding operations over quantum channels is an inherently interesting problem. Indeed, quantum error-correcting codes find utility across quantum information theory and quantum complexity theory, and understanding the complexity and depth of these operations is an important thing to understand.

Our main results follow. The depths and complexities of each algorithm are summarised in Table~\ref{tab:complexities_summary}.

\begin{theorem}\label{thm:main_randomised}
    There exists an asymptotically good quantum error-correcting code over qubits which may be encoded and unencoded using quantum circuits with a linear number of gates. They may also be decoded using classical circuits with a linear number of gates. The codes also have parallel encoding and unencoding quantum circuits of logarithmic depth and a linear number of gates. They have a parallel classical decoding algorithm that runs in logarithmic depth and uses a total number of gates $\mathcal{O}(n \log n)$. The codes may be constructed with any rate in $(0,1)$.
\end{theorem}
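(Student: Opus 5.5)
We would build the code as a CSS code $\mathrm{CSS}(C_X, C_Z)$ with $C_Z^\perp \subseteq C_X$, where both classical codes come from Spielman's linear-time encodable and decodable construction. Encoding and unencoding then reduce to implementing reversible classical linear maps with $\CNOT$ circuits, plus a layer of Hadamards. Decoding splits into independent $X$- and $Z$-error decoding for the two classical codes.

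\textbf{Step 1 (classical ingredients).} First we fix a family of Spielman-type codes. These are obtained by cascading good expander (lossless) codes with an error-reduction structure, and they come with the properties Spielman proves. Encoding is a linear map computable by a circuit of XOR gates of logarithmic depth and linear size. Decoding corrects a constant fraction of errors in linear time, or in logarithmic depth with $\mathcal{O}(n\log n)$ gates.

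We need a pair with $C_Z^\perp \subseteq C_X$ such that both $C_X$ and $C_Z$ have linear distance. The natural route is to choose the underlying random graphs so that the code and its dual are simultaneously good. Alternatively, we take $C_X = C$ and $C_Z = C'$ with $C'^\perp \subseteq C$, where $C'^\perp$ is itself a Spielman-type code embedded as a subcode. For example, $C$ can be generated by an encoder whose message splits into two parts, one part generating $C'^\perp$, and we require $C'$ to have good distance. We would prove this by a random-construction argument (hence the ``randomised'' flavour of the theorem). With positive probability over the random expanders, the required duals have linear distance, and the rate can be tuned anywhere in $(0,1)$ by choosing the dimensions of the two nested codes.

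\textbf{Step 2 (encoding/unencoding circuits).} A classical XOR circuit of depth $d$ and size $s$ that computes a systematic encoding $m \mapsto (m, Gm)$ becomes a $\CNOT$ circuit on qubits initialised to $\ket{0}$, with depth $\mathcal{O}(d)$ and size $\mathcal{O}(s)$. There is one subtlety: fan-out must be bounded, which is handled by the bounded degree of the expanders. The CSS encoder is then built as follows. Apply the $\CNOT$ encoder for $C_X/C_Z^\perp$ on the logical qubits. Prepare the stabiliser part by Hadamards on the ancillas corresponding to a basis of $C_Z^\perp$, and run its $\CNOT$ encoder. Everything is linear-size and log-depth because both pieces are Spielman encoders. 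The unencoder is the inverse circuit, applied after the classical decoder's Pauli correction.

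\textbf{Step 3 (decoding).} We measure the syndromes, which are the parity checks of $C_X$ and $C_Z$. These are sparse, so measurement uses linear gates and constant depth. We then run Spielman's decoder on each syndrome to find a correction.

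A wrinkle is that Spielman's decoder takes a noisy codeword, not a syndrome. We would handle this by working with the syndrome directly. Spielman's error-reduction and expander decoding are syndrome-based flip algorithms at each level, so they adapt. Alternatively, decoding is correct up to stabilisers, and the unencoder absorbs any residual stabiliser.

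\textbf{Main obstacle.} The hardest step is Step 1: simultaneously guaranteeing linear distance of both $C_X$ and $C_Z$ (equivalently, of the dual of a Spielman-type code) while preserving linear-time decodability for both. Spielman's analysis gives nothing about duals. We would first try taking both codes from random lossless expanders, bounding the dual distance by a first-moment count over low-weight dual words. If that fails, we would fall back to concatenation with a small good quantum code.
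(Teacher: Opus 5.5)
Your Step~1 is not merely the hardest step. As you formulate it, it cannot be carried out.

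In a Spielman-type code, every check bit is the parity of a bounded number of other bits. So ``check bit $+$ its neighbours $=0$'' is a dual codeword of constant weight. These $N-K$ relations are triangular in the encoding order, so they span the dual. Hence every Spielman-type code is LDPC and its dual has constant distance. This rules out both of your routes:
\begin{itemize}
\item \textbf{Your main route.} If $C_Z$ is Spielman-type, then $C_Z^\perp\subseteq C_X$ puts constant-weight words into $C_X$. So ``both $C_X$ and $C_Z$ have linear distance'' (equivalently, ``the code and its dual are simultaneously good'') is false. Your first-moment count over low-weight dual words cannot succeed, because the sparse checks are exactly such words.
\item \textbf{Your alternative.} $C_Z=C'$ is spanned by the constant-weight checks of the Spielman code $C'^\perp$. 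Linear quantum distance forces each of them into $C_X^\perp$, so $C_Z\subseteq C_X^\perp$. Together with $C_X^\perp\subseteq C_Z$ this gives zero encoded qubits. Also, Spielman's decoder decodes $C'^\perp$, not $C'$, so Step~3 has nothing to run.
\end{itemize}
The only consistent version of your plan is a degenerate CSS code with both check matrices sparse. That means a good quantum LDPC code which also has a linear-size, log-depth encoder and a linear-time syndrome decoder. No such encoder is known, and nothing in Spielman's construction addresses the commutation constraint $H_XH_Z^T=0$. Falling back to concatenation with a small code does not help, because there is still no outer code.

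The paper avoids this obstruction by never building the quantum code as a CSS pair of fast classical codes. Instead it quantises the error-reduction layer itself:
\begin{itemize}
\item It takes $H_X=(I|A|C)$ and $H_Z=(D|B|I)$ with $C=AB^T+D^T$. Commutation is then automatic, and any sparse $A,B,D$ give a three-layer $\CNOT$ encoder.
\item Cascading these codes Spielman-style yields a code whose stabilisers spread through the log-depth encoder and are not sparse, so the obstruction above does not arise.
\item The price is that syndromes come from unencoding and measuring check qubits. Errors therefore spread during unencoding: $X_{Res}=A^TX_x+X_q$ and $Z_{Res}=Z_q+B^TZ_z$.
\end{itemize}
The paper controls this spreading with four ingredients:
\begin{itemize}
\item Lossless $Z$-graphs, giving joint expansion of $(D|B)$ and of $(A|D^T)$. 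Taking $\Delta_2\gg\Delta_1$ ensures $A^T$ does not blow up the residual error coming from $X_x$.
\item The rewriting $\sigma_x=Z_x+AZ_{Res}+D^TZ_z$, which lets $Z_{Res}$ be reduced directly.
\item A two-round parallel $Z$-reduction. This needs $\Delta_1'^2/\Delta_2$ small and $\epsilon_i=O(1/\Delta_i)$, which is why the random construction is needed for the parallel decoder.
\item Interleaving quantum unencoding with classical error reduction in the parallel concatenation argument.
\end{itemize}
None of these ideas appear in your proposal, and they are where the actual content of the theorem lies.
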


\begin{theorem}\label{thm:main_explicit}
    There exists an explicit construction of an asymptotically good quantum error-correcting code over qubits which may be encoded and unencoded using quantum circuits with a linear number of gates. Moreover, they may be decoded using a classical circuit with a linear number of gates. The codes also have parallel encoding and unencoding quantum circuits of logarithmic depth and a linear number of gates. The codes may be constructed with any rate in $(0,1)$.
\end{theorem}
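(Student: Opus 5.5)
Our approach is to build the code as a CSS code $\mathrm{CSS}(C_X, C_Z)$ with $C_Z^\perp \subseteq C_X$, where both classical codes are Spielman-type linear-time encodable and decodable codes~\cite{spielman1995linear}. The quantum encoding and unencoding circuits are then read off from the classical encoding circuits, implemented as reversible $\CNOT$ networks. For the explicit version, Theorem~\ref{thm:main_explicit}, the random lossless expanders in Spielman's construction are replaced by explicit ones, for example explicit unbalanced lossless expanders or spectral expanders combined with good inner codes. This keeps linear-time decoding, although it possibly costs the logarithmic-depth parallel decoder, which is why Theorem~\ref{thm:main_explicit} does not claim one.

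\textbf{Step 1: a CSS pair built from a single encoding circuit.} Spielman's encoder $E: \mathbb{F}_2^k \to \mathbb{F}_2^n$ is built recursively from sparse expander-based maps. Each layer appends check bits that are $\mathbb{F}_2$-linear functions of earlier bits with bounded fan-in. Appending a bit $b=\sum_{i\in S} a_i$ to an ancilla initialised to $0$ is exactly a sequence of $|S|$ $\CNOT$s. So $E$ lifts to a linear-size Clifford circuit $U_E$ with $U_E\ket{x}\ket{0^{n-k}} = \ket{E(x)}$ on the computational basis. I would encode $k_q$ logical qubits as follows:
\begin{enumerate}
\item place the message qubits together with some $\ket{+}$ ancillas, which supply the $Z$-stabiliser degrees of freedom, and $\ket{0}$ ancillas;
\item apply a $\CNOT$ circuit $U$ whose $X$-action realises a Spielman encoder for $C_X$;
\item arrange that its transpose-inverse action on $Z$-type Paulis realises a good code for $C_Z$.
\end{enumerate}
A $\CNOT$ circuit implementing $M\in GL_n(\mathbb{F}_2)$ maps $X$-Paulis by $M$ and $Z$-Paulis by $M^{-T}$. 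The key construction is therefore an invertible, linear-size, log-depth $M$ such that the following two conditions hold:
\begin{itemize}
\item the image under $M$ of the span of the first $k_1$ coordinates is a Spielman code $C_X$;
\item the image under $M^{-T}$ of the span of the last $n-k_2$ coordinates is also an asymptotically good, linear-time decodable code $C_Z$.
\end{itemize}
I would attempt this by composing two Spielman-style circuits, one acting "forward" and one whose transpose is a Spielman encoder. The inverse of a $\CNOT$ circuit is the reversed circuit, and its transpose is obtained by swapping control and target. This gives a composite $M$ with both properties, since $(M_2M_1)^{-T}$ mixes the two factors in a controllable way.

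\textbf{Step 2: distance, rate and decoding.} I would show that the distance of $C_X\setminus C_Z^\perp$ and of $C_Z\setminus C_X^\perp$ is at least the distance of the Spielman codes, hence linear. The rate is $k_1+k_2-n$ scaled, and can be tuned to any value in $(0,1)$ by choosing the classical rates close to $1$. Decoding uses the two classical Spielman decoders separately on the $X$- and $Z$-syndromes. This inherits linear-size classical circuits, and for the randomised construction it also inherits logarithmic depth with $\mathcal{O}(n\log n)$ gates. Unencoding applies $U^\dagger$ after the Pauli correction. Depth is logarithmic because Spielman's recursion has $\mathcal{O}(\log n)$ levels of bounded-degree layers. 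Each layer of $\CNOT$s with bounded fan-in and fan-out can be scheduled in $\mathcal{O}(1)$ depth by edge-colouring, while fan-in sums are computed by balanced trees.

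\textbf{Main obstacle.} The hard step is Step 1: guaranteeing simultaneously that $C_Z^\perp\subseteq C_X$ and that the \emph{dual-side} code $C_Z$, determined by $M^{-T}$, is itself good and efficiently decodable. Naively $M^{-T}$ of a sparse circuit is dense and unstructured. I would first try a symmetric construction: take $C_Z$ to be a Spielman code $C'$ and set $C_X$ to be a Spielman code containing $C'^\perp$. One could attempt to realise $C'^\perp$ as generated by the transposed encoder, whose $\CNOT$ circuit is again linear-size. Then distance of $C_X$ follows from $C_X$ containing a good subcode's complement-free part, with a random-code union bound for the randomised case and an explicit expander argument in Theorem~\ref{thm:main_explicit}.
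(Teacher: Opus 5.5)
There is a genuine gap. Your Step~1 is essentially the whole theorem, and the proposal never supplies it.

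\textbf{Step~1 is unsupported.} The claim that composing two Spielman-style circuits gives an $M$ for which $(M_2M_1)^{-T}$ ``mixes the two factors in a controllable way'' has no argument behind it, and the structure works against you. $M^{-T}$ is the same circuit with every $\CNOT$ reversed, so the $Z$-side code is governed by the transposes of exactly the matrices that define the $X$-side. Concretely, take any systematic forward (Spielman) encoder. The $X$-stabilisers $Me_i$ coming from the $\ket{+}$ ancillas have no support on the message positions, so a single $Z$ error on a message qubit is an undetectable, nontrivial logical. You are therefore forced to add $\CNOT$s from $X$-check qubits onto message qubits. After that, the $Z$-side code is no longer given by any Spielman encoder, and you have no handle on its distance or its decodability.

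\textbf{The symmetric fallback does not close this.} The transposed encoder computes $G^T$, whose \emph{kernel} (not image) is $C'^\perp$. Even granting a linear-size generator for $C'^\perp$, you would need a good, linear-time decodable $C_X\supseteq C'^\perp$ realised by the same linear-size $\CNOT$ circuit. That is a nested pair of Spielman-type codes, which nothing in Spielman's framework provides. Linear-time codes whose duals are also linear-time are a separate recent topic~\cite{brehm2025linear}, which the paper lists only as a future direction. The closing sentence about $C_X$ ``containing a good subcode's complement-free part'' is not an argument.

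\textbf{Step~2 is fine only conditionally.} You can lift a $Z$-syndrome $s$ to $y=M(0,s)$ with $s$ on the $\ket{0}$-ancilla coordinates, decode $y$, and push the recovered error through $M^{-1}$ classically; this also handles error spreading. But the object it is conditioned on is precisely what has to be constructed.

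\textbf{The missing idea is the paper's.} It does not try to get a good code from one circuit. It builds a constant-distance quantum \emph{error-reduction} code with $H_X=(I|A|C)$ and $H_Z=(D|B|I)$, encoded by three sparse $\CNOT$ layers: $A$ from $X$-check to message qubits, $B^T$ from message to $Z$-check qubits, and $D^T$ from $X$-check to $Z$-check qubits. Then $C=AB^T+D^T$ and commutation come for free.
\begin{itemize}
\item The rewriting $\sigma_x=Z_x+A\,Z_{\mathrm{Res}}+D^TZ_z$ turns $Z$-reduction into a classical error-reduction problem for $(A|D^T)$.
\item $X$-reduction uses $(D|B)$, followed by computing $X_{\mathrm{Res}}=A^TX_x+X_q$.
\item Both need a single graph expanding through $D$ and through $D^T$ (the lossless $Z$-graph), with $\Delta_2\gg\Delta_1'$ to beat the spread term $A^TX_x$.
\end{itemize}
Good codes then come from Spielman's cascade, with quantum unencoding interleaved with classical reduction.

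\textbf{Explicitness needs more than swapping in explicit expanders.} For Theorem~\ref{thm:main_explicit}, a one-sided explicit lossless expander, or Spielman's spectral-expander-plus-inner-code construction, controls at most one of $(D|B)$ and $(A|D^T)$. Gluing individually lossless pieces can also produce neighbourhoods that cancel exactly. The paper instead needs an explicit two-sided, mixed-degree lossless expander. It obtains one by opening up the construction of~\cite{HLMRZ25} with a random constant-size $Z$-gadget, and then pads (using the density of primes $q\equiv 1 \bmod 4$) to hit the sizes the cascade requires.
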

In the statements of Theorems~\ref{thm:main_randomised} and~\ref{thm:main_explicit}, the phrase ``asymptotically good'' is a common term in coding theory which tells us that the code has a constant rate and constant relative distance. Having a constant rate means that the ratio between the number of the code's logical qubits (the number of qubits in the message being communicated), and the number of the code's physical qubits (the number of qubits that must be sent over the channel) is some constant (is bounded away from zero as the code's number of physical qubits diverges). Similarly, having a constant relative distance means that any set of the physical qubits, whose size is some constant fraction of the total, may be arbitrarily damaged, and the original message may be recovered, and indeed in this case the message may be recovered using the described unencoders and decoders of low depth and total complexity.

\renewcommand{\arraystretch}{1.25}

\begin{table}[ht]
\centering

\begin{tabular}{llcc}
\toprule
Construction & Task & Sequential Algorithms & Parallel Algorithms \\
\midrule

\multirow{3}{*}{Randomised}
& Encoding   & Linear & Log depth, linear total \\
& Unencoding & Linear & Log depth, linear total \\
& Decoding   & Linear & Log depth, $\mathcal{O}(n\log n)$ total \\

\addlinespace
\midrule
\addlinespace

\multirow{3}{*}{Explicit}
& Encoding   & Linear & Log depth, linear total \\
& Unencoding & Linear & Log depth, linear total\\
& Decoding   & Linear & Future work\\

\bottomrule
\end{tabular}

\caption{Complexities of algorithms in each construction.
Depth/total gate counts for encoding and unencoding refer to
\textit{quantum} gates; decoding refers to \textit{classical} gates.}\label{tab:complexities_summary}
\end{table}

\subsection{Overview of the Methods}

We now overview the methods we use to construct our quantum error-correcting codes. This is explained at a high level assuming knowledge of the essential elements of quantum error correction, but is made accessible to a broad theoretical computer science audience via the preliminary material in Section~\ref{sec:prelims}.

We start in Section~\ref{sec:overview_qecc_from_qerc} by explaining how these codes may themselves by constructed from certain concatenations of quantum error-reduction codes. Then, in Section~\ref{sec:overview_qerc_from_lossless}, we explain how these quantum error-reduction codes may be constructed from a particular expanding graph that we term a ``lossless $Z$-graph''. Finally, in Section~\ref{sec:overview_lossless_construction}, we describe how our lossless $Z$-graphs are constructed, both randomly and explicitly.

\subsubsection{Quantum Error-Correcting Codes from Quantum Error-Reduction Codes}\label{sec:overview_qecc_from_qerc}

To overview our construction, we begin by describing the first construction of linear-time encodable and decodable \textit{classical} error-correcting codes due to Spielman~\cite{spielman1995linear}. In this paper, we will find that certain elements of that construction quantise seamlessly, whereas others present much greater roadblocks in their quantisation.

Spielman constructs classical error-correcting codes with fast encoders and decoders by taking \textit{classical error-reduction codes} with the same properties and concatenating them in a particular structure. An error-reduction code works as follows. We begin with the message, a sequence of logical bits $x \in \mathbb{F}_2^n$ that we wish to communicate to a receiver. This will be encoded into the error-reduction code, and so we also start with $m$ bits initialised in some fixed state such as $0$. An encoding circuit is then run on the collective bits in order to produce the codestate corresponding to the message $x$. Once we have the codestate corresponding to $x$, the $n$ bits that initially corresponded to the message are called \textit{message bits}, whereas the $m$ bits initialised before the encoding circuit in a fixed state are called \textit{check bits}. In total, they form the codeword of length $n+m$.

Typically, for an error-reduction code, we would think of this encoding circuit as taking place with extremely low complexity and depth; indeed, in~\cite{spielman1995linear}, the error-reduction codes have constant-depth encoding circuits (of linear complexity). Of course, having a constant-depth encoding circuit means that the code must have a constant distance, and thus cannot be a good error-correcting code, but what can it do? Indeed, suppose that in the noise channel, $v$ errors occur on the message bits, and $t$ errors occur on the check bits, where $v$ and $t$ are some small constant fraction of the block length. The code is said to be an error-reduction code with error reduction $\epsilon$ if it is possible (via some error-reduction algorithm) to recover the original message up to some residual error of size at most $\epsilon \cdot t$. In some sense, therefore, the error-reduction code is still resilient to check bit errors even if it cannot correct them entirely. As well as the encoding algorithm having low depth and complexity, one should think of the error-reduction algorithm as also having low depth and complexity. The low depths and complexities of the encoding and error-reduction algorithms for the error-reduction codes will translate into low depths and complexities for the encoding and decoding algorithms of the error-correcting code, respectively.

Let us now describe the concatenation structure~\cite{spielman1995linear} that allows us to construct classical error-correcting codes from classical error-reduction codes. As always in this section, we omit details, and present the high-level idea. We construct a family of error-correcting codes $(\mathcal{Q}_k)_{k=0}^\infty$ inductively, where the base case $\mathcal{Q}_0$ may simply be chosen to be a random code of constant size. Then, $\mathcal{Q}_k$ is constructed from an error-reduction code $\mathcal{R}$ and the error-correcting code $\mathcal{Q}_{k-1}$, as follows.\footnote{The error-reduction code $\mathcal{R}$ will technically be some family of error-reduction codes, but we omit these details for this high-level description.} Describing the encoding circuit suffices to describe the code itself. The encoding is depicted in Figure~\ref{fig:cascade_structure}.

\begin{figure}[h!]
    \centering
    \includegraphics[width=0.35\linewidth]{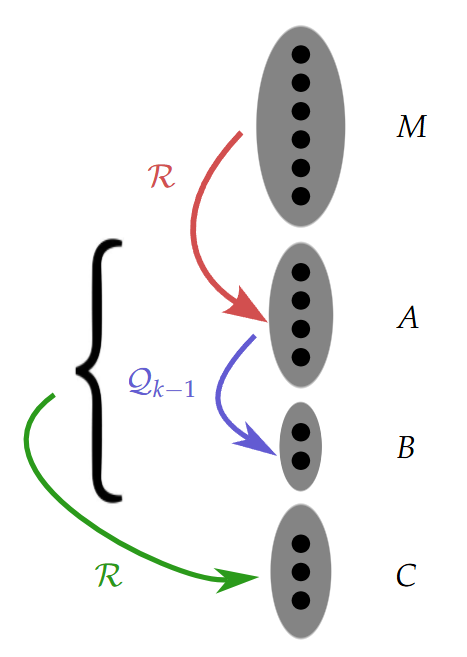}
    \caption{High-level construction of the classical error-correcting code $\mathcal{Q}_k$ from the classical error-correcting code $\mathcal{Q}_{k-1}$ and the classical error-reduction code $\mathcal{R}$. We will use the same structure to form quantum error-correcting codes from quantum error-reduction codes.}
    \label{fig:cascade_structure}
\end{figure}

\begin{enumerate}
    \item The code $\mathcal{Q}_k$ has a set of message bits $M$ which we wish to encode. First, $M$ is encoded into the code $\mathcal{R}$ using a set of check bits $A$.
    \item The bits in $A$ are encoded as message bits into the code $\mathcal{Q}_{k-1}$ using another set of bits $B$ as check bits.
    \item The bits in $A \cup B$ are encoded as message bits into the code $\mathcal{R}$ using a final set of check bits $C$.
\end{enumerate}
First, we note that if $\mathcal{R}$ has a very high rate (greater than $1/2$), then the family of codes $\mathcal{Q}_k$ constructed in this way has a constant rate. Moreover, the length of the code family $\mathcal{Q}_k$ grows exponentially with $k$. In turn, this means that a number of layers of concatenation logarithmic in the block length is used to encode $\mathcal{Q}_k$. This alone has consequences for the encoding circuit. Indeed, it is quite quick to show that if $\mathcal{R}$ has a constant-depth encoding circuit with linear complexity, then $\mathcal{Q}_k$ has a logarithmic-depth encoding circuit with linear complexity.

Next, let us describe the decoding of $\mathcal{Q}_k$. There exist both sequential and parallel algorithms; here we describe only the sequential algorithm for brevity. First, assume that the noise channel has left errors on small constant fractions of the bits in $M, A, B$ and $C$. To decode the code, first, the error-reduction algorithm is run for the code $\mathcal{R}$ on the bits $A \cup B \cup C$. Doing so leaves us with bits $M \cup A \cup B$, however, the number of errors in $A \cup B$ has been reduced (relative to the initial number in $C$). Indeed, the idea is that the number of errors in $A \cup B$ has been reduced to the extent that they are now fully correctable by the error-correcting code $\mathcal{Q}_{k-1}$, using its own decoding algorithm. After using the decoding algorithm of $\mathcal{Q}_{k-1}$, we are now left with the bits in $M \cup A$ as a noisy codeword of the code $\mathcal{R}$, however, the errors on $A$ have been completely removed. Since this code $\mathcal{R}$ has zero errors on its check bits (bits in $A$), we may thus recover the message $M$ perfectly, by the definition of an error-reduction code. Here, one may show that the linear complexity of the error-reduction algorithm for $\mathcal{R}$ translates to a linear complexity of decoding algorithm for $\mathcal{Q}_k$. The parallel decoding algorithm follows via slightly different considerations, but, ultimately, a constant-depth and linear-complexity parallel error-reduction algorithm for $\mathcal{R}$ translates into a logarithmic-depth decoding algorithm for $\mathcal{Q}_k$ with complexity $\mathcal{O}(n \log n)$.

Now that we have described Spielman's procedure at a high level for constructing error-correcting codes from error-reduction codes, let us consider how these steps may be quantised. Indeed, we will go on to show that the high-level picture of constructing quantum error-correcting codes as concatenations of quantum error-reduction codes, using the same structure as that described above, passes smoothly to the quantum case. The majority of the difficulties are faced in the construction of the quantum error-reduction codes themselves, but we tackle these in later sections.

First, let us define quantum error-reduction codes; this is a new notion in the literature as far as we know. The quantum error-reduction codes, and the quantum error-correcting codes that we use them to construct, are all stabiliser codes, in particular CSS codes~\cite{calderbank1996good, gottesman1997stabilizer}. CSS codes may be encoded by taking $n$ qubits in a particular state that we wish to encode, as well as $m$ qubits all in the state $\ket{+}$, and $m$ further qubits in the state $\ket{0}$, and running a unitary encoding circuit on them. The $n$ qubits are known as the ``message qubits'', whereas the two groups of $m$ qubits are known as the ``$X$-check qubits'' and ``$Z$-check qubits'', respectively: collectively called ``check qubits''. Note that the encoding circuit will simply be chosen to be a circuit of $\CNOT$ gates, and in this case, we end up with a CSS code with $m$ checks of $X$ type and $m$ checks of $Z$ type. 

After encoding, we imagine that all $n+2m$ qubits in the codeblock are sent through a noise channel, resulting in some small constant fraction of the message qubits and check qubits being afflicted by errors. Suppose that $v$ message qubits are afflicted by errors, and $t$ check qubits are afflicted by errors. At a high level, we call the code a quantum error-reduction code with error reduction $\epsilon$, if there is some process that can return to us the $m$ message qubits, afflicted by up to $\epsilon\cdot t$ errors. A more formal statement may be found in Definition~\ref{def:qerc} based on the material in Section~\ref{sec:enc_unenc_dec_css}.

One interesting comment to make about the error-reduction process is that it is a hybrid classical-quantum process. Indeed, a quantum circuit unitarily unencodes the noisy codeword from the codespace (this is simply the inverse of the encoding circuit). If no errors occurred, this would just return the message qubits to the state sent as a message, all of the $X$-check qubits to $\ket{+}$, and all of the $Z$-check qubits to $\ket{0}$. However, in general, errors occur, leaving some $X$-check qubits in the state $\ket{-}$, and some $Z$-check qubits in the state $\ket{1}$, as well as some residual Pauli error on the message qubits. A quantum measurement of the $X$-check qubits in the $X$ basis, as well as the $Z$-check qubits in the $Z$ basis, reveals the syndrome, that is, the subset of the stabilisers that the noise has violated. It is then up to a classical error-reduction algorithm to decide on a Pauli correction for the message qubits upon input of this syndrome. This Pauli correction should reduce the size of the error on the message qubits to within the desired size, which is $\epsilon\cdot t$.

We will go on to show in detail that this picture plays out well at a high level for quantum codes in Section~\ref{sec:erc_to_ecc}. One complication that will arise will be in the handling of classical versus quantum information and algorithms. Indeed, we will have to be careful to interlace quantum unencoding operations, as well as classical error-reduction algorithms, in order to prevent a problem of error spreading (described in the next section). This issue is particularly noticeable when handling the parallel algorithms, for which we will need to continue with multiple rounds of error reduction even after the quantum information has been measured out and become classical. We defer the remaining details to Section~\ref{sec:erc_to_ecc}.

\subsubsection{Quantum Error-Reduction Codes from Lossless \texorpdfstring{$Z$}{}-Graphs}\label{sec:overview_qerc_from_lossless}

Whereas the translation from error-reduction codes to error-correcting codes quantises smoothly, up to being careful about details as mentioned, the construction of the quantum error-reduction codes themselves will present several challenges not present on the classical side, as we now describe.

One's first attempt to create a quantum error-reduction code might be to define a quantum CSS code whose $X$-checks and $Z$-checks both correspond to classical error-reduction codes. That is, one might imagine a quantum CSS code with $m$ $X$-check qubits, $n$ message qubits, and $m$ $Z$-check qubits, whose parity-check matrices take the form
\begin{alignat}{4}
    H_X &= (I&&|A&&|0&&)\label{eq:qerc_first_attempt_X_checks}\\
    H_Z &= (0&&|B&&|I&&).\label{eq:qerc_first_attempt_Z_checks}
\end{alignat}
In this equation, the physical qubits of the code are separated into $m$ $X$-check qubits, $n$ message qubits, and $m$ $Z$-check qubits, respectively. $I$ denotes an $m \times m$ identity matrix, and $0$ denotes an $m\times m$ zeros matrix. The matrices $A,B \in \mathbb{F}_2^{m \times n}$ denote the support of the $X$ checks and $Z$ checks on the message qubits, respectively. One might hope that this forms a quantum error-reduction code if the matrices $(I|A)$ and $(I|B)$ are parity-check matrices for classical error-reduction codes on $m$ check bits and $n$ message bits. Of course, the immediate problem one faces is commutativity, that is, this choice of $H_X$ and $H_Z$ only form a valid CSS code if $H_X\cdot H_Z^T = 0$, which is if and only if $A\cdot B^T = 0$. On the face of it, this is problematic because we do not know constructions of classical error-reduction codes for which this is true.

The problem is, however, even worse than this, as follows. Suppose that one did have classical error-reduction codes for which this holds. One could try to implement this as shown in Figure~\ref{fig:qerc_first_attempt}. That is, the $X$ checks are encoded via some $\CNOT$s corresponding to the matrix $A$, and the $Z$ checks are encoded via some $\CNOT$s corresponding to the matrix $B^T$.\footnote{Formally, this would mean that we perform a $\CNOT$ gate from the $i$'th $X$-check qubit to the $j$'th message qubit for every $(i,j) \in [m]\times [n]$ for which $A_{ij} = 1$, and similarly for $B^T$.} Assuming $AB^T = 0$, this would indeed encode the message qubits in the desired state $\ket{\psi}$ into the quantum CSS code defined by the Equations~\eqref{eq:qerc_first_attempt_X_checks} and~\eqref{eq:qerc_first_attempt_Z_checks}.

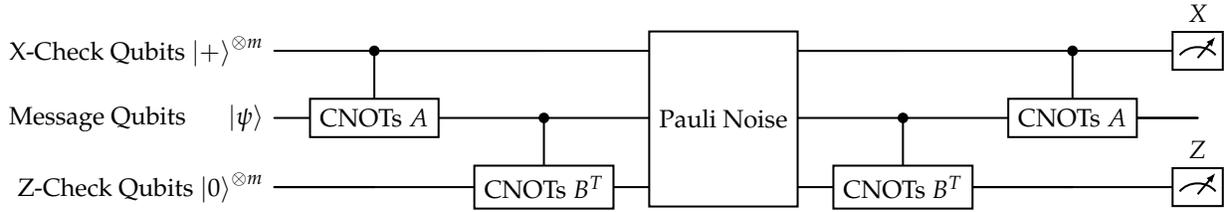
\begin{figure}[ht]
    \centering
    \scalebox{0.95}{\begin{quantikz}[row sep=0.4cm, column sep=0.5cm]
\lstick{X-Check Qubits $\ket{+}^{\otimes m}$}
  & \ctrl{1}
  & \qw
  & \gate[wires=3]{\text{Pauli Noise}}
  & \qw
  & \ctrl{1}
  & \meter{X} \\
\lstick{Message Qubits\hspace*{0.5cm} $\ket{\psi}$}
  & \gate[wires=1]{\text{CNOTs } A}
  & \ctrl{1}
  &
  & \ctrl{1}
  & \gate[wires=1]{\text{CNOTs } A}
  & \qw \\
\lstick{Z-Check Qubits $\ket{0}^{\otimes m}$}
  & \qw
  & \gate[wires=1]{\text{CNOTs } B^T}
  &
  & \gate[wires=1]{\text{CNOTs } B^T}
  & \qw
  & \meter{Z}
\end{quantikz}}
    \caption{A first attempt at constructing and operating a quantum error-reduction code. This figure depicts the process of encoding, noise, unencoding, and stabiliser measurement.}
    \label{fig:qerc_first_attempt}
\end{figure}
After the encoding, noise occurs on all the qubits. Then, the code is unencoded by inverting the encoding circuit,\footnote{Note that in Figure~\ref{fig:qerc_first_attempt}, the encoding circuit is inverted simply by running the $\CNOT$s after the Pauli noise in the opposite order to the $\CNOT$s before the Pauli noise.} and finally the $X$-check qubits are measured out in the $X$ basis, and the $Z$-check qubits are measured out in the $Z$ basis, thus measuring the stabilisers of the code. One's hope is that $X$ errors on the message qubits are measured by the $Z$ checks, and $Z$ errors on the message qubits are measured by the $X$ checks, and may be reduced.

At this point, however, we find the most drastic problem that one encounters when one tries to construct a quantum error-reduction code: the problem of error spreading. The problem is that $X$ errors on the $X$-check qubits, and $Z$ errors on the $Z$-check qubits can spread into the message qubits during the unencoding circuit, and are not detected or reduced by the code. In particular, note that $X$ errors on the $X$-check qubits get transferred into the message qubits via the block ``$\CNOT$s $A$'' in the unencoding, and $Z$ errors on the $Z$-check qubits get transferred into the message qubits via the block ``$\CNOT$s $B^T$'' in the unencoding. In some sense, we need a construction that treats all of the different types of errors on each type of qubit at the same time.

In this work, we find that these problems can all be handled simultaneously by a construction of quantum error-reduction codes that is based on a particular two-way expanding bipartite graph that we term a ``lossless $Z$-graph''. The idea is as follows. First, we wish to add in some support of the $X$ checks to the $Z$-check qubits, and $Z$ checks to the $X$-check qubits, in order to attempt to catch $Z$ errors happening on the $Z$-check qubits, and $X$ errors happening on the $X$-check qubits, respectively. Indeed, we now let our parity-check matrices take the form
\begin{alignat}{4}
    H_X &= (\,I&&|A&&|C&&)\\
    H_Z &= (D&&|B&&|\,I&&),
\end{alignat}
where $C, D \in \mathbb{F}_2^{m \times m}$. Considering the issue of commutativity, we have that this forms a valid quantum CSS code only if $H_X \cdot H_Z^T = 0$, which is if and only if
\begin{equation}\label{eq:C_matrix}
    C = A\cdot B^T + D^T.
\end{equation}
Our strategy will be to specify the matrices $A, B$ and $D$, and then simply let $C$ be chosen according to this equation.

First, we note that the code of this form may be simply operated in the code capacity setting as shown in Figure~\ref{fig:qerc_explain}. This figure shows that the encoding and unencoding circuits have been augmented with new groups of $\CNOT$ gates from the $X$-check qubits to the $Z$-check qubits according to the matrix $D^T$. Note that we have also given names to the $X$ support and $Z$ support of the noise occurring on each group of qubits. It may be initially surprising that the matrix $C$ does not appear in this figure. However, we show in Section~\ref{sec:construct_quantum_error_reduction_codes} that the (un)encoding circuit shown consisting of the three blocks of $\CNOT$s suffices to (un)encode the quantum code, and this holds for any matrices $A, B$ and $D$. Importantly, if the matrices $A, B$ and $D$ are chosen to be sparse, then the encoding may be run in constant depth, and with a linear number of gates.
\begin{figure}[ht]
    \centering
    \scalebox{0.75}{\begin{quantikz}[row sep=0.4cm, column sep=0.5cm]
\lstick{X-Check Qubits $\ket{+}^{\otimes m}$}
  & \ctrl{1}
  & \qw
  & \ctrl{2}
  & \gate[wires=1]{X_xZ_x}
  & \ctrl{2}
  & \qw
  & \ctrl{1}
  & \meter{X} \\
\lstick{Message Qubits\hspace*{0.5cm} $\ket{\psi}$}
  & \gate[wires=1]{\text{CNOTs } A}
  & \ctrl{1}
  & \qw
  & \gate[wires=1]{X_qZ_q}
  & \qw
  & \ctrl{1}
  & \gate[wires=1]{\text{CNOTs } A}
  & \qw \\
\lstick{Z-Check Qubits $\ket{0}^{\otimes m}$}
  & \qw
  & \gate[wires=1]{\text{CNOTs } B^T}
  & \gate[wires=1]{\text{CNOTs }D^T}
  & \gate[wires=1]{X_zZ_z}
  & \gate[wires=1]{\text{CNOTs }D^T}
  & \gate[wires=1]{\text{CNOTs } B^T}
  & \qw
  & \meter{Z}
\end{quantikz}}
    \caption{The encoding, noise, unencoding, and syndrome measurement of our quantum error-reduction codes. The Pauli noise on each group of qubits is labelled. For example, the $X$ support of the noise on the message qubits may be labelled by a $X$-type Pauli $X_q$, which may equivalently be thought of as a bit string in $\mathbb{F}_2^n$.}
    \label{fig:qerc_explain}
\end{figure}
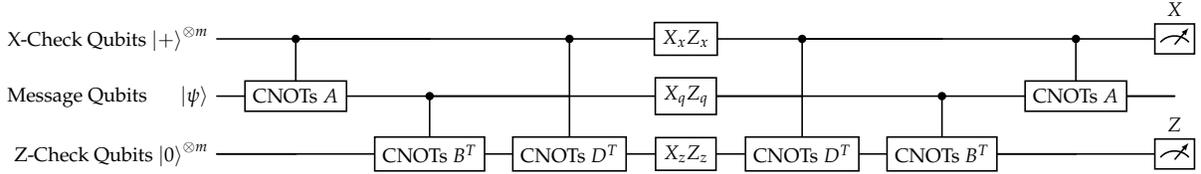

By considering only the matrices $A, B$ and $D$, and encoding circuits of the above form, and letting $C$ be according to the form of Equation~\eqref{eq:C_matrix}, we have sidestepped the problem of ensuring commutativity, but retain the problem of error spreading, and the need to suppress errors that have spread to the message qubits. In order to make progress, we need to consider the exact syndromes that are measured, as well as the residual errors on the message qubits after error spreading in the unencoding circuit. Indeed, one can show that the syndromes measured from the $Z$ checks and $X$ checks are, respectively,
\begin{align}
    \sigma_z &= D\cdot X_x + B \cdot X_q + X_z,\\
    \sigma_x &= Z_x + A\cdot Z_q + C\cdot Z_z.
\end{align}
Moreover, we find that the residual errors on the message qubits, of $X$ type and $Z$ type are, respectively,
\begin{align}
    X_{Res} &\coloneq A^T\cdot X_x + X_q,\\
    Z_{Res} &\coloneq Z_q + B^T\cdot Z_z.
\end{align}
These are exactly the errors on the message qubits for which we want to develop approximations, in order to reduce their size.

Ultimately, we will have to take care of four problems: reducing residual errors of $X$ and $Z$ type, via both sequential and parallel algorithms, but let us start with the most simple case, which is the problem of reducing the $X$ errors via a sequential algorithm (tackled in Section~\ref{sec:sequential_X_reduction_algos}). The reason that the $X$ errors are particularly simple to handle is that the syndrome and residual error of this problem are both entirely in terms of matrices over which we have direct control, that is, $A$, $B$ and $D$. Examining the syndrome in this case, which is $\sigma_z$, we note that it takes the form of a syndrome for a classical error reduction problem. In this analogy, $X_x$ and $X_q$ may be thought of as message bit errors, and $X_z$ as check bit errors. 

To make this analogy more concrete, suppose we were to define a new matrix $B' \coloneq (D|B)$, which has $m$ rows and $n'$ columns, where $n' \coloneq m+n$. Then, we suppose $(B'|I)$ is a parity-check matrix for a classical error-reduction code. This means that, given a message bit error $V \in \mathbb{F}_2^{n'}$ and a check bit error $T \in \mathbb{F}_2^m$, where $|V| = v$ and $|T| = t$, we measure a syndrome $B' \cdot V + T$. Since $(B'|I)$ is the parity-check matrix for a classical error-reduction code, there is then an algorithm allowing us to develop an approximation to $V$, up to an error of size $\leq \epsilon\cdot t$.

Returning to our problem of reducing the residual $X$ error, what we aim to do, therefore, is let $H_Z = (D|B|I)$ be a parity-check matrix for a classical error-reduction code (with $n+m$ message bits and $m$ check bits), at which point we will be able to develop approximations to $X_x$ and $X_q$, that are correct up to an error of size $\leq \epsilon \cdot |X_z|$. Of course, the error that we are actually aiming to approximate is not $X_x$ or $X_q$, but $X_{Res}$. We can use our approximations for $X_x$ and $X_q$ to calculate an approximation for $X_{Res}$. One might be worried, however, that because $X_x$ is multiplied by the matrix $A^T$ in the expression for $X_{Res}$, that having a good approximation for $X_x$ and $X_q$ may not result in a good approximation to $X_{Res}$, that is, the error in our approximation to $X_x$ could ``blow up'' so that the final approximation to $X_{Res}$ in not a good one. In order to handle this, we will have to ensure that the amount of error reduction (called $\epsilon$ above) achieved on the error $X_x$ is small with respect to the maximum column weight of $A^T$, thus ensuring that the final approximation to $X_{Res}$ is still a good one.

At this point, we must explain how we can ensure that the factor of error reduction on $X_x$ (the $\epsilon$) is sufficient, and in order to do this, we must explain the actual construction of the classical error-reduction code by which the matrix $(D|B|I)$ gets defined. Our inspiration comes from the randomised construction of classical error-reduction codes in Spielman~\cite{spielman1995linear}. There, a classical error-reduction code with parity-check matrix $(B'|I)$ gets defined by considering a one-sided lossless expander graph (which Spielman obtains randomly), with $n'$ ``left'' vertices, corresponding to message bits, and $m$ ``right'' vertices, corresponding to check bits. The matrix $B'$ is defined as the adjacency matrix of the graph. Spielman showed that the lossless expansion may enable the error reduction, where the error reduction itself takes place via some simple small set flip algorithm.

We take a similar approach here, by letting $(D|B|I)$ be a parity-check matrix for a classical error-reduction code, which is achieved again via a lossless expander graph. Indeed, we consider a bipartite graph with two groups of ``left'' vertices, of sizes $m$ and $n$, and one group of ``right'' vertices, of size $m$. This must expand in the sense that small sets of the $m$ and $n$ left-hand vertices must jointly have a large number of neighbours amongst the right-hand vertices. We show that a similar small-set flip algorithm, inputted with the syndrome $\sigma_z = D\cdot X_x + B\cdot X_q + X_z$, may produce approximations to $X_x$ and $X_q$, such that the remaining error has size that is small in $X_z$. In particular, supposing that in our bipartite graph, all the degrees of the left-hand set of $m$ nodes are some constant $\Delta_2 \in \mathbb{N}$, the remaining error on $X_x$ will be shown to have size $\lesssim \frac{|X_z|}{\Delta_2}$. We see that, by taking this degree $\Delta_2$ to be some very large constant, a large amount of error reduction may be achieved. In particular, if we suppose the (column) sparsity of the matrix $A^T$ is $\sim \Delta_1$, then as long as $\Delta_2 \gg \Delta_1$, our final calculated approximation to $X_{Res}$ will still be a good one. It is interesting to comment at this point that, in our case, we have to be very careful about the amount of error reduction that we achieve, such as this factor $\Delta_2$. In Spielman's work, smaller factors, such as a simple error reduction by a factor of $2$, turn out to be enough.

To summarise this part, the reduction of the $X$ errors is made possible by taking the matrix $H_Z = (D|B|I)$ to be the parity-check matrix for a classical error-reduction code, which we achieve by letting $(D|B)$ be the adjacency matrix for some bipartite graph with one-sided expansion. As long as the degree $\Delta_2$ of the nodes corresponding to the columns of $D$ is much larger than the (column) sparsity $\sim \Delta_1$ of $A^T$, then we will be able to achieve a reduction in the size of $X_{Res}$.

Since we have already constrained the matrices $A, B$ and $D$ so much, it is not clear at this point how we will be able to handle the reduction of the $Z$ errors, either with sequential or parallel algorithms. However, we may proceed as follows (see Section~\ref{sec:sequential_Z_reduction_algos} for the sequential algorithm). We re-write the $X$ syndrome via
\begin{align}
    \sigma_x &= Z_x + A\cdot Z_q + C\cdot Z_z\\
    &= Z_x + A \cdot Z_q + (A\cdot B^T+D^T)\cdot Z_z\\
    &=Z_x + A\cdot (Z_q + B^T\cdot Z_z) + D^T\cdot Z_z\\
    &= Z_x + A\cdot Z_{Res} + D^T\cdot Z_z.
\end{align}
We see now that the exact residual error of $Z$ type we wish to reduce is sitting under the matrix $A$. Moreover, the syndrome has again taken the form of that in a classical error reduction problem. By taking $(I|A|D^T)$ to be the parity-check matrix for a classical error-reduction code, we may aim to directly reduce the error $Z_{Res}$, rather than developing approximations to two errors and calculating an approximation to the residual error, as we did for the $X$ errors. In turn, we wish to now have $(A|D^T)$ be the adjacency matrix for a one-sided lossless expander.

In total, our requirements for the matrices $A, B$ and $D$ are as follows:
\begin{enumerate}
    \item The matrix $(D|B)$ exhibits small-set expansion;
    \item The matrix $(A|D^T)$ exhibits small-set expansion;
    \item The (column) sparsity of the matrix $D$ must be much greater than the (column) sparsity of the matrix $A^T$. 
\end{enumerate}
Each of these requirements may be simultaneously satisfied by taking the three matrices $A, B$ and $D$ to be according to a two-way expanding structure, which we term a lossless $Z$-graph, described as follows, and depicted in Figure~\ref{fig:lossless_Z_graph}. The lossless $Z$-graph is a bipartite graph, with two sets of ``left'' vertices called $L_1$ and $L_2$, of sizes $n$ and $m$, respectively, and similarly two sets of ``right'' vertices called $R_1$ and $R_2$, of sizes $n$ and $m$, respectively. The graph obtained by restriction to the vertices $(L_1, R_2)$ is $(\Delta_1, \Delta_1')$-biregular, where $\Delta_1' = \frac{n}{m}\Delta_1$. Similarly, the graph obtained by restriction to $(R_1, L_2)$ is $(\Delta_1, \Delta_1')$-biregular. Finally, the graph obtained by restriction to the vertices $(L_2, R_2)$ is $\Delta_2$-regular. There are no edges between the vertices $L_1$ and $R_1$, and so the graph appears in the form of the letter $Z$, as shown. We let the matrix $A$ be the adjacency matrix of the graph obtained by restriction to the vertices $(L_1, R_2)$, the matrix $B$ be the adjacency matrix of the graph obtained by restriction to the vertices $(R_1, L_2)$, and finally the matrix $D$ be the adjacency matrix of the graph obtained by restriction to the vertices $(R_2, L_2)$. Equivalently, the matrix $D^T$ may be defined as the adjacency matrix of the graph obtained by restriction to the vertices $(L_2, R_2)$.
\begin{figure}[ht]
    \centering
    \includegraphics[width=0.5\linewidth]{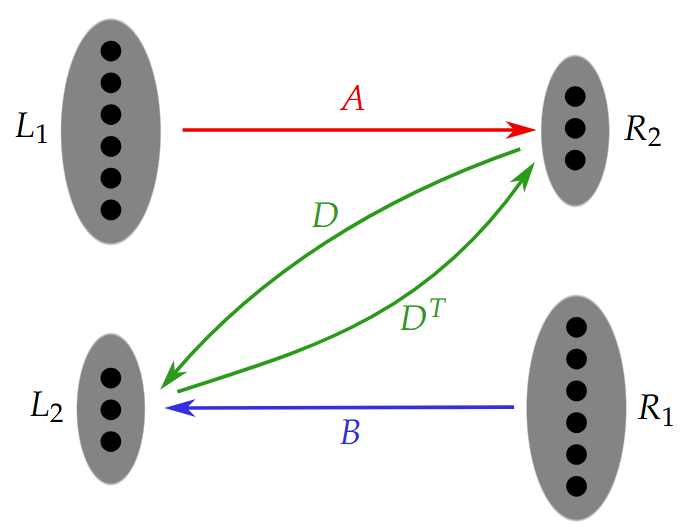}
    \caption{Definition of the matrices $A, B$ and $D$ which define our quantum error-reduction code, from the lossless $Z$-graph. Vertices in $L_1$ and $R_1$ all have degree $\Delta_1$. Vertices in $R_2$ and $L_2$ all have $\Delta_1' = \frac{n}{m}\Delta_1$ edges joining them to vertices in $L_1$ and $R_1$, respectively, and $\Delta_2$ edges joining them to each other.}
    \label{fig:lossless_Z_graph}
\end{figure}

When we call the graph a \textit{lossless} $Z$-graph, we mean that it has the following expansion properties. Suppose we take subsets $S_1 \subseteq L_1$ and $S_2 \subseteq L_2$ of some small constant-fractional size. Then, $S_1$ and $S_2$ must collectively have a large number of neighbours in $R_2$. Let us denote the neighbours of $S_1 \cup S_2$ in $R_2$ as $N_{R_2}(S_1 \cup L_2)$. Noting that $|N_{R_2}(S_1 \cup S_2)| \leq \Delta_1|S_1| + \Delta_2|S_2|$, the graph is a lossless $Z$-graph only if
\begin{equation}\label{eq:rough_lossless_statement}
    |N_{R_2}(S_1 \cup S_2)| \geq (1-\epsilon_1)\Delta_1|S_1| + (1-\epsilon_2)\Delta_2|S_2|
\end{equation}
for some small $\epsilon_1, \epsilon_2$. We also require that subsets $S_1 \subseteq R_1$ and $S_2 \subseteq R_2$ of some small constant-fractional size have a similarly large number of neighbours in $L_2$. At a high level, the lossless $Z$-graph is a two-way lossless expander of mixed degree, but where the two-way expansion is only required to go ``through'' the edges connecting $L_2$ and $R_2$.

By taking $\Delta_1, \Delta_2$ to be constants with $\Delta_2 \gg \Delta_1$, we will have obtained all of the requirements described above, and constructed a quantum error-reduction code. To conclude this section, we finally remark that our parallel error-reduction algorithms for $X$ errors and $Z$ errors, described in Sections~\ref{sec:parallel_erc_algos}, come with some additional considerations, which we discussion in detail there. For now, we comment that the amount of error reduction required in our case is so great that our parallel error-reduction algorithms require especially strong lossless $Z$-graphs, which we can obtain randomly, but we are unable to obtain explicitly. Indeed, the parallel error-reduction algorithms will require lossless $Z$-graphs for which $\epsilon_1 = \mathcal{O}(1/\Delta_1)$ and $\epsilon_2 = \mathcal{O}(1/\Delta_2)$ (see Equation~\eqref{eq:rough_lossless_statement}), which we are unable to obtain via the explicit construction. Moreover, our parallel error-reduction algorithms will require the quantity $\frac{\Delta_1^2}{\Delta_2}$ to be small, which can be obtained randomly (since the random construction allows any two constant integers $\Delta_1, \Delta_2$), but the explicit construction only allows the quantity $\frac{\Delta_1}{\Delta_2}$ to be made small. These two considerations explain why parallel error-reduction algorithms for the explicit quantum codes are not obtained in this work.

\subsubsection{Constructions of Lossless \texorpdfstring{$Z$}{}-Graphs}\label{sec:overview_lossless_construction}

We briefly remark on the construction of the lossless $Z$-graphs themselves, which is handled in Section~\ref{sec:lossless_Z_graphs}.

First, for the random construction, we sample a graph from the natural ensemble, and show that it satisfies the expansion constraints with high probability. That is, we consider fixed sets of vertices $L_1, L_2, R_1, R_2$ with the appropriate number of half-edges going towards the correct end-points, and join them to each other uniformly at random. Showing that the expansion properties hold uses mostly standard techniques, see for example~\cite{hoory2006expander}. One complication will be that showing the joint expansion of (for example) $S_1 \subseteq L_1$ and $S_2 \subseteq L_2$ is not possible directly when the sizes $|S_1|$ and $|S_2|$ are very imbalanced, that is, the quantity $\frac{|S_1|}{|S_2|}$ is very large or very small. However, in such cases, we show that the joint expansion may be reduced to the usual lossless expansion. For example, when $\frac{|S_1|}{|S_2|}$ is very large, the joint expansion may be demonstrated via the expansion of vertices $L_2$ going into $R_2$. Similarly, when $\frac{|S_1|}{|S_2|}$ is very small, the joint expansion may be demonstrated via the expansion of vertices $L_2$ going into $R_2$. See Section~\ref{sec:random_lossless_Z_graphs} for details.

Our explicit construction proceeds via a white-box modification of the recent construction of two-sided lossless expanders by~\cite{HLMRZ25}. At first glance, it might seem that one could use their construction as a black box, by letting the graph between $(L_2, R_2)$ be a two-sided lossless expander, and taking $(L_1, R_2)$ and $(R_1, L_2)$ to both be one-sided lossless expanders. However, even if $S_1 \subset L_1$ and $S_2 \subset L_2$ are individually expanding, the lossless expansion guarantees do not prohibit their neighborhoods from matching exactly and canceling each other out. Thus, we need to open up their construction and modify the individual components.

At a high level, the two-sided expanders of~\cite{HLMRZ25} are constructed via a \emph{local-to-global} framework. Specifically, they overlay a constant-sized ``gadget'' lossless expander many times according to some blueprint, and argue that the properties of the blueprint imply that the global graph will also experience lossless expansion. For our case, we will choose our constant-sized gadget graph to itself be a lossless Z-graph. Then, by overlaying it many times according to the blueprint in~\cite{HLMRZ25}, we are able to argue that the global graph will be a lossless Z-graph as well. See Section~\ref{sec:explicit_lossless_Z_graphs} for details.

\subsection{Future Directions}

In this work, we have constructed the first quantum error-correcting codes with encoding, unencoding and decoding algorithms of low depth and complexity. In particular, we have presented asymptotically good codes with sequential encoding, unencoding, and decoding algorithms of linear complexity, as well as corresponding logarithmic-depth parallel algorithms, with complexities $\mathcal{O}(n)$, $\mathcal{O}(n)$, and $\mathcal{O}(n \log n)$, respectively. For our explicit construction, the same is obtained, without the parallel classical decoding algorithm. The first natural direction for future work is therefore as follows.
\\

\noindent\textbf{Open Problem 1:} Can we construct explicit and asymptotically good quantum error-correcting codes with low-depth parallel classical decoding algorithms?
\\

As discussed above, our explicit lossless $Z$-graphs fail to meet the conditions required of our parallel classical decoding algorithm on two fronts, namely that we require \textit{very} small $\epsilon_1, \epsilon_2$, namely $\mathcal{O}(1/\Delta_1)$, $\mathcal{O}(1/\Delta_2)$, respectively, as well as the fact that we require $\Delta_2$ to be quadratically larger than $\Delta_1$, rather than simply a constant multiple larger. We note that the former of these seems to be a particular obstruction to the current approach, since the explicit two-sided lossless expander~\cite{HLMRZ25}, are not yet known with $\epsilon = \mathcal{O}(1/\Delta)$.
\\

\noindent\textbf{Open Problem 2:} Can we understand the exact rate/distance tradeoff of explicit or randomised asymptotically good quantum codes with low-depth and low-complexity encoders and decoder?
\\

In this work, we have only showed that our quantum codes are asymptotically good, and we also show that our construction can achieve rate close to $1$. However, the exact rate/distance tradeoff is left to future work. Interestingly, on the classical side, following Spielman's work, there has been a fruitful line of work studying the corresponding problem~\cite{brown2005repeat,guruswami2005linear,druk2014linear,narayanan2019subquadratic,brehm2025linear}. Notably, the recent work~\cite{brehm2025linear} constructs linear-time classical codes with parameters achieving the Gilbert-Varshamov bound, whose duals have the same properties. Given the importance of properties of the dual code when forming quantum CSS codes, it could be interesting to explore the connection of the codes in that work to quantum codes with fast encoders.
\\

\noindent\textbf{Open Problem 3:} Can we construct (asymptotically good) quantum codes with low-depth and complexity \textit{fault-tolerant} encoders?
\\

As discussed, this paper considers only the code capacity setting which is relevant in communication problems and information theory. However, it is also interesting to ask whether codes with low-depth and complexity encoders exist in a fault-tolerant setting. Relatedly, it is not known whether there exist LDPC codes with low-depth or complexity encoders, even on the classical side. Must there be necessarily be a tradeoff between the parameters of a code, the depth or complexity or its encoder, and its maximum check weight?

\section{Outline of the Paper}

In Section~\ref{sec:prelims}, we describe the necessary preliminary material for this paper in order to make it accessible to a broad theoretical computer science audience; this may be skipped by those already familiar with the essentials of quantum error correction. In Section~\ref{sec:erc_to_ecc}, we describe how quantum error-correcting codes may be constructed from quantum error-reduction codes. In Section~\ref{sec:quantum_error_reduction_codes}, we describe how quantum error-reduction codes may be constructed from ``lossless $Z$-graphs''. In Section~\ref{sec:lossless_Z_graphs}, we describe the construction of lossless $Z$-graphs, both randomly and explicitly. Finally, in Section~\ref{sec:all_together}, we put all the ideas together to conclude the proofs of our main results, Theorems~\ref{thm:main_randomised} and~\ref{thm:main_explicit}.

\section{Preliminaries}\label{sec:prelims}

In order to make the paper accessible to a general theoretical computer science audience, we present in this section the essential information on quantum error-correcting codes, in particular CSS codes. Those already familiar with CSS codes can safely skip this section.

The state of a single qubit is specified by a vector in $\mathbb{C}^2$ of unit norm. The standard basis, also called the computational basis, is denoted by two vectors $\ket{0}$ and $\ket{1}$. Other important states are $\ket{+} = (\ket{0}+\ket{1})/\sqrt{2}$ and $\ket{-} = (\ket{0}-\ket{1})/\sqrt{2}$. The state of $n$ qubits is specified by a vector in $(\mathbb{C}^2)^{\otimes n}$ of unit norm. Tensor product symbols are often dropped for readability, for example, a computational basis state on $n$ qubits is commonly denoted $\ket{x}$ for $x \in \{0,1\}^n$.

Quantum stabiliser codes are by far the most commonly studied class of quantum codes, of which quantum CSS codes are a commonly-studied sub-class. To introduce quantum stabiliser codes, we begin with the single-qubit Pauli operators,
\begin{equation}
    I = \begin{pmatrix}
        1 & 0\\
        0 & 1
    \end{pmatrix}\hspace*{1cm} X = \begin{pmatrix}
        0 & 1\\
        1 & 0
    \end{pmatrix}\hspace*{1cm} Y = \begin{pmatrix}
        0 & -i\\
        i & 0
    \end{pmatrix}\hspace*{1cm} Z = \begin{pmatrix}
        1 & 0\\
        0 & -1
    \end{pmatrix}.
\end{equation}
Up to an overall phase, the Pauli operators for $n$ qubits are constructed by taking tensor products of the above operators. Given a set $\mathcal{S}$ of $n$-qubit Pauli operators which commute, the stabiliser code corresponding to $\mathcal{S}$ is the set of $n$-qubit states with eigenvalue $+1$ under all operators in $\mathcal{S}$ (the set of states \textit{stabilised} by $\mathcal{S})$, that is,
\begin{equation}
    \mathcal{Q} = \left\{\ket{\psi} \in (\mathbb{C}^2)^{\otimes n}: P\ket{\psi} = \ket{\psi} \text{ for all } P \in \mathcal{S}\right\}.
\end{equation}
Different $\mathcal{S}$ may result in the same code $\mathcal{Q}$. A quantum CSS code $\mathcal{Q}$ is a stabiliser code which has an $\mathcal{S}$ for which every Pauli in $\mathcal{S}$ is either a tensor product of only $I$ and $X$ (an $X$-type stabiliser), or a tensor product of only $I$ and $Z$ (a $Z$-type stabiliser). Thus, a quantum CSS code may be specified by two binary matrices, $H_X$ and $H_Z$, each with $n$ columns. $H_X$ specifies a list of $X$-Pauli operators in the code's stabiliser, known as the $X$ stabilisers or $X$ checks, and $H_Z$ specifies a list of $Z$-Pauli operators in the code's stabiliser, known as $Z$ stabilisers or $Z$ checks. Each row of $H_X$ specifies an $X$ check, where a $0/1$ represents an $I/X$ in the corresponding stabiliser, and similarly for $H_Z$. In order to form a valid stabiliser code, all of the code's stabilisers must commute. Trivially, all of the $X$ checks commute amongst themselves, and all of the $Z$ checks commute amongst themselves. To form a quantum CSS code from the matrices $H_X$ and $H_Z$, it is therefore necessary and sufficient that each $X$ check commutes with each $Z$ check. Since $XZ = -ZX$, this is true if and only if every row of $H_X$ has an even overlap with every row of $H_Z$, which is to say that $H_X\cdot H_Z^T = 0$, with arithmetic performed in binary. 

Given a valid CSS code, the number of logical qubits that it encodes is $n - \mathsf{rank} \;H_X - \mathsf{rank} \;H_Z$. While quantum errors are in principle continuous, the principle of error discretisation means that we need only consider a discrete set of errors. In particular, we may consider all errors to be Pauli operators. The distance of the code is the minimum number of qubits on which an error must occur in order that the error is undetectable and can change the logical state of the code. For a CSS code, there is a notion of $X$-distance and $Z$-distance, respectively:
\begin{align}
    d_X &= \min\{|v|: v \in \ker H_X\setminus \im H_Z^T\}\\
    d_Z &= \min\{|v|: v \in \ker H_Z\setminus \im H_X^T\}.
\end{align}
This is the minimum weight of an $X$-type Pauli and a $Z$-type Pauli, respectively, that can undetectably change the logical information. The distance of the code is then
\begin{equation}
    d = \min\{d_X, d_Z\}.
\end{equation}

The remaining quantum operations that will be required for our exploration will be the $\CNOT$ gate, and the notion of single-qubit Pauli measurements. The $\CNOT$ gate is a two-qubit gate acting as follows,
\begin{equation}
    \CNOT\ket{00} = \ket{00}, \hspace*{1cm} \CNOT\ket{01} = \ket{01}, \hspace*{1cm} \CNOT\ket{10} = \ket{11}, \hspace*{1cm} \CNOT\ket{11} = \ket{10},
\end{equation}
where the first qubit is referred to as the control, and the second as the target. If we wish to emphasise the control and target, then we may write $\mathsf{CNOT}_{ab}$, where qubit $a$ is the control and qubit $b$ is the target.

For single-qubit Pauli measurements, the only necessary information for our exploration will be the fact that a measurement of a qubit in the $X/Z$ basis returns the outcome $y \in \{0,1\}$ with probability $1$ if the qubit is an eigenstate of the $X/Z$ operator with eigenvalue $(-1)^y$.

It will be useful for us to draw circuit diagrams, which depict quantum computations, and are read from left to right. Here, a single wire denotes a single qubit, whereas a slashed wire denotes multiple qubits (where the number may be labelled). Preparation of one or multiple qubits in a particular state may be indicated by writing that state before the wire. For example, the following diagrams depict, respectively, the preparation of a single qubit in the state $\ket{0}$, the preparation of a single qubit in the state $\ket{+}$, and the preparation of $k$ qubits in the state $\ket{\psi}$.
\[
\begin{quantikz}
\lstick{$\ket{0}$} & \qw
\end{quantikz}
\qquad
\begin{quantikz}
\lstick{$\ket{+}$} & \qw
\end{quantikz}
\qquad
\begin{quantikz}
\lstick{$\ket{\psi}$} & \qw \qwbundle{k}
\end{quantikz}
\]
The $\CNOT$ operation (controlled on the top wire and targeting the lower wire), and $X/Z$ measurements are depicted as follows.
\[
\begin{quantikz}
\qw & \ctrl{1} & \qw \\
\qw & \targ{}  & \qw
\end{quantikz}
\qquad
\begin{quantikz}
\qw  & \meter{X} \\
\end{quantikz}
\qquad
\begin{quantikz}
\qw & \meter{Z} \\
\end{quantikz}
\]
The following commutation relations will be useful for us, and may be readily verified:
\begin{alignat}{3}\label{eq:CNOT_commutation_rules}
    \mathsf{CNOT}_{12}X_1 &= X_1X_2\mathsf{CNOT}_{12}, \hspace*{1cm}&&\mathsf{CNOT}_{12}X_2&&=X_2\mathsf{CNOT}_{12}\\\mathsf{CNOT}_{12}Z_1 &= Z_1\mathsf{CNOT}_{12}, \hspace*{1cm}&&\mathsf{CNOT}_{12}Z_2 &&= Z_1Z_2\mathsf{CNOT}_{12},
\end{alignat}
where, as always, tensor product symbols are dropped and subscripts are used to denote the qubit on which the given gate acts. Graphically, we have
\[
\begin{quantikz}
\qw&\gate[wires=1]{X}&\ctrl{1} & \qw & \qw\\
\qw & \qw & \targ{}  & \qw & \qw 
\end{quantikz} = \begin{quantikz}
\qw&\qw&\ctrl{1} & \gate[wires=1]{X} & \qw\\
\qw &\qw & \targ{}  & \gate[wires=1]{X} & \qw 
\end{quantikz}, \hspace*{1cm} \begin{quantikz}
\qw&\qw & \ctrl{1} & \qw & \qw\\
\qw & \gate[wires=1]{X} & \targ{}  & \qw & \qw 
\end{quantikz} = \begin{quantikz}
\qw&\qw&\ctrl{1} & \qw & \qw\\
\qw &\qw & \targ{}  & \gate[wires=1]{X} & \qw 
\end{quantikz}
\]
and
\[
\begin{quantikz}
\qw&\gate[wires=1]{Z}&\ctrl{1} & \qw & \qw\\
\qw & \qw & \targ{}  & \qw & \qw 
\end{quantikz} = \begin{quantikz}
\qw&\qw&\ctrl{1} & \gate[wires=1]{Z} & \qw\\
\qw &\qw & \targ{}  & \qw & \qw 
\end{quantikz}, \hspace*{1cm} \begin{quantikz}
\qw&\qw & \ctrl{1} & \qw & \qw\\
\qw & \gate[wires=1]{Z} & \targ{}  & \qw & \qw 
\end{quantikz} = \begin{quantikz}
\qw&\qw&\ctrl{1} & \gate[wires=1]{Z} & \qw\\
\qw &\qw & \targ{}  & \gate[wires=1]{Z} & \qw 
\end{quantikz}.
\]

\section{From Quantum Error Reduction to Quantum Error Correction}\label{sec:erc_to_ecc}

In this work, we will construct linear-time encodable and decodable quantum error-correcting codes, where we take inspiration from the original construction of linear-time encodable and decodable classical error-correcting codes due to Spielman~\cite{spielman1995linear}. It is interesting that certain elements of Spielman's construction will prove very amenable to quantisation, whereas others will prove much more difficult.

At a high level, Spielman's construction proceeds by concatenating a series of ``classical error-reduction codes'' in a particular structure, so that the whole thing forms an error-correcting code. The low complexity of encoding and decoding at each level of the concatenation translates to a low complexity encoding and decoding of the whole code. We will go on to find that the global picture, that of concatenating error-reduction codes to form error-correcting codes, translates mostly smoothly from the classical to the quantum case. There will be a minor alteration in the definition of error reduction. In particular, in the classical case, one talks about message and check bits, whereas in the quantum case, we will talk about message qubits, $X$-check qubits, and $Z$-check qubits, as we shortly introduce. Additionally, proving the translation from quantum error reduction for the parallel algorithms will require particular care, although proving the statement for the sequential algorithms will be a direct translation of the corresponding proof in Spielman~\cite{spielman1995linear}. The majority of the work will be in the actual construction of the quantum error-reduction codes, which is handled in Sections~\ref{sec:quantum_error_reduction_codes} and~\ref{sec:lossless_Z_graphs}. 

To prepare us for defining quantum error-reduction codes, we will now describe the encoding, unencoding and decoding of quantum CSS codes.

\subsection{Encoding, Unencoding and Decoding of Quantum CSS Codes}\label{sec:enc_unenc_dec_css}

We will now discuss the encoding, unencoding and decoding of quantum CSS codes in a way that is suitable for this context. As we work in the channel capacity/communication setting, the process of encoding quantum information, noise, and unencoding, using a  general quantum CSS code, may be described as follows (see also Figure~\ref{fig:encoding_noise_unencoding}).
\begin{figure}[ht]
    \centering
    
\scalebox{1.1}{
\begin{quantikz}[row sep=0.6cm, column sep=0.5cm]
\lstick{X-Check Qubits \hspace*{0.4cm}$\ket{+}^{\otimes m_X}$}
  & \qwbundle{m_X}
  & \gate[wires=3]{\text{Encoding}}
  & \qw
  & \gate[wires=3]{\text{Noise}}
  & \qw
  & \gate[wires=3]{\text{Unencoding}}
  & \qw
  & \meter{X} \\
\lstick{Message Qubits\hspace*{1.1cm} $\ket{\psi}$}
  & \qwbundle{n}
  & \qw
  & \qw
  & 
  & \qw
  & \qw
  & \qw
  & \qw \\
\lstick{Z-Check Qubits \hspace*{0.525cm}$\ket{0}^{\otimes m_Z}$}
  & \qwbundle{m_Z}
  & \qw
  & \qw
  & 
  & \qw
  & \qw
  & \qw
  & \meter{Z}
\end{quantikz}
}
\caption{}
    \label{fig:encoding_noise_unencoding}
\end{figure}

In words, we begin with $N$ qubits, where $N$ is the code's block length. The first $m_X = \mathsf{rank}\;H_X$ of these qubits each begin in the state $\ket{+}$, and are called $X$-check qubits, and the last $m_Z = \mathsf{rank}\;H_Z$ of the qubits each begin in the state $\ket{0}$, and are called $Z$-check qubits. We refer to the $X$-check qubits and $Z$-check qubits collectively as check qubits. The remaining $n$ qubits are referred to as the message qubits, and are initialised in whichever state we wish to encode and send through the noise channel; here we depict this state as some general $n$-qubit state $\ket{\psi}$. The encoding operation is a quantum operation responsible for preparing the logical state $\overline{\ket{\psi}}$. As we work in the channel capacity setting, this operation occurs without noise or faults. The important point is that, after the encoding operation, but before the noise occurs, the $n$ qubits are in a quantum state that is a simultaneous $+1$ eigenstate of all the code's stabilisers. In particular, we may always take the encoding operation to consist entirely of $\mathsf{CNOT}$ gates, and then $m_X$ is the number of the code's $X$ checks, and $m_Z$ is the number of the code's $Z$ checks. One notes that, before the encoding operation, the $n$ qubits are in a $+1$ eigenstate of every $X$ operator acting on one of the first $m_X$ qubits, and a $+1$ eigenstate of every $Z$ operator acting on one of the last $m_Z$ qubits. After the encoding operation, the $N$ qubits are in a $+1$ eigenstate of each of these operations commuted through the encoding operation. Since we take the encoding operation to consist entirely of $\CNOT$ operations, the resulting stabilisers, both $X$ checks and $Z$ checks, may be calculated via Equation~\eqref{eq:CNOT_commutation_rules}. In summary, the encoding operation causes the $X$ checks and $Z$ checks to spread out across the $N$ qubits.

The noise operation may then be taken to be some tensor product of Paulis. The Pauli error occurring on the $X$-check qubits may be written $X_xZ_x$, where $X_x$ and $Z_x$ are respectively Paulis of pure $X$ and $Z$ type. Note that one may think of both of $X_x$ and $Z_x$ as bit strings of length $N$ if desired, denoting the positions in which they are non-trivial. Similarly, we denote the Pauli errors occurring on the message qubits as $X_qZ_q$ and $X_zZ_z$, respectively. The situation appears as follows.
\begin{figure}[ht]
    \centering
\scalebox{1.1}{
\begin{quantikz}[row sep=0.4cm, column sep=0.5cm]
\lstick{X-Check Qubits \hspace*{0.4cm}$\ket{+}^{\otimes m_X}$}
  & \qwbundle{m_X}
  & \gate[wires=3]{\text{Encoding}}
  & \qw
  & \gate[wires=1]{X_xZ_x}
  & \qw
  & \gate[wires=3]{\text{Unencoding}}
  & \qw
  & \meter{X} \\
\lstick{Message Qubits\hspace*{1.1cm} $\ket{\psi}$}
  & \qwbundle{n}
  & \qw
  & \qw
  & \gate[wires=1]{X_qZ_q}
  & \qw
  & \qw
  & \qw
  & \qw \\
\lstick{Z-Check Qubits \hspace*{0.525cm}$\ket{0}^{\otimes m_Z}$}
  & \qwbundle{m_Z}
  & \qw
  & \qw
  & \gate[wires=1]{X_zZ_z}
  & \qw
  & \qw
  & \qw
  & \meter{Z}
\end{quantikz}
}
\caption{}
    \label{fig:encoding_noise_unencoding_labelled_Paulis}
\end{figure}

Again, since we work in the channel capacity setting, the unencoding operation is a quantum operation that is noiseless and without faults. It may be taken to be the inverse of the encoding operation. Because the $\mathsf{CNOT}$ gate is self-inverse, it may just be taken to be the same sequence of $\mathsf{CNOT}$s as the encoding operation with the order reversed. In the last step, every $X$-check qubit is measured in the $X$ basis, and every $Z$-check qubit is measured in the $Z$ basis (note our slight abuse of notation in using single-qubit measurement symbols in Figure~\ref{fig:encoding_noise_unencoding_labelled_Paulis}, where we are in fact denoting the measurement of $m_X$ qubits, each in the $X$ basis, and $m_Z$ qubits, each in the $Z$ basis).

At this point, at the end of the quantum operations, there is some Pauli error remaining on the message qubits that we wish to correct. Our attempt to correct this error is informed by our measurement outcomes; the measurement outcomes are turned into a candidate correction via a classical decoding algorithm.

When $Z$ errors occur in the noise step, immediately after that step, they may cause the $N$ qubits to no longer be in a $+1$ eigenstate of certain $X$ checks. In turn, this will cause certain $X$ measurements at the end of the circuit to return the value $1$ rather than $0$. Likewise, when $X$ errors occur in the noise step, immediately after that step, they may cause the $N$ qubits to no longer be in a $+1$ eigenstate of certain $Z$ checks any longer, which will in turn cause some of the $Z$ measurements at the end of the circuit to return the value $1$ rather than $0$ (which they would if there were no noise). To make this precise, we give more notation to our quantum code. That is, we write our $X$- and $Z$-check matrices as
\begin{align}
    H_X &= \begin{pmatrix}
        H_{xx} & H_{xq} & H_{xz}
    \end{pmatrix} \in \mathbb{F}_2^{m_X \times n}\\
    H_Z &= \begin{pmatrix}
        H_{zx} & H_{zq} & H_{zz}
    \end{pmatrix} \in \mathbb{F}_2^{m_Z \times n}.
\end{align}
Given $a \in \{x,z\}$, $H_{ax}$, $H_{aq}$ and $H_{az}$ denote the support of the $a$-type checks on the $X$-check qubits, message qubits and $Z$-check qubits, respectively. Then, one finds that the $X$-syndrome, that is, the collection of measurement outcomes of the $X$-basis measurements, written as a vector in $\mathbb{F}_2^{m_X}$, is
\begin{equation}
    \sigma_x = H_{xx}Z_x + H_{xq}Z_q + H_{xz}Z_z,
\end{equation}
where arithmetic takes place in binary, and similarly the $Z$-syndrome is
\begin{equation}
    \sigma_z = H_{zx}X_x + H_{zq}X_q + H_{zz}X_z
\end{equation}
as a vector in $\mathbb{F}_2^{m_Z}$. Notice that we are conflating the meaning of the symbols $X_x, Z_x$, and so on, between Paulis, and binary vectors denoting their support.

Finally, the decoding process is a classical process that takes as input the syndromes $\sigma_x$ and $\sigma_z$ and attempts to compute the best correction to make to fix the residual Pauli error on the message qubits.

Before moving on, there are two remarks we wish to make. The first is that we are only interested in fixing the residual error on the message qubits. Typically in quantum error correction, the information remains permanently inside the error-correcting code, and measurements make use of ancillary qubits. In that setting, one wishes to correct all errors on all $N$ qubits. Here, we are encoding into and unencoding out of the quantum code. $X$-check and $Z$-check qubits are measured and discarded, and we are only interested in correcting the residual error on the message qubits. Our second remark is to emphasise one interesting difference in this quantum setup with respect to the classical setup considered by Spielman~\cite{spielman1995linear}. In the classical setup, the unencoding and decoding operations, the operations that take your information out of the codespace, and attempt to find a correction, respectively, are rolled into one operation that is simply called ``decoding''. In the quantum setting, it is interesting that these two operations must be separated, so that in particular the former is a quantum operation (a circuit of $\mathsf{CNOT}$s), whereas the latter is a classical algorithm.

\subsection{Definition of Quantum Error-Reduction Codes}

Following the corresponding classical definition of Spielman~\cite{spielman1995linear}, our definition of quantum error-reduction codes is as follows.
\begin{definition}[Quantum Error-Reduction Code]\label{def:qerc}

Consider a quantum CSS of block length $N$. Suppose, as in Figure~\ref{fig:encoding_noise_unencoding_labelled_Paulis}, we perform perfect unitary encoding and unencoding, sandwiching Pauli errors, and at the end we measure $X$ and $Z$ stabilisers. The code is called a quantum error-reduction code of rate $r$, error reduction $\epsilon$, and reducible distance $\delta$, if it has $rN$ message qubits, and a classical ``error reduction'' algorithm that does the following. If there are Pauli errors on $v$ message qubits and $t$ check qubits, where $v \leq \delta N$, and $t \leq \delta N$, then the algorithm, taking the stabiliser measurement outcomes as its input, outputs a Pauli correction for the message qubits, after which there will be a Pauli error of weight at most $\epsilon t$ on the message qubits.
    
\end{definition}
\begin{remark}
    From the point of view of quantum computing, it is interesting to note that the notion of a (quantum) error-reduction code is reminiscent of the well-studied notion of single-shot quantum error correction~\cite{bombin2015single, campbell2019theory, gu2024single}. In quantum computing, measurements are known to be faulty, and so one cannot generally trust a single round of stabiliser measurements. The canonical approach to overcome this challenge is to repeat the measurements a number of times that scales with the distance to gain confidence in one's measurement outcomes, although doing so introduces an unfortunate time overhead. It has been shown that for some codes, a single round (or a constant-number of rounds) of check measurement is sufficient, not to perfectly correct the error, but to control the error and prevent it from growing, and a single more reliable measurement is deferred to the end of the computation. We say that the code admits single-shot quantum error correction. In this analogy, errors on our check (qu)bits correspond to faults in the quantum measurements, and error reduction corresponds to controlling the error on the message (qu)bits. While we are not claiming a formal connection, the analogy is rather striking.
\end{remark}

\subsection{Concatenation Structure}\label{sec:concat_structure}

We now show how quantum error-reduction codes may be concatenated to form quantum error-correcting codes. As mentioned, a direct quantisation of Spielman's structure~\cite{spielman1995linear} is sufficient. The analysis of the sequential algorithms also passes across nicely, although the analysis of the parallel algorithms is slightly more involved.

We define our family of quantum error-correcting codes $\left(\mathcal{Q}_k\right)_{k=0}^\infty$ as follows. Let $r^{(1)}$ and $r^{(2)}$ be numbers in $(0,1)$ such that
\begin{equation}\label{eq:qerc_to_qecc_rate}
    R \coloneq 1+\frac{1}{r^{(2)}}-\frac{1}{r^{(1)}r^{(2)}} > 0.
\end{equation}
Let $\mathcal{Q}_0$ be a quantum error-correcting code with $n_0$ message qubits and rate $R$, so that Pauli errors occurring on at most a $\delta_0$ fraction of the qubits can be corrected. $\mathcal{Q}_0$ is imagined to be some constant-sized code which may be constructed, for example, via a randomised procedure~\cite{calderbank1996good}. For $k \geq 1$, let $\mathcal{R}_k^{(1)}$ and $\mathcal{R}_k^{(2)}$ be families of quantum error-reduction codes such that $\mathcal{R}_k^{(j)}$ has rate $r^{(j)}$, error reduction $\epsilon^{(j)}$, and reducible distance $\delta^{(j)}$ for $j = 1,2$. In addition, let $\mathcal{R}_k^{(1)}$ have
\begin{equation}
    n_k \coloneq n_0\left(\frac{r^{(1)}}{1-r^{(1)}}\right)^k
\end{equation}
message qubits, and let $\mathcal{R}_k^{(2)}$ have $n_k\left(\frac{1-r^{(1)}}{r^{(1)}}\right)\cdot\frac{1}{R} = \frac{n_{k-1}}{R}$ message qubits.\footnote{Note that the condition $R>0$ enforces that $r^{(1)} > 1/2$, and thus that $n_k \to \infty$ as $k \to \infty$.}

For $k \geq 1$, $\mathcal{Q}_k$ will be a quantum error-correcting code with $n_k$ message qubits. Describing its encoding circuit is enough to describe the code itself, which we do as follows. The $n_k$ message qubits of $\mathcal{Q}_k$ are collectively called $M_k$, and are first encoded into the code $\mathcal{R}_k^{(1)}$ to produce $n_k\left(\frac{1-r^{(1)}}{r^{(1)}}\right) = n_{k-1}$ check qubits; we call this set of check qubits $A_k$. Then, the qubits in $A_k$ are encoded into the quantum error-correcting code $\mathcal{Q}_{k-1}$, to produce a further set of check qubits called $B_k$ of size $n_{k-1}\left(\frac{1-R}{R}\right)$. Finally, the qubits in $A_k \cup B_k$ are together encoded into the code $\mathcal{R}_k^{(2)}$ to produce $\frac{n_{k-1}}{R}\left(\frac{1-r^{(2)}}{r^{(2)}}\right)$ further check qubits: a set which we call $C_k$.

The following may be readily verified from the construction.

\begin{proposition}\label{prop:ecc_code_rate}
    For all $k \geq 0$, $\mathcal{Q}_k$ is a code with block length $\frac{n_k}{R}$ and rate $R$.
\end{proposition}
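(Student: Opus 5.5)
We argue by induction on $k$. The claim is a counting statement about qubits, so the only inputs are the sizes of $M_k$, $A_k$, $B_k$ and $C_k$ fixed in the construction. We also use that the rate of a code with $n$ message qubits and block length $N$ is $n/N$, as in Definition~\ref{def:qerc}. The base case $k=0$ holds by the choice of $\mathcal{Q}_0$: it has $n_0$ message qubits and rate $R$, so its block length is $n_0/R$.

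For the inductive step, fix $k\ge 1$ and assume $\mathcal{Q}_{k-1}$ has $n_{k-1}$ message qubits and block length $n_{k-1}/R$. The physical qubits of $\mathcal{Q}_k$ are the disjoint union $M_k\cup A_k\cup B_k\cup C_k$. The first three parts have sizes $n_k$, $n_{k-1}$ and $n_{k-1}(1-R)/R$. Their sizes are consistent with the codes used at each stage:
\begin{itemize}
\item Since $\mathcal{R}^{(1)}_k$ has rate $r^{(1)}$ and $n_k$ message qubits, it has $n_k(1-r^{(1)})/r^{(1)}=n_{k-1}$ check qubits. This matches the number of message qubits of $\mathcal{Q}_{k-1}$, which is what allows $A_k$ to be encoded into $\mathcal{Q}_{k-1}$.
\item By the inductive hypothesis, $|A_k\cup B_k| = n_{k-1}/R$. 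This equals the number of message qubits of $\mathcal{R}^{(2)}_k$, so $C_k$ has size $\frac{n_{k-1}}{R}\cdot\frac{1-r^{(2)}}{r^{(2)}}$.
\end{itemize}

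Summing the four parts gives the block length
\[
N_k = n_k + \frac{n_{k-1}}{R} + \frac{n_{k-1}}{R}\cdot\frac{1-r^{(2)}}{r^{(2)}} = n_k + \frac{n_{k-1}}{R\,r^{(2)}}.
\]
Substituting $n_{k-1} = n_k(1-r^{(1)})/r^{(1)}$ gives
\[
N_k = n_k\left(1 + \frac{1-r^{(1)}}{R\,r^{(1)}r^{(2)}}\right).
\]
It remains to show this equals $n_k/R$, which is equivalent to
\[
R + \frac{1-r^{(1)}}{r^{(1)}r^{(2)}} = 1 .
\]
This follows from the definition of $R$ in \eqref{eq:qerc_to_qecc_rate}, because $\frac{1-r^{(1)}}{r^{(1)}r^{(2)}} = \frac{1}{r^{(1)}r^{(2)}}-\frac{1}{r^{(2)}}$.

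Since $M_k$ is exactly the set of message qubits of $\mathcal{Q}_k$, the rate is $n_k/N_k = R$, which completes the induction. There is no real obstacle; the only care needed is to use the inductive hypothesis to fix the size of $A_k\cup B_k$ before computing $|C_k|$.
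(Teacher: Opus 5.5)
Your proof is correct and is the same counting-by-induction the paper intends when it says the proposition ``may be readily verified from the construction.'' The block length is $|M_k|+|A_k|+|B_k|+|C_k| = n_k + \frac{n_{k-1}}{R\,r^{(2)}}$, which equals $n_k/R$ by the definition of $R$ in \eqref{eq:qerc_to_qecc_rate}, and you correctly note that the inductive hypothesis on $\mathcal{Q}_{k-1}$ is what justifies $|B_k| = n_{k-1}(1-R)/R$, a step the paper uses implicitly.
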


\subsubsection{Sequential Algorithms}

Forming sequential error correction algorithms for $\mathcal{Q}_k$ from the sequential error-reduction algorithms of $\mathcal{R}_k^{(1)}$ and $\mathcal{R}_k^{(2)}$ is a straightforward quantisation of the corresponding argument in Spielman~\cite{spielman1995linear}, as long as one is careful about the ordering of classical and quantum operations, as we now show.

\begin{lemma}\label{lem:seq_erc_to_ecc}
    As long as
    \begin{equation}
        \epsilon^{(2)} \leq \frac{1-r^{(1)}}{r^{(1)}},
    \end{equation}
    $\mathcal{Q}_k$ can correct Pauli errors on a $\Delta$-fraction of the qubits, where
    \begin{equation}
        \Delta = \min\left(\delta^{(1)}\frac{R}{r^{(1)}}, \;\delta^{(2)}(1-R), \;\delta_0\right).
    \end{equation}
    Suppose that the codes $\mathcal{R}_k^{(1)}$ and $\mathcal{R}_k^{(2)}$ are each families with linear-time quantum encoding circuits, linear-time quantum unencoding circuits, and linear-time classical error-reduction algorithms. Then, $\mathcal{Q}_k$ can be encoded and correctly decoded from a $\Delta$-fraction of Pauli errors in linear time.
\end{lemma}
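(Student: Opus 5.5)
We argue by induction on $k$, following Spielman's argument. The only new ingredient is that quantum unencoding and classical error reduction must be interleaved in the right order.

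\emph{Decoding procedure.} The procedure for $\mathcal{Q}_k$ is as follows.
\begin{enumerate}
\item Apply the unencoding circuit of $\mathcal{R}_k^{(2)}$ to $A_k\cup B_k\cup C_k$ and measure the check qubits $C_k$. This yields the syndrome and leaves $A_k\cup B_k$ as message qubits of $\mathcal{R}_k^{(2)}$.
\item Run the classical error-reduction algorithm of $\mathcal{R}_k^{(2)}$ and apply its Pauli correction to $A_k\cup B_k$.
\item Recursively decode $\mathcal{Q}_{k-1}$ on $A_k\cup B_k$. This unencodes $\mathcal{Q}_{k-1}$, measures $B_k$ and corrects, leaving $A_k$ error-free and unencoded.
\item Now $M_k$ and $A_k$ form a noisy codeword of $\mathcal{R}_k^{(1)}$ that was never unencoded: $A_k$ currently holds the error-free $\mathcal{Q}_{k-1}$-decoded state. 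Apply the unencoding circuit of $\mathcal{R}_k^{(1)}$ to $M_k\cup A_k$, measure $A_k$, and run the error-reduction algorithm of $\mathcal{R}_k^{(1)}$.
\end{enumerate}

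The point to verify in step 4 is the ordering. The $\mathcal{R}_k^{(1)}$ unencoding must be performed only after $A_k$ has been corrected. The Pauli error on $A_k$ at that moment is then exactly zero, so the situation is that of Definition~\ref{def:qerc} with $t=0$ check errors, and the residual error on $M_k$ has weight at most $\epsilon^{(1)}\cdot 0=0$. This holds provided the number of message errors on $M_k$ is at most $\delta^{(1)}\cdot(\text{block length of }\mathcal{R}_k^{(1)})$. That block length is $n_k/r^{(1)}$, and the number of errors on $M_k$ is at most $\Delta n_k/R$. So it suffices that $\Delta\le\delta^{(1)}R/r^{(1)}$. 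I would also note that a correction applied to $A_k$ commutes appropriately through the subsequent $\CNOT$ unencoding, so the errors are well defined as Paulis at each stage.

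\emph{Error accounting.} The block length of $\mathcal{R}_k^{(2)}$ is $n_{k-1}/(Rr^{(2)})$, which equals $(n_k/R)(1-R)$ after using $1-R=\frac{1-r^{(1)}}{r^{(1)}r^{(2)}}$ and $n_{k-1}=n_k\frac{1-r^{(1)}}{r^{(1)}}$. The total number of errors in $A_k\cup B_k\cup C_k$ is at most $\Delta n_k/R$, so both the message errors $v$ and the check errors $t$ are at most $\delta^{(2)}$ times that block length whenever $\Delta\le\delta^{(2)}(1-R)$. After error reduction, $A_k\cup B_k$ carries at most $\epsilon^{(2)}t\le \epsilon^{(2)}\Delta n_k/R$ errors. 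The hypothesis $\epsilon^{(2)}\le\frac{1-r^{(1)}}{r^{(1)}}$ turns this into at most $\Delta n_{k-1}/R$ errors, which is at most a $\Delta$-fraction of $\mathcal{Q}_{k-1}$'s block length $n_{k-1}/R$. The induction hypothesis then applies; the base case is $\Delta\le\delta_0$ for $\mathcal{Q}_0$.

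\emph{Complexity.} Encoding costs $T_{\rm enc}(k)\le T_{\rm enc}(k-1)+c\,n_k$, since both $\mathcal{R}$-encodings are linear in $n_k$. Decoding (unencoding plus classical algorithms) satisfies the same type of recurrence. Because $n_{k-1}=\frac{1-r^{(1)}}{r^{(1)}}n_k$ with ratio less than $1$ (as $r^{(1)}>1/2$), the geometric sum gives $O(n_k)$, with $\mathcal{Q}_0$ contributing a constant.

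The main obstacle is the bookkeeping of step 4: arguing cleanly that a Pauli error on $M_k\cup A_k$ followed by corrections to $A_k$ (made while $A_k$ was encoded inside $\mathcal{Q}_{k-1}$ and then unencoded from it) leaves exactly the error configuration to which Definition~\ref{def:qerc} applies. This means $A_k$ is error-free and the error on $M_k$ is the original channel error. I would handle it by tracking Paulis through each $\CNOT$ layer, noting that the $\mathcal{Q}_{k-1}$ and $\mathcal{R}_k^{(2)}$ unencodings act only on $A_k\cup B_k\cup C_k$ and so leave $M_k$'s error untouched.
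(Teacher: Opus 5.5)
Your proposal is correct and follows essentially the same route as the paper: induction on $k$, with $\mathcal{R}_k^{(2)}$ error reduction bringing the errors on $A_k\cup B_k$ below a $\Delta$-fraction of $\mathcal{Q}_{k-1}$'s block, then recursive decoding of $\mathcal{Q}_{k-1}$, then unencoding of $\mathcal{R}_k^{(1)}$ with $t=0$ check errors. The linear cost comes from the same geometric recurrence in $n_k$. Your explicit check that the $\mathcal{R}_k^{(2)}$ block length equals $(n_k/R)(1-R)$, and your remark that the earlier unencodings do not touch $M_k$, fill in details the paper leaves implicit; they do not change the argument.
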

\begin{proof}
   Let us show that $\mathcal{Q}_k$ can be encoded using a linear number of (quantum) gates. Suppose that $\mathcal{R}_k^{(1)}$ can be encoded using $c^{(1)}n_k$ gates and $\mathcal{R}_k^{(2)}$ can be encoded using $c^{(2)}\frac{n_{k-1}}{R}$ gates, for constants $c^{(1)}$ and $c^{(2)}$. Let $T_k$ denote the number of gates required to encode $\mathcal{Q}_k$ for $k \geq 0$. Then, for $k \geq 1$,
    \begin{equation}
        T_k = c^{(1)}n_k + c^{(2)}\frac{n_{k-1}}{R}+T_{k-1},
    \end{equation}
    or,
    \begin{equation}
        T_k-T_{k-1} = \left(\frac{r^{(1)}}{1-r^{(1)}}\right)^{k-1}\left[c^{(1)}m_0\left(\frac{r^{(1)}}{1-r^{(1)}}\right)+\frac{c^{(2)}m_0}{R}\right].
    \end{equation}
    Solving the recurrence relation yields $T_k = \Theta(n_k)$, as required. Note that the quantum unencoding circuit uses the same number of quantum gates as the quantum encoding circuit.

    Next, let us suppose that a $\Delta$-fraction of the qubits of $\mathcal{Q}_k$ have been affected by Pauli errors. We will show by induction on $k$ that these errors can be corrected. The base case is trivial since $\Delta \leq \delta_0$.
    
    For the inductive step, we begin by using the error-reduction algorithm of $\mathcal{R}_k^{(2)}$ to reduce the errors on $A_k \cup B_k$ using the checks in $C_k$. Specifically, the code $\mathcal{R}_k^{(2)}$ is quantumly unencoded, its checks (qubits in $C_k$) are measured, we perform its classical error-reduction algorithm, and a Pauli correction is applied on $A_k \cup B_k$. Initially, there are at most $\frac{n_k}{R}\Delta$ errors in total, and so in particular at most this many on either $A_k \cup B_k$ or $C_k$. We have $\Delta \leq \delta^{(2)}(1-R)$, and so
    \begin{equation}
        \frac{n_k}{R}\Delta \leq \delta^{(2)}n_k\left(\frac{1-R}{R}\right),
    \end{equation}
    and since $|A_k \cup B_k \cup C_k| = n_k\left(\frac{1-R}{R}\right)$, we can perform error reduction to reduce the size of the Pauli errors in $A_k \cup B_k$ to at most $\epsilon^{(2)}\frac{n_k}{R}\Delta$.

    Now, the assumption
    \begin{equation}
        \epsilon^{(2)} \leq \frac{1-r^{(1)}}{r^{(1)}}
    \end{equation}
    implies that the number of errors on $A_k \cup B_k$ is at most
    \begin{equation}
        \epsilon^{(2)}\frac{n_k}{R}\Delta \leq \Delta \frac{n_{k-1}}{R},
    \end{equation}
    and so using the inductive hypothesis these errors can be corrected perfectly. Indeed, the quantum code $\mathcal{Q}_k$ is unencoded, its checks (qubits in $B_k$) measured, and the error on $A_k$ is removed by a Pauli correction. At this point, there are no errors in $A_k$, and at most
    \begin{equation}
        \frac{n_k}{R}\Delta \leq \delta^{(1)}\frac{n_k}{r^{(1)}}
    \end{equation}
    errors on $M_k$. Since the block length of $\mathcal{R}_k^{(1)}$ is $\frac{n_k}{r^{(1)}}$, we can unencode $\mathcal{R}_k^{(1)}$, and use its error-reduction algorithm to perfectly correct the errors on $M_k$. The complexity of the decoding algorithm can be argued in the same way as the encoding algorithm.
\end{proof}
\begin{remark}
    While the local-to-global conversion of error reduction to error correction in this sequential case works in fundamentally the same way as in Spielman's case~\cite{spielman1995linear}, it is interesting to note that this decoding algorithm proceeds via alternating layers of quantum and classical computation.
\end{remark}

\subsubsection{Parallel Algorithms}

\begin{lemma}\label{lem:parallel_erc_to_ecc}
    Suppose that, for $j \in \{1,2\}$, $\mathcal{R}_k^{(j)}$ is a family of codes with linear-time quantum encoding and unencoding circuits of constant depth. Moreover, suppose that it has a linear-time and constant-depth classical error-reduction algorithm that, on input a stabiliser measurement resulting from Pauli errors on $v$ message qubits and $t$ check qubits, where $v,t \leq \delta^{(j)} N$, produces a Pauli correction resulting in a Pauli error on the message qubits of weight at most $\max(\epsilon^{(j)} \cdot v, \epsilon^{(j)} \cdot t)$. If
    \begin{equation}
        \epsilon^{(1)}, \epsilon^{(2)} \leq \frac{1-r^{(1)}}{r^{(1)}},
    \end{equation}
    then $\mathcal{Q}_k$ can be encoded and unencoded using quantum circuits of logarithmic depth, and using a linear number of quantum gates. Moreover, there is a classical decoding algorithm running in logarithmic depth, using a number of classical gates $O(n_k \log n_k)$, which can correct from a $\Delta$-fraction of Pauli errors, where
    \begin{equation}
        \Delta = \min\left(\delta^{(1)}\frac{R}{r^{(1)}}, \delta^{(2)}(1-R), \delta_0\right).
    \end{equation}
\end{lemma}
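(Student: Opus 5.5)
The argument splits into the quantum circuits, which are easy, and the classical decoder, which needs a layer-by-layer reduction argument in the style of Spielman's parallel decoder.

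\textbf{Encoding and unencoding.} Unroll the recursion defining $\mathcal{Q}_k$. The encoder applies the encoder of $\mathcal{R}_k^{(1)}$ (constant depth $d_1$), then the encoder of $\mathcal{Q}_{k-1}$, then the encoder of $\mathcal{R}_k^{(2)}$ (constant depth $d_2$). The depth therefore satisfies $D_k = D_{k-1} + d_1 + d_2$, so $D_k = O(k)$. Since $n_k$ grows geometrically in $k$ (using $r^{(1)}>1/2$), we have $k = O(\log n_k)$, giving logarithmic depth. The gate count obeys the same recurrence as $T_k$ in Lemma~\ref{lem:seq_erc_to_ecc}, so it is linear. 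Unencoding is the reversed circuit and has the same bounds.

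\textbf{Decoding: setting up the layers.} First do all the quantum work. Unencode the whole of $\mathcal{Q}_k$ with the inverse circuit, in logarithmic depth, and measure every check qubit at every level. This yields the syndromes of all the $\mathcal{R}^{(1)}_j$, $\mathcal{R}^{(2)}_j$ and $\mathcal{Q}_0$, $j\le k$. Then push the Pauli error through the unencoding circuit classically, using Equation~\eqref{eq:CNOT_commutation_rules}. The point is that a Pauli correction computed for the message block of an inner code can be propagated, by the linear map of the remaining unencoding layers, into a correction on the final message qubits and into a modification of the syndromes measured at the later layers.

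I would organise the decoding as Spielman does. View the code as two ``ladders'': the $\mathcal{R}^{(2)}$ layers run from level $k$ downward, and the $\mathcal{R}^{(1)}$ layers run from level $1$ upward. The decoder makes $O(k)$ rounds. In each round it applies, in parallel and in constant depth, the error-reduction algorithm of every $\mathcal{R}_j^{(1)}$ and $\mathcal{R}_j^{(2)}$ to its current (corrected) syndrome, and it runs the decoder of $\mathcal{Q}_0$. Each round costs $O(n_k)$ gates, so the total is $O(n_k\log n_k)$ gates in logarithmic depth. The reason for using the $\max(\epsilon v,\epsilon t)$ form of error reduction is that we cannot wait until a layer's check errors are exactly zero.

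\textbf{Decoding: the inductive invariant.} I would prove by induction on the round number $i$ that after round $i$ the following holds. Every block at level $j$ has at most a $\Delta\,(\epsilon)^{\,\min(i,\ldots)}$-type fraction of errors, with the weight shrinking geometrically. Concretely, the error on $A_j\cup B_j$ after reduction is bounded by $\epsilon^{(2)}\cdot\max(\text{errors on } A_j\cup B_j,\ \text{errors on } C_j)$. The condition $\epsilon^{(j)}\le (1-r^{(1)})/r^{(1)}$ converts an error weight at level $j$ into at most a $\Delta$-fraction of the block length at level $j-1$, because the block lengths shrink by exactly the factor $r^{(1)}/(1-r^{(1)})$. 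This keeps every layer within its reducible distance $\delta^{(j)}$, by the same calculations as in Lemma~\ref{lem:seq_erc_to_ecc}. After $O(k)$ rounds the errors reaching $\mathcal{Q}_0$ are within $\delta_0$ and are corrected exactly. After that, each $\mathcal{R}^{(1)}_j$ layer going upward has zero check error, and its reduction leaves error at most $\epsilon^{(1)}v$. Repeating this reduction $O(\log n_k)$ times drives $v$ below $1$, so the error reaches zero.

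\textbf{Main obstacle.} The main obstacle is exactly this last point together with error spreading: corrections made at inner levels must be commuted back through the already-applied unencoding layers. This is legitimate because all operations are $\CNOT$s, so the propagation is a fixed linear map applied classically to Pauli frames. The difficulty is ensuring that this propagation does not enlarge the errors beyond the reducible distances before later reduction rounds act. Since the unencoding of each $\mathcal{R}$ code is already folded into its error-reduction guarantee, the plan is to track only message-qubit errors per layer, and to charge any leftover check-qubit contributions to the $\max(\epsilon v,\epsilon t)$ bound.
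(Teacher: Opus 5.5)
Your encoding/unencoding argument is fine. Deferring all Pauli corrections to the end by Pauli-frame tracking is also legitimate, since every gate is a $\CNOT$. It does not, however, remove the causal order of the decoder. A layer's syndrome only becomes meaningful once it has been corrected for the corrections of every layer unencoded before it; that order is $\mathcal{R}^{(2)}_k,\dots,\mathcal{R}^{(2)}_1,\mathcal{Q}_0,\mathcal{R}^{(1)}_1,\dots,\mathcal{R}^{(1)}_k$.

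Your schedule runs every layer's reduction in parallel from round one. Inner layers therefore receive syndromes of errors that have spread through uncorrected outer unencodings. These need not satisfy $v,t\le\delta^{(j)}N$, and then the hypothesis says nothing about the output. If such a correction is accumulated, the layer can stay outside its reducible distance for good. Your ``Main obstacle'' paragraph names this but does not resolve it. The paper avoids it by making one pass in unencoding order, calling each layer's reduction once, only after its inputs are known to be in range. Because block sizes are geometric, this pass takes $O(k)$ constant-depth steps and a linear number of gates. It leaves $M_0$ error-free and at most $\frac{n_{i-1}}{R}\Delta$ errors on each $M_i$.

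The more serious gap is the $\mathcal{R}^{(1)}$ ladder. Your claim that ``each $\mathcal{R}^{(1)}_j$ layer going upward has zero check error'' holds only for $j=1$. The check qubits of $\mathcal{R}^{(1)}_j$ are $M_{j-1}$, which after a reduction still carry up to $\epsilon^{(1)}\max(v,t)$ errors. Making them exactly zero before moving up needs $\Theta(\log n_j)$ repetitions per layer, i.e.\ $\Theta(\log^2 n_k)$ depth overall, which is exactly what the lemma must avoid.

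The missing idea is to reduce all of $M_1,\dots,M_k$ simultaneously on the stored classical syndromes. In each round, use the current, nonzero error on $M_{i-1}$ as the check error $t$ for $\mathcal{R}^{(1)}_i$. The $\max(\epsilon^{(1)}v,\epsilon^{(1)}t)$ guarantee together with $\epsilon^{(1)}\le\frac{1-r^{(1)}}{r^{(1)}}=\frac{n_{i-1}}{n_i}$ gives a two-index invariant: after $j$ total rounds, counting the first pass, the error on $M_i$ is at most $\frac{n_{i-j}}{R}\Delta$. Errors only decrease, so the reducible-distance hypotheses remain valid throughout. Since $M_0$ is error-free and $\epsilon^{(1)}<1$, continuing this induction makes every error vanish after $O(\log n_k)$ linear-cost rounds, giving $O(n_k\log n_k)$ gates. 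This is exactly where the $\max$ form is needed, and your ``$\Delta(\epsilon)^{\min(i,\ldots)}$-type'' invariant leaves it unstated.
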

\begin{remark}
    As in Spielman~\cite{spielman1995linear}, the error-reduction algorithms for the parallel case reduce the error to a size $\max\left(\epsilon^{(j)}\cdot v, \epsilon^{(j)}\cdot t\right)$, making it weaker than the sequential algorithms, which lead to an error of weight at most $\epsilon^{(j)}\cdot t$.
\end{remark}
\begin{proof}[Proof of Lemma~\ref{lem:parallel_erc_to_ecc}]
    The depth of the quantum encoding and unencoding circuits follows by the construction of $\mathcal{Q}_k$, and the total size of the circuits follows by the same considerations as those in the proof of Lemma~\ref{lem:seq_erc_to_ecc}. The decoding algorithm must be treated a little more carefully.

    The first part of the algorithm begins in the same way as in Lemma~\ref{lem:seq_erc_to_ecc}. That is, the quantum unencoding is performed on the code $\mathcal{R}_k^{(2)}$, and the stabilisers measured, corresponding to measuring the qubits in $C_k$. A Pauli correction is then calculated and performed for the qubits in $A_k \cup B_k$. Moving to the next level, we have $A_k = M_{k-1}$ and $B_k = A_{k-1} \cup B_{k-1} \cup C_{k-1}$. The quantum unencoding of the code $\mathcal{R}_{k-1}^{(2)}$ then proceeds, followed by the measuring of its stabilisers and Pauli correction, and so on. Eventually, we reach the code $\mathcal{Q}_0$, whose code block is the qubits $A_1 \cup B_1$, which may then be corrected by brute force in constant time. At this point, the only qubits remaining in the code state are those of $M_i$ for $i = 0, \ldots, k$ where, recall, $M_{i-1}$ are the check qubits resulting from the encoding of the message qubits $M_i$ into the code $\mathcal{R}_i^{(1)}$ for each $i = 1, \ldots, k$. Moreover, using the same reasoning as that in the proof of Lemma~\ref{lem:seq_erc_to_ecc}, there are at this stage at most $\frac{n_i}{R}\Delta$ Pauli errors in the block $M_i$ for $i = 1, \ldots, k$, and there are no Pauli errors in the block $M_0$. Moreover, up to this point, the algorithm has run in linear time and logarithmic depth, both in classical and quantum gates.

    As in Spielman's proof of the parallel error correction algorithm, we must be a little careful at this stage, because a naive application of our error-reduction algorithms would lead to a $O(\log^2 n_k)$ depth, rather than $O(\log n_k)$. Spielman's solution to this is to perform error reduction on every $M_i$ for $i = 1, \ldots, k$ in parallel to give the desired error correction. We will be able to do this, as long as we are careful about how the code is unencoded. In particular, we will perform rounds of error reduction as we unencode the $\mathcal{R}_i^{(1)}$, before performing the parallelised error reduction as Spielman does.

    We first explain the next step at a high level before going into detail. We unencode the code $\mathcal{R}_1^{(1)}$, and measure its stabilisers (the qubits in $M_0$). We can use this sydrome to perform error reduction on the Pauli error in $M_1$. Next, we unencode the code $\mathcal{R}_2^{(1)}$, and measure its stabilisers (the qubits in $M_1$), using this syndrome to reduce the Pauli error in $M_2$. We continue until we unencode the code $\mathcal{R}_k^{(1)}$ and measure out its stabiliser (corresponding to the qubits in $M_{k-1}$). These steps can be performed in linear time and logarithmic depth, both in quantum and classical gates, and at this stage we are left with only message qubits in $M_k$, and the qubits in $M_i$ for $i = 0, \ldots, k-1$, have turned into classical syndromes.
    
    We now go into details. We begin this step with qubits in $M_0, \ldots, M_k$ still encoded in the code, where $M_{i-1}$ form the check qubits for message qubits in $M_i$ according to the code $\mathcal{R}_i^{(1)}$ for $i = 1, \ldots, k$. We emphasise that the qubits in $M_i$ are both the message qubits for the code $\mathcal{R}_i^{(1)}$ and the check qubits for the code $\mathcal{R}_{i+1}^{(1)}$, for each $i = 1, \ldots, k-1$. The qubits in $M_k$ are only the message qubits for the code $\mathcal{R}_k^{(1)}$, whereas the qubits in $M_0$ are only the check qubits for the code $\mathcal{R}_1^{(1)}$. We know that, at this stage, the weight of the Pauli errors in $M_i$ is at most $\frac{n_i}{R}\Delta$ for $i = 1, \ldots, k$, and that there are no Pauli errors in $M_0$. We denote the Pauli errors in $M_i$ as $X_iZ_i$ for $i = 1, \ldots, k$, where $X_i$ and $Z_i$ may be thought of as bit strings for the corresponding $X$- and $Z$-type errors. For $i = 1, \ldots, k-1$, we will also consider those qubits in $M_i$ that are $X$-check qubits for the code $\mathcal{R}_{i+1}^{(1)}$, and those that are $Z$-check qubits for the code $\mathcal{R}_{i+1}^{(1)}$. For $i = 1, \ldots, k-1$, we denote the Pauli errors on the $X$-check qubits in $M_i$ as $X_i^{(x)}Z_i^{(x)}$, where $X_i^{(x)}$ and $Z_i^{(x)}$ may be thought of as bit strings for the corresponding pure $X$ and pure $Z$-type errors. Similarly, we denote the Pauli errors on the $Z$-check qubits in $M_i$ as $X_i^{(z)}Z_i^{(z)}$. One may note that the bit string $X_i$ is the concatenation of the bit strings $X_i^{(x)}$ and $X_i^{(z)}$, while $Z_i$ is the concatenation of $Z_i^{(x)}$ and $Z_i^{(z)}$, for $i = 1, \ldots, k-1$. For our final piece of notation for this stage, we denote the parity-check matrices of the code $\mathcal{R}_i^{(1)}$ as
    \begin{align}
        H_{X,i} &= \begin{pmatrix}
            H_{xx,i} & H_{xq,i} & H_{xz,i}
            \end{pmatrix}\\
        H_{Z,i} &= \begin{pmatrix}
            H_{zx,i} & H_{zq,i} & H_{zz,i},
        \end{pmatrix}
    \end{align}
    which denote the supports of the $X$-type and $Z$-type stabilisers on the code's $X$-check qubits, message qubits, and $Z$-check qubits, respectively.

    Now, when we unencode the code $\mathcal{R}_1^{(1)}$ and measure its stabilisers (the qubits in $M_0$), we obtain syndromes
    \begin{align}
        \sigma_{x,1} &= H_{xq,1}Z_1\\
        \sigma_{z,1} &= H_{zq,1}X_1
    \end{align}
    since there are no errors in $M_0$. Given the size of the errors, we may perform error reduction and reduce the size of $Z_1$ and $X_1$ to at most $\epsilon^{(1)}\frac{n_1}{R}\Delta \leq \frac{n_0}{R}\Delta$.

    Next, when we unencode the code $\mathcal{R}_2^{(1)}$ and measure its stabilisers (the qubits in $M_1$), we obtain syndromes
    \begin{align}
        \sigma_{x,2} &= H_{xq,2}Z_2 + H_{xx,2}Z_1^{(x)} + H_{xz,2}Z_1^{(z)}\\
        \sigma_{z,2} &= H_{zq,2}X_2 + H_{zx,2}X_1^{(x)} + H_{zz,2}X_1^{(z)}.
    \end{align}
    Since the size of the error $Z_1  X_1$ (the number of non-trivial errors in the Pauli, or equivalently the Hamming weight of the bit-wise union of the bit strings) is now at most $\frac{n_0}{R}\Delta$, and the size of $X_2Z_2$ is at most $\frac{n_2}{R}\Delta$, we may perform error reduction, and after a correction, the size of the Pauli error on $M_2$, $X_2Z_2$, is reduced to at most $\epsilon^{(1)}\frac{n_2}{R}\Delta \leq \frac{n_1}{R}\Delta$. We continue in this way, unencoding the code $\mathcal{R}_i^{(1)}$, performing error reduction and a Pauli correction on the qubits in $M_i$ for every $i = 1, \ldots, k$. Inductively, we find that at the end, all of the qubits in $M_i$, for $i = 0, \ldots, k-1$, have been measured out, leaving only the qubits in $M_k$ unmeasured. There is a Pauli error $X_kZ_k$ remaining on the qubits in $M_k$, and we also hold syndromes as classical bit strings
    \begin{align}\label{eq:X_syndrome_i}
        \sigma_{x,i} &= H_{xq,i}Z_i + H_{xx,i} Z_{i-1}^{(x)} + H_{xz,i}Z_{i-1}^{(z)}\\
        \sigma_{z,i} &= H_{zq,i}X_i + H_{zx,i} X_{i-1}^{(x)} + H_{zz,i}X_{i-1}^{(z)},\label{eq:Z_syndrome_i}
    \end{align}
    for $i = 1, \ldots, k$. Since the error reduction as been performed, we find that the weight of $X_iZ_i$ (which is a Pauli error for $i = k$, or a union of bit strings for $i = 1, \ldots, k-1$) is at most $\frac{n_{i-1}}{R}\Delta$. At a high level, what we have done is to unencode each of the codes $\mathcal{R}_i^{(1)}$ in turn, for $i = 1, \ldots, k$, but we have interspersed the quantum unencoding with the error reduction. This is a necessity in the quantum case because, without doing the error reduction in between the quantum unencodings, errors in $M_i$ would spread and grow into sets $M_j$ for $j > i$. Up until now, the circuit has used a linear number of quantum and classical gates, and has run in a logarithmic depth.

    With the classical syndromes in hand, we may now proceed as Spielman does, reducing the errors in $M_i$ for $i = 1, \ldots, k$ \textit{simultaneously}, for a logarithmic number of rounds.\footnote{Note that this might seem strange, because the use of the error-reduction algorithms here is not proceeded by a quantum measurement, but is simply input some classical bit strings $\sigma_{x,i}$ and $\sigma_{z,i}$ of the form of Equation~\eqref{eq:X_syndrome_i} and~\eqref{eq:Z_syndrome_i}. However, the error-reduction algorithm is exactly a classical algorithm taking input of this type and, given guarantees on the size of the bit strings $Z_k, Z_{i-1}^{(x)}, Z_{i-1}^{(x)}, X_i, X_{i-1}^{(x)}$ and $X_{i-1}^{(z)}$, outputs approximations to $Z_i$ and $X_i$ that are guaranteed to reduce their size.} For every $j$ and $i$ satisfying $1 \leq j \leq i \leq k$, after $j$ \textit{total} rounds of error reduction, including the initial one, the size of the error in $M_i$ is at most $\frac{n_{i-j}}{R}\Delta$, so that there are no errors remaining after a logarithmic number of rounds of this simultaneous error reduction. There are no further quantum gates used in this process, barring Pauli correction on $M_k$ (of which there are a linear number, and they may be executed in constant depth). Each round of parallelised error reduction uses a linear number of classical gates, and the logarithmic number of rounds leads to a total classical circuit size $O(n_k\log n_k)$.  
\end{proof}

\section{Quantum Error-Reduction Codes from Lossless Z-Graphs}\label{sec:quantum_error_reduction_codes}

In this section, we construct quantum error-reduction codes and their algorithms, reducing the problem to the construction of a certain structure that we call a lossless Z-graph. In turn, we construct lossless Z-graphs both randomly and explicitly in Section~\ref{sec:lossless_Z_graphs}. 

In Section~\ref{sec:construct_quantum_error_reduction_codes}, we present our construction of quantum error-reduction codes, and explain how to encode (and unencode) them with circuits of low depth and number of gates. In Section~\ref{sec:seq_error_reduction_algorithms}, we describe sequential error-reduction algorithms for our codes, proving their complexity and that they work. Finally, in Section~\ref{sec:parallel_erc_algos}, we do the same for our parallel error-reduction algorithms.

\subsection{Construction of Quantum Error-Reduction Codes and their (Un)encoding}\label{sec:construct_quantum_error_reduction_codes}

Our quantum error-reduction codes are quantum CSS codes with parity-check matrices of the form
\begin{alignat}{4}
    H_X &= (\,I&&|A&&|C&&)\label{eq:error_reduction_X_parity}\\
    H_Z &= (D&&|B&&|\,I&&)\label{eq:error_reduction_Z_parity}.
\end{alignat}
As always, $H_X$ and $H_Z$ are binary matrices whose rows denote the support of $X$ checks and $Z$ checks, respectively, and whose columns correspond to the physical qubits of the code. We let $H_X$ and $H_Z$ both have $m$ rows (so the number of $X$ checks always equals the number of $Z$ checks, $m_X = m_Z = m$). The physical qubits are grouped into $m$ $X$-check qubits, $n$ message qubits, and $m$ $Z$-check qubits, respectively. $I$ denotes the $m \times m$ identity matrix. Matrices $A$ and $C$ correspond to the support of the $X$ stabilisers on message qubits and $Z$-check qubits, respectively, whereas matrices $D$ and $B$ correspond to the support of $Z$ stabilisers on the $X$-check qubits and message qubits, respectively.

As always, these parity-check matrices define a valid CSS code if and only if $H_X\cdot H_Z^T = 0$, with arithmetic performed over $\mathbb{F}_2$. One can check that this is the case if and only if
\begin{equation}
    C = AB^T+D^T.
\end{equation}
Our strategy in building quantum error-reduction codes will be to specify matrices $A,B$ and $D$, and then to simply let $C$ be according to this equation.

\subsubsection{Encoding Circuit}

Let us consider the encoding of a quantum CSS code of this form. It turns out that any quantum code of this form can be encoded in constant depth and linear time if the matrices $A, B$ and $D$ are chosen to be sparse.

A circuit that encodes the code is described as follows.
\begin{enumerate}
    \item Begin with $m$ $X$-check qubits in the state $\ket{+}$, $m$ $Z$-check qubits in the state $\ket{0}$, and $n$ message qubits in the logical state that we wish to encode into the code;\label{step:encoding_init}
    \item For every $(i,j) \in [m] \times [n]$ such that $A_{i,j} = 1$, perform a $\CNOT$ with the $i$'th $X$-check qubit as control and the $j$'th message qubit as target;\label{step:encoding_A_matrix}
    \item For every $(i,j) \in [m] \times [n]$ such that $B_{i,j} = 1$, perform a $\CNOT$ with the $j$'th message qubit as control and the $i$'th $Z$-check qubit as target;\label{step:encoding_B_matrix}
    \item For every $(i,j) \in [m] \times [m]\,\,\,$ such that $D_{i,j} = 1$, perform a $\CNOT$ with the $j$'th $X$-check qubit as control and the $i$'th $Z$-check qubit as target.\label{step:encoding_D_matrix}
\end{enumerate}
\begin{figure}
    \centering
    \begin{quantikz}[row sep=0.4cm, column sep=0.5cm]
\lstick{X-Check Qubits $\ket{+}^{\otimes m}$}
  & \ctrl{1}
  & \qw
  & \ctrl{2} 
  & \qw \\
\lstick{Message Qubits\hspace*{0.5cm} $\ket{\psi}$}
  & \gate[wires=1]{\text{CNOTs } A}
  & \ctrl{1}
  & \qw 
  & \qw \\
\lstick{Z-Check Qubits $\ket{0}^{\otimes m}$}
  & \qw
  & \gate[wires=1]{\text{CNOTs } B^T}
  & \gate[wires=1]{\text{CNOTs }D^T}
  & \qw 
\end{quantikz}
    \caption{Depiction of a simple encoding circuit for a quantum CSS code of the form of Equations~\eqref{eq:error_reduction_X_parity} and~\eqref{eq:error_reduction_Z_parity}. This is intended for illustrative purposes only.}
    \label{fig:encoding_circuit_diagram}
\end{figure}
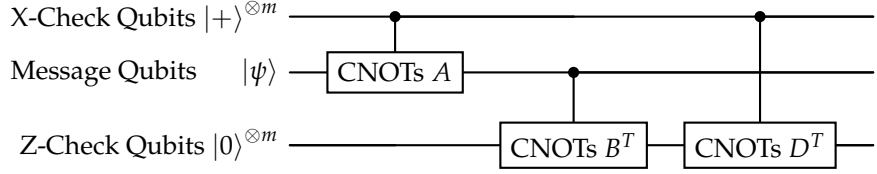
This encoding circuit is sketched in Figure~\ref{fig:encoding_circuit_diagram}. Let us show why it is an encoding circuit for the code. It is sufficient to show that, by the end of the circuit, the qubits are collectively in a $+1$-eigenstate of all (are stabilised by all) $X$ checks and $Z$ checks of the code, as specified by equations~\eqref{eq:error_reduction_X_parity} and~\eqref{eq:error_reduction_Z_parity}. To do this, one notes that, after Step~\ref{step:encoding_init}, the qubits are stabilised by every single-qubit $X$-operator acting on any $X$-check qubit, and any single-qubit $Z$-operator acting on any $Z$-check qubit. That is, after Step~\ref{step:encoding_init}, the stabilisers of the code may be denoted
\begin{alignat}{4}
    H_X &= (I&&|0&&|0&&)\\
    H_Z &= (0&&|0&&|I&&).
\end{alignat}
It is sufficient to show that these stabilisers commute through the encoding circuit to the desired form of Equations~\eqref{eq:error_reduction_X_parity} and~\eqref{eq:error_reduction_Z_parity}.

Step~\ref{step:encoding_A_matrix} causes $X$ stabilisers to spread onto the message qubits. In particular, after Step~\ref{step:encoding_A_matrix}, one can see using the commutation of $X$ operators through $\CNOT$ gates (Equation~\eqref{eq:CNOT_commutation_rules}) that the $i$-th $X$ check has support on the message qubits corresponding to the $i$-th row of $A$. That is, at this point, the qubits have stabilisers corresponding to
\begin{alignat}{4}
    H_X &= (I&&|A&&|0&&)\\
    H_Z &= (0&&|\,0&&|I&&).
\end{alignat}
Next, Step~\ref{step:encoding_B_matrix} causes $X$ stabilisers to spread from the message qubits to the $Z$-check qubits, while simultaneously causing $Z$ stabilisers to spread from the $Z$-check qubits onto the message qubits. In particular, the $i$'th $Z$ stabiliser gains support on the message qubits corresponding to the $i$'th row of $B$. Simultaneously, any single-qubit $X$ operator on the message qubits in the support of an $X$ stabiliser spreads to the $Z$-check qubits according to the columns of $B$. Thus, after this step, the qubits are stabilised by operators described as
\begin{alignat}{4}
    H_X &= (I&&|A&&|AB^T&&)\\
    H_Z &= (0&&|B&&|\,\,\,\,I&&),
\end{alignat}
where arithmetic is performed in binary because $X^2 = I$.

Finally, in Step~\ref{step:encoding_D_matrix}, $Z$ stabilisers are spread from the $Z$-check qubits to the $X$-check qubits, and $X$ stabilisers are spread from the $X$-check qubits to the $Z$-check qubits according to the rows and columns of $D$, respectively. Thus, at the end of the circuit, the qubits are stabilised by operators described as
\begin{alignat}{4}
    H_X &= (\,I&&|A&&|AB^T + D^T&&)\\
    H_Z &= (D&&|B&&|\,\,\,\,\,\,\,\,\,\,\,\,I&&),
\end{alignat}
as required, since $C = AB^T+D^T$.

The complexity of the encoding circuit is exactly the number of $1$'s in the matrices $A, B$ and $D$, which is linear for sparse $A, B$ and $D$. Furthermore, for such matrices, the $\CNOT$s can be organised into a constant number of layers, making the encoding circuit constant depth.

\subsubsection{Unencoding, Stabiliser Measurement and Error Spreading}

After the qubits of the code pass through the noise channel, they are unencoded and the stabilisers measured. This is performed by a circuit that inverts the encoding circuit, which may simply be achieved by running the $\CNOT$s of the encoding circuit in the opposite order, before measuring each of the $X$-check qubits in the $X$ basis, and each of the $Z$-check qubits in the $Z$ basis. This is graphically depicted in Figure~\ref{fig:encode_noise_unencode_measure}. As in Figure~\ref{fig:encoding_noise_unencoding_labelled_Paulis}, we assign symbols to the Paulis occurring in the noise channel, so that the syndromes, i.e., the outcomes of the stabiliser measurements, are
\begin{align}
    \sigma_x &= Z_x + A\cdot Z_q + C\cdot Z_z\\
    \sigma_z &= D\cdot X_x + B \cdot X_q + X_z,
\end{align}
where it is understood that $Z_x$ is both a pure-type $Z$-Pauli operator, and a bit string denoting the support of the Pauli operator.
\begin{figure}
    \centering
    \scalebox{0.75}{\begin{quantikz}[row sep=0.4cm, column sep=0.5cm]
\lstick{X-Check Qubits $\ket{+}^{\otimes m_X}$}
  & \ctrl{1}
  & \qw
  & \ctrl{2}
  & \gate[wires=1]{X_xZ_x}
  & \ctrl{2}
  & \qw
  & \ctrl{1}
  & \meter{X} \\
\lstick{Message Qubits\hspace*{0.5cm} $\ket{\psi}$}
  & \gate[wires=1]{\text{CNOTs } A}
  & \ctrl{1}
  & \qw
  & \gate[wires=1]{X_qZ_q}
  & \qw
  & \ctrl{1}
  & \gate[wires=1]{\text{CNOTs } A}
  & \qw \\
\lstick{Z-Check Qubits $\ket{0}^{\otimes m_Z}$}
  & \qw
  & \gate[wires=1]{\text{CNOTs } B^T}
  & \gate[wires=1]{\text{CNOTs }D^T}
  & \gate[wires=1]{X_zZ_z}
  & \gate[wires=1]{\text{CNOTs }D^T}
  & \gate[wires=1]{\text{CNOTs } B^T}
  & \qw
  & \meter{Z}
\end{quantikz}}
    \caption{}
    \label{fig:encode_noise_unencode_measure}
\end{figure}
At this point, we must address an issue that is a distinctly non-classical consideration: the issue of errors spreading during the unencoding circuit. This is an issue that results from the possibility of $X$ errors occurring on the $X$-check qubits, and $Z$ errors occurring on the $Z$-check qubits. The problem is that, after unencoding and stabiliser measurement, the errors remaining on the message qubits are not just the errors that occurred on them, $X_qZ_q$, but errors that can spread to them in the unencoding circuit. In particular, one may check that the residual $X$ error on the message qubits after this circuit is
\begin{equation}
    X_{Res} \coloneq A^T\cdot X_x + X_q,
\end{equation}
rather than simply $X_q$, and the residual $Z$ error on the message qubits after this circuit is
\begin{equation}
    Z_{Res} \coloneq Z_q + B^T\cdot Z_z.
\end{equation}
These are the errors that our error-reduction algorithms must ultimately find reduced versions of, in order to reduce the weight of the Pauli errors on the message qubits.

\subsubsection{Constructing the Parity-Check Matrices from Lossless Z-Graphs}

One of Spielman's constructions of error-reduction codes~\cite{spielman1995linear} was based on lossless expanders (obtained via a random construction). Motivated by the need to construct quantum error-reduction codes which can reduce both $X$-type and $Z$-type errors, while handling the problem of error spreading, we make the following definition of a particular type of bipartite graph.
\begin{definition}\label{def:lossless_Z-graphs}[Lossless Z-Graphs]
    An $(n, m, \Delta_1, \Delta_2, \eta_1, \eta_2, \epsilon_1, \epsilon_2)$ lossless-Z graph is a bipartite graph with the following properties. The left vertices are partitioned into two sets called $L_1$ and $L_2$, which have size $n$ and $m$, respectively. Similarly, the right vertices are partitioned into sets of size $R_1$ and $R_2$, which have sizes $n$ and $m$, respectively. The subgraphs obtained by restriction to $(L_1, R_2)$ and $(R_1, L_2)$ are $(\Delta_1, \Delta_1')$-biregular graphs, where $\Delta_1' = \frac{n}{m}\Delta_1$, and the subgraph obtained by restriction to $(L_2, R_2)$ is a $\Delta_2$-regular (bipartite) graph. There are no edges between vertices in $L_1$ and $R_1$.

    Given subsets $S_1 \subseteq L_1$ and $S_2 \subseteq L_2$ of size $|S_1| \leq \eta_1 n$ and $|S_2| \leq \eta_2 m$, $S_1 \cup S_2$ has a large number of neighbours in $R_2$, in particular,
    \begin{equation}
        |N_{R_2}(S_1\cup S_2)| \geq (1-\epsilon_1)\Delta_1\cdot |S_1| + (1-\epsilon_2)\Delta_2\cdot |S_2|. 
    \end{equation}
    Similarly, given subsets $S_1 \subseteq R_1$ and $S_2 \subseteq R_2$ of size $|S_1| \leq \eta_1 n$ and $|S_2| \leq \eta_2m$, $S_1 \cup S_2$ has a large number of neighbours in $L_2$, namely,
    \begin{equation}
        |N_{L_2}(S_1\cup S_2)| \geq (1-\epsilon_1)\Delta_1\cdot |S_1| + (1-\epsilon_2)\Delta_2\cdot |S_2|.
    \end{equation}
\end{definition}
Given a lossless $Z$-graph $G$, we construct the \textit{quantum error-reduction code corresponding to $G$} as follows. The code has the usual parity-check matrices we take for a quantum error-reduction code,
\begin{alignat}{4}
    H_X &= (\,I&&|A&&|C&&)\\
    H_Z &= (D&&|B&&|\,I&&),
\end{alignat}
where $C = AB^T+D^T$, and where
\begin{itemize}
    \item $A$ is a binary $m \times n$ matrix, and is taken as the adjacency matrix of the subgraph of $G$ obtained by restriction to $(L_1, R_2)$;
    \item $B$ is a binary $m \times n$ matrix, and is taken as the adjacency matrix of the subgraph of $G$ obtained by restriction to $(R_1, L_2)$;
    \item $D$ is a binary $m \times m$ matrix, and is taken as the adjacency matrix of the subgraph of $G$ obtained by restriction to $(R_2, L_2)$.
\end{itemize}
For the avoidance of doubt, the rows and columns of $D$ correspond to the nodes in $R_2$ and $L_2$, respectively. A row in $D$ corresponding to a node $V \in R_2$ has support corresponding to the neighbours of $V$ in $L_2$. Notice that we could have equivalently defined $D^T$ as the adjacency matrix of the subgraph of $G$ obtained by restriction to $(L_2, R_2)$.

\subsection{Sequential Error-Reduction Algorithms}\label{sec:seq_error_reduction_algorithms}

Given a quantum error-reduction code constructed from a lossless Z-graph as above, we now describe how to perform error reduction on the $X$ errors in Section~\ref{sec:sequential_X_reduction_algos}, and then the $Z$ errors in Section~\ref{sec:sequential_Z_reduction_algos}.

\subsubsection{Reducing the X Errors}\label{sec:sequential_X_reduction_algos}

We now give some intuition before presenting the construction formally.

We begin by noting that the $X$ errors are relatively easy to handle. Indeed, the input to the problem of reducing the $X$ errors is the $Z$-check syndrome
\begin{equation}
    \sigma_z = D\cdot X_x + B\cdot X_q + X_z,
\end{equation}
and we aim to reduce the residual $X$ error on the message qubits,
\begin{equation}
    X_{Res} \coloneq A^T\cdot X_x+X_q.
\end{equation}
In particular, we aim to flip bits in $X_{Res}$ so that its Hamming weight $|X_{Res}|$ becomes small as a function of $X_z$.

At this point, we can notice that this just has the form of a classical error reduction problem. We have full control over our choices of $D, B$ and $A$, which are all the matrices appearing in the problem. The syndrome $\sigma_z$ takes the form of a checked message bit error, $D \cdot X_x + B \cdot X_q$ added to a check bit error $X_z$. We can therefore get good approximations of the errors $X_x$ and $X_q$ (as a function of $X_z$) by simply taking
\begin{equation}
    H_Z = (D|B|I)
\end{equation}
to be the parity-check matrix for a classical error-reduction code. Indeed, when building our parity-check matrices for the quantum error-reduction code, we have taken the graph corresponding to $(D|B)$ to be a lossless expander, just with the added detail that nodes on one side of the graph can have two different degrees. It will be possible to perform error reduction of the $X$ errors using this, in a similar manner to Spielman's use of a (one-sided) lossless expander to construct a classical error-reduction code in~\cite{spielman1995linear}.

It will be important, however, to take care of the issue that the degrees on one side of the graph are mixed. The reason for this will be as follows. Doing the initial error reduction will give us something like $|X_x| \lesssim \frac{|X_z|}{\Delta_2}$ and $|X_q| \lesssim \frac{|X_z|}{\Delta_1}$. Reducing $X_x$ and $X_z$ then allows us to reduce $X_{Res}$, but the presence of $A^T$ in the expression for $X_{Res}$, which results from the spreading of quantum errors in the unencoding circuit, will then give us (something like) $|X_{Res}| \lesssim \Delta_1'\frac{|X_z|}{\Delta_2} + \frac{|X_z|}{\Delta_1}$, where we recall that $\Delta_1' = \frac{n}{m}\Delta_1$. This will be okay, as long as we eventually choose $\Delta_2 \gg \Delta_1' \sim \Delta_1$. To be clear, both $\Delta_1$ and $\Delta_2$ are constants as the length of the code grows, but $\Delta_2$ will be chosen to be a much larger constant than $\Delta_1$.

We now describe the reduction of the $X$ errors more formally. Our sequential error-reduction algorithm for $X$ errors is given in Algorithm~\ref{alg:X_sequential_reduction}. In words, we imagine the errors $X_x$, $X_q$ and $X_z$ as subsets of the vertices $R_2, R_1$ and $L_2$, respectively, connected by the graph $G$. Because we are tackling what is fundamentally a classical error reduction problem, we refer to $X_x$ and $X_q$ as message bits and $X_z$ as check bits for the purpose of discussing this algorithm.\footnote{This is not to be confused with the language used more broadly in the paper, where the qubits on which $X_q$, $X_x$ and $X_z$ are supported are called, respectively, message qubits, $X$-check qubits, and $Z$-check qubits.} We also imagine the syndrome $\sigma_z$ as a subset of the vertices $L_2$, and refer to these as the syndrome bits. The algorithm then proceeds by flipping message bits (bits in $X_x$ or $X_q$) sequentially, where at each time step we flip a bit if it is in contact with more violated syndrome bits than satisfied syndrome bits, so that the number of violated syndrome bits reduces at each time step. We continue until there are no violated syndrome bits. This allows us to construct approximations $\tilde{X}_x$ and $\tilde{X}_q$ to the bit strings $X_x$ and $X_q$. We then finish by computing the approximation $\tilde{X}_{Res} = A^T\cdot \tilde{X}_x + \tilde{X}_q$ to $X_{Res}$. We now analyse the runtime and performance of the algorithm.

\begin{algorithm}[t]
\caption{Sequential Error-Reduction Algorithm for $X$ Errors}\label{alg:X_sequential_reduction}
\begin{algorithmic}[1] 

\Require Lossless $Z$-Graph $G$; \hspace*{0.1cm} Syndrome $\sigma_z = D\cdot X_x+B\cdot X_q + X_z$
\Ensure Approximation to the residual $X$ error, $\tilde{X}_{Res}$
\State $\tilde{X}_x \gets 0$
\State $\tilde{X}_q \gets 0$
\State $\tilde{\sigma}_z \gets 0$
\While{$\tilde{\sigma}_z \neq \sigma_z$}

    \State $\hat{\sigma}_z \gets \sigma_z + \tilde{\sigma}_z$
    \State If a bit in $\tilde{X}_x$ (associated to the vertices $R_2$) or in $\tilde{X}_q$ (associated to the vertices $R_1$) is in contact (via the graph $G$) with more $1$'s than $0$'s in $\hat{\sigma}_z$ (associated to the vertices $L_2$), then flip it.
    \State $\tilde{\sigma}_z \gets D\cdot \tilde{X}_x + B\cdot \tilde{X}_q$
\EndWhile
\State $\tilde{X}_{Res} \gets A^T\cdot \tilde{X}_x + \tilde{X}_q$

\end{algorithmic}
\end{algorithm}

\begin{proposition}
    The sequential error-reduction algorithm for $X$ errors terminates, and uses a linear number of classical gates.
\end{proposition}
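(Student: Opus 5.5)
The proof uses a potential-function argument. The potential is the number of violated syndrome bits, namely $|\hat{\sigma}_z| = |\sigma_z + \tilde{\sigma}_z|$, viewed as a subset of $L_2$. Each message bit (a vertex of $R_1$ or $R_2$) has bounded degree into $L_2$: vertices of $R_1$ have degree $\Delta_1$, and vertices of $R_2$ have degree $\Delta_2$. Flipping a bit adjacent to $u$ violated and $s$ satisfied syndrome bits toggles exactly those neighbours, so the potential changes by $s-u$. The rule only flips when $u > s$, so each flip strictly decreases $|\hat{\sigma}_z|$ by at least one.

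The initial potential is $|\sigma_z| \le m$, and the potential is nonnegative. Hence at most $m$ flips occur. A subtlety is that the loop must not stall with no flippable bit while $\hat{\sigma}_z \neq 0$. I would handle this by interpreting termination as stopping either when $\hat\sigma_z=0$ or when no bit satisfies the flip condition, and adding that exit to the while condition. The correctness analysis later will show that, under the size hypotheses, lossless expansion guarantees that some flippable bit exists. So the number of iterations is $O(m)$, and termination follows.

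For the linear gate count, I use the standard Spielman-style implementation rather than recomputing $D\tilde{X}_x + B\tilde{X}_q$ from scratch each round, which would cost quadratic time.
\begin{itemize}
\item For each message bit $v$, maintain a counter of violated neighbours.
\item Keep a list or bucket of the message bits whose counter exceeds half their degree.
\item Update $\hat{\sigma}_z$ incrementally: when $v$ flips, toggle its at most $\Delta_2$ neighbours in $L_2$.
\item For each toggled syndrome bit, update the counters of its at most $\Delta_1'+\Delta_2$ neighbours and adjust their list membership.
\end{itemize}
Each flip therefore costs $O(\Delta_2(\Delta_1'+\Delta_2)) = O(1)$ gates, since all degrees are constants. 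Initialising the counters costs $O(|E(G)|) = O(n+m)$.

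The final step computes $\tilde{X}_{Res} = A^T\tilde{X}_x + \tilde{X}_q$. It costs $O(\Delta_1 n)$ gates because $A$ is sparse. Summing the parts gives $O(n+m)$ total, which is linear.

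The main obstacle is reconciling the literal pseudocode, which recomputes $\tilde{\sigma}_z$ and loops until equality, with a linear-time data-structure implementation. I would argue that the incremental version computes exactly the same sequence of states. If necessary, I would also note the stalling issue and defer its resolution to the expansion-based correctness lemma.
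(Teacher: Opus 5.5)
Your proof is correct and follows the paper's argument. The syndrome weight $|\hat\sigma_z|$ strictly decreases with each flip, so the algorithm terminates after at most $m$ flips, and sparsity of $G$ makes each flip cost $O(1)$ via incremental Sipser--Spielman bookkeeping; the paper just cites Lemma 9 of Sipser--Spielman, and your counters, initialisation cost, and final $A^T\tilde X_x$ step supply that detail. One correction: the stalling issue is resolved only by your explicit exit rule, which is also how the paper's proof of Lemma~\ref{lem:X_seq_reduction} implicitly reads the loop, and not by the correctness lemma. That lemma guarantees a flippable bit only while the remaining error is large relative to $|X_z|$, and when $X_z \neq 0$ the algorithm can genuinely stop with $\hat\sigma_z \neq 0$ (for instance, a single corrupted check bit with no message errors).
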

\begin{proof}
    Since the Hamming weight of the syndrome decreases at each step, the algorithm must terminate, and there must be at most a linear number of bit flips. In addition, the sparsity of the graph $G$ implies that each bit flip requires a constant amount of computation (see Lemma 9 of~\cite{sipser1996expander} for details).
\end{proof}
\begin{lemma}\label{lem:X_seq_reduction}
    Suppose that we have a quantum error-reduction code built from an $(n,m,\Delta_1, \Delta_2, \eta_1, \eta_2, \epsilon_1, \epsilon_2)$ lossless $Z$-graph, where $\epsilon = \epsilon_1 = \epsilon_2 \leq \frac{1}{8}$. Suppose that the $X$ errors are of weight
    \begin{align}
        |X_q| &< \alpha n\\
        |X_x| &< \beta m\\
        |X_z| &< \gamma m.
    \end{align}
    Then, there exist small enough constants $\alpha, \beta, \gamma$ such that the sequential error-reduction algorithm for $X$ errors produces an approximation to the residual $X$ error, $\tilde{X}_{Res}$, such that
    \begin{equation}
        |X_{Res} + \tilde{X}_{Res}| \leq \frac{n}{m}\Delta_1\frac{8|X_z|}{\Delta_2}+\frac{8|X_z|}{\Delta_1}.
    \end{equation}
\end{lemma}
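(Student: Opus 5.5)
I will follow Spielman's analysis of the sequential flip algorithm on a lossless expander, adapted to the mixed-degree left side $R_2\cup R_1$ (message bits $X_x$, $X_q$) facing the check side $L_2$ (syndrome bits). Throughout, let $E_x \coloneq X_x+\tilde X_x$ and $E_q \coloneq X_q+\tilde X_q$ denote the current disagreement between the algorithm's estimates and the true errors. The syndrome discrepancy $\hat\sigma_z$ at each step is then
\begin{equation}
\hat\sigma_z = D\cdot E_x + B\cdot E_q + X_z .
\end{equation}
The goal is to show that at termination $|E_x|\le 8|X_z|/\Delta_2$ and $|E_q|\le 8|X_z|/\Delta_1$. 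Since
\begin{equation}
X_{Res}+\tilde X_{Res}=A^T E_x + E_q
\end{equation}
and every column of $A^T$ (a vertex of $R_2$ seen from $L_1$) has weight $\Delta_1'=\frac{n}{m}\Delta_1$, the triangle inequality then yields the stated bound.

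\textbf{Step 1 (the invariant stays small).} Define the weighted size $w\coloneq\Delta_2|E_x|+\Delta_1|E_q|$. Initially $|E_x|=|X_x|<\beta m$ and $|E_q|=|X_q|<\alpha n$. Each iteration lowers $|\hat\sigma_z|$, which starts at most $\Delta_2|X_x|+\Delta_1|X_q|+|X_z|$. I argue that $E_x,E_q$ never leave the expansion regime $|E_x|\le\eta_2 m$, $|E_q|\le\eta_1 n$. The argument is Spielman's: suppose the first step at which the sets are too large; the sets grow by one vertex per step, so at that moment they sit just at the boundary. Lossless expansion of $N_{L_2}(E_q\cup E_x)$ gives at least $(1-2\epsilon)w$ unique neighbours. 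Unique neighbours not in $X_z$ are violated, so
\begin{equation}
|\hat\sigma_z|\ge(1-2\epsilon)w-|X_z| ,
\end{equation}
which exceeds the initial syndrome weight once $\alpha,\beta,\gamma$ are small relative to $\eta_1,\eta_2$. This contradicts monotonic decrease. The imbalance between $\Delta_1$ and $\Delta_2$ is harmless here because expansion is stated in the weighted form.

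\textbf{Step 2 (a flippable bit exists while the error is large).} At termination $\hat\sigma_z=0$ modulo the $X_z$ contribution, i.e.\ $D E_x+BE_q=X_z$. Every unique neighbour of $E_x\cup E_q$ must therefore lie in $X_z$. Counting unique neighbours as at least $(1-2\epsilon)w\ge\frac34 w$ for $\epsilon\le\frac18$ gives $\frac34 w\le|X_z|$, hence
\begin{equation}
\Delta_2|E_x|\le\tfrac43|X_z|,\qquad \Delta_1|E_q|\le\tfrac43|X_z| ,
\end{equation}
which is comfortably within the constant $8$. I will nonetheless re-derive the termination bound via the standard route in case the unique-neighbour count needs the weaker form $(1-2\epsilon)$ applied to the full sum. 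That route is: if $w$ is large compared to $|X_z|$, some vertex in $E_x\cup E_q$ has more than half its edges to unique violated neighbours, so the while loop cannot have ended. Either route gives the constant.

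\textbf{Step 3 (combine).} Then
\begin{equation}
|A^TE_x+E_q|\le\Delta_1'|E_x|+|E_q| ,
\end{equation}
and substituting the bounds from Step 2 gives the claim. The main obstacle is Step 1. Controlling the trajectory requires that the greedy flips never push $E_x\cup E_q$ beyond the sizes $\eta_2 m,\eta_1 n$ where the Z-graph expansion applies, and it is the mixed degrees that complicate the ``first exit'' argument. The first thing I would try is to phrase the whole argument with the potential $w$ and to bound the initial syndrome by $w_0+|X_z|$. I would then choose $\alpha,\beta,\gamma$ so that $(1-2\epsilon)\min(\Delta_2\eta_2 m,\Delta_1\eta_1 n)$ exceeds $2(\Delta_2\beta m+\Delta_1\alpha n+\gamma m)$.
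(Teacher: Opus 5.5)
Your Steps 1 and 3 are the paper's argument. Step 1 is the first-exit contradiction between $|\hat\sigma_z|\ge(1-2\epsilon)w-|X_z|$ at the edge of the expansion regime and the strictly decreasing syndrome weight, which starts below $\Delta_1\alpha n+\Delta_2\beta m+\gamma m$. Step 3 is the triangle inequality using that each column of $A^T$ has weight $\Delta_1'$.

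The problem is the argument you lead with in Step 2. You assert that at termination $DE_x+BE_q=X_z$, i.e.\ $\hat\sigma_z=0$. That is false for the algorithm the lemma is about. The flip procedure stops when no message bit sees a strict majority of unsatisfied checks, and such a local minimum generally has $\hat\sigma_z\neq 0$. For example, take $X_x=X_q=0$ and $X_z=e_c$, a single corrupted check. Every bit adjacent to $c$ sees one unsatisfied check out of at least two, so nothing is flipped and the procedure stops with $\hat\sigma_z=X_z\neq 0$. Read literally, the pseudocode's condition $\tilde\sigma_z\neq\sigma_z$ would loop forever here, contradicting the termination proposition; the paper's proof accordingly reasons about the point ``when there is no message bit for the algorithm to flip''.

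The conclusion of that route also fails, not merely its premise. Take $X_x=0$, $X_q=e_v$ for one $v\in R_1$ with $\Delta_1$ even, and $X_z$ equal to half of the checks of $v$. Then $v$ sees exactly $\Delta_1/2$ unsatisfied checks. Lossless expansion of each pair $\{v,v'\}$ (with $\epsilon\le\frac18$ and $\Delta_2\ge\Delta_1$) stops any other bit from having a majority of its checks among those of $v$. So the algorithm halts at once with $w=\Delta_1=2|X_z|$, violating your $\frac34 w\le|X_z|$. Hence ``either route gives the constant'' is wrong; only the second route applies.

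Your fallback is the right argument and is the paper's, up to bookkeeping. Make it the proof and state the threshold explicitly:
\begin{itemize}
\item At a halting state, Step 1 keeps $E_x,E_q$ in the expansion regime, so $E_x\cup E_q$ has at least $(1-2\epsilon)w$ unique neighbours.
\item At least $(1-2\epsilon)w-|X_z|$ of these are unsatisfied, and each is joined to $E_x\cup E_q$ by exactly one edge.
\item If this count exceeds $\frac{w}{2}=\frac12\sum_{v\in E_x\cup E_q}\deg v$, averaging yields a bit with a strict majority of unsatisfied checks, which would be flipped.
\item Hence at halting $(\frac12-2\epsilon)w\le|X_z|$, so $w\le 4|X_z|$ for $\epsilon\le\frac18$, and Step 3 gives the lemma with $4$ in place of $8$.
\end{itemize}
The paper counts $u+s\ge(1-\epsilon)w$ and $u+2s\le w+|X_z|$, and then requires $u-|X_z|>\frac w2$. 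That discounts $|X_z|$ twice, giving the weaker threshold $(\frac14-\epsilon)w>|X_z|$ and the constant $8$. With the first route deleted, your Steps 1--3 form a complete proof.
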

\begin{proof}
    The initial set of corrupt message bits is given by the vectors $X_q$ and $X_x$, and the set of corrupt check bits is given by the vector $X_z$. Over the course of the algorithm, we flip message bits, and the set of corrupt message bits changes as we do so. In doing so, we develop approximations $\tilde{X}_q$ and $\tilde{X}_x$ to the true sets of corrupt message bits, $X_q$ and $X_x$. At a given step of the algorithm, we let the sets of remaining corrupt message bits correspond to vectors $\hat{X}_x \coloneq X_x +  \tilde{X}_x$ and $\hat{X}_q \coloneq X_q + \tilde{X}_q$. We let the set of unsatisfied checks at a given step of the algorithm be $\hat{\sigma}_z \coloneq D\cdot\hat{X}_x + B\cdot \hat{X}_q + X_z$. Finally, at a given step of the algorithm, we let $u$ be the number of unsatisfied checks, and we let $s$ be the number of satisfied checks neighbouring a corrupt message bit.

    We may always choose $\alpha \leq \eta_1$ and $\beta \leq \eta_2$. We begin by showing that if, at some step of the algorithm,
    \begin{align}
        |\hat{X}_q| &< \eta_1n\label{eq:X_seq_step_assumption_1}\\
        |\hat{X}_x| &< \eta_2m\label{eq:X_seq_step_assumption_2}\\
        |\hat{X}_q|\left(\frac{1}{4}-\epsilon\right)\Delta_1+|\hat{X}_x|\left(\frac{1}{4}-\epsilon\right)\Delta_2 &>|X_z|.\label{eq:X_seq_step_assumption_3}
    \end{align}then there is a message bit for the algorithm to flip, i.e., the algorithm has not finished. Using the first two of these conditions, we have, using the expansion of the graph $G$, that
    \begin{equation}
        u+s \geq (1-\epsilon)\Delta_1\cdot|\hat{X}_q| + (1-\epsilon)\Delta_2\cdot|\hat{X}_x|.
    \end{equation}
    On the other hand, satisfied checks with corrupt message bit neighbours must neighbour at least two corrupt message bits, or themselves be corrupt. Thus,
    \begin{equation}
        \Delta_1|\hat{X}_q| + \Delta_2|\hat{X}_x| + |X_z| \geq u+2s.
    \end{equation}
    Together, these equations yield
    \begin{align}\label{eq:X_seq_u_bound}
        u &\geq (1-2\epsilon)\Delta_1\cdot|\hat{X}_q| + (1-2\epsilon)\Delta_2\cdot|\hat{X}_x| - |X_z|.
    \end{align}
    Combining this with Equation~\eqref{eq:X_seq_step_assumption_3} gives
    \begin{equation}
        u > \frac{\Delta_1}{2}|\hat{X}_q| + \frac{\Delta_2}{2}|\hat{X}_x| + |X_z|.
    \end{equation}
    This equations tell us that there is some message bit neighbouring more unsatisfied check bits than satisfied check bits, which the algorithm could then flip.

    Now, since the number of unsatisfied checks decreases at every step, the algorithm must terminate, and so there must come a point when there is no message bit for the algorithm to flip. This means that at some step, at least one of the assumptions of Equations~\eqref{eq:X_seq_step_assumption_1} to~\eqref{eq:X_seq_step_assumption_3} must not hold. We aim to show that it is the last of the three. For a contradiction, suppose it is the first. Since $|\hat{X}_q|$ changes in steps of at most one over the course of the algorithm, for Equation~\eqref{eq:X_seq_step_assumption_1} to be violated, it must be the case at some step that $|\hat{X}_q| = \eta_1n$. Noting that we can still apply the expansion properties of the graph, Equation~\eqref{eq:X_seq_u_bound} still holds, and we have
    \begin{align}
        u &\geq (1-2\epsilon)\Delta_1\eta_1n + (1-2\epsilon)\Delta_2\cdot|\hat{X}_x| - |X_z|\\
        &\geq (1-2\epsilon)\Delta_1\eta_1n - \gamma m.\label{eq:X_seq_contra_1}
    \end{align}
    On the other hand, initially, we have
    \begin{align}
        u &\leq \Delta_1|X_q| + \Delta_2|X_x| + |X_z|\\
        &< \Delta_1\alpha n + \Delta_2\beta m + \gamma m,\label{eq:X_seq_reduction_u_upper}
    \end{align}
    and because the number of unsatisfied checks decreases at each step, we have $u < \Delta_1\alpha n + \Delta_2\beta m + \gamma m$ at every step. We have a contradiction with Equation~\eqref{eq:X_seq_contra_1} by choosing small enough constants $\alpha, \beta, \gamma$.

    One can similarly show that the Equation~\eqref{eq:X_seq_step_assumption_2} cannot be violated, because it implies a step at which $|\hat{X}_x| = \eta_2m$, where one would have
    \begin{align}
        u &\geq (1-2\epsilon)\Delta_1\eta_1n + (1-2\epsilon)\Delta_2\cdot |\hat{X}_x| - |X_z|\\
        &\geq (1-2\epsilon)\Delta_2\eta_2m - \gamma m,
    \end{align}
    as well as Equation~\eqref{eq:X_seq_reduction_u_upper}, from which we derive a contradiction given small enough constants $\alpha, \beta, \gamma$.

    We have found that the algorithm must terminate, and when it does, the Equation~\eqref{eq:X_seq_step_assumption_3} must be violated. At this point, the remaining message corruptions have size
    \begin{align}
        |\hat{X}_q| &\leq \frac{|X_z|}{(1/4-\epsilon)\Delta_1}\label{eq:X_message_remaining_size}\\
        |\hat{X}_x| &\leq \frac{|X_z|}{(1/4-\epsilon)\Delta_2}.
    \end{align}
    The size of the error on our estimate of $X_{Res}$ is $|X_{Res} + \tilde{X}_{Res}|$, which, by a triangle inequality, is at most
    \begin{equation}
        |A^T\cdot \hat{X}_x| + |\hat{X}_q| \leq \frac{n}{m}\Delta_1\frac{|X_z|}{(1/4-\epsilon)\Delta_2}+\frac{|X_z|}{(1/4-\epsilon)\Delta_1} \leq\frac{n}{m}\Delta_1\frac{8|X_z|}{\Delta_2} + \frac{8|X_z|}{\Delta_1}.
    \end{equation}   

\end{proof}

\subsubsection{Reducing the Z Errors}\label{sec:sequential_Z_reduction_algos}

Initially, reducing the $Z$ errors seems to be more problematic than reducing the $X$ errors. The input to the $Z$ error reduction is the $X$ syndrome
\begin{equation}
    \sigma_x = Z_x + A\cdot Z_q + C\cdot Z_z,
\end{equation}
and we aim to reduce the residual $Z$ error
\begin{equation}
    Z_{Res} = Z_q + B^T\cdot Z_z,
\end{equation}
where we have
\begin{equation}
    C = AB^T+D^T.
\end{equation}
We have direct control over the matrices $A, B$ and $D$, but are obliged to set $C$ according to this equation. Thus, a direct reduction of the $Z$ errors $Z_q$ and $Z_z$ (in order to calculate a reduction of $Z_{Res}$) using the matrices $A$ and $C$ (in analogy to what we do for the $X$ errors) seems difficult. However, we are able to treat the $Z$ errors in a fundamentally different way to the $X$ errors. We re-write
\begin{align}
    \sigma_x &= Z_x + A\cdot Z_q + (AB^T+D^T)\cdot Z_z\\
    &= Z_x + A\cdot (Z_q + B^T\cdot Z_z) + D^T Z_z\\
    &= Z_x + A\cdot Z_{Res} + D^T\cdot Z_z.
\end{align}
We find that the error that we ultimately want to reduce, $Z_{Res}$, in this case sits directly under the matrix $A$ in the syndrome $\sigma_x$. This means that the reduction of the $Z$ errors is also writeable in the form of a classical error reduction problem. The difference in this case is that we will directly reduce the errors $Z_{Res}$ and $Z_z$, instead of reducing $Z_q$ and $Z_z$ before calculating a reduced $Z_{Res}$. Of course, this requires the matrices $A$ and $D^T$ to correspond to a classical error-reduction code, such as a lossless expander, but this is exactly what is given to us by the structure of the lossless $Z$-graph.

\begin{algorithm}[t]
\caption{Sequential Error-Reduction Algorithm for $Z$ Errors}\label{alg:Z_sequential_reduction}
\begin{algorithmic}[1] 

\Require Lossless $Z$-Graph $G$; \hspace*{0.1cm} Syndrome $\sigma_x = Z_x + A\cdot Z_q + C\cdot Z_z = Z_x + A\cdot Z_{Res} + D^T\cdot Z_z$
\Ensure Approximation to the residual $Z$ error, $\tilde{Z}_{Res}$
\State $\tilde{Z}_z \gets 0$
\State $\tilde{Z}_{Res} \gets 0$
\State $\tilde{\sigma}_x \gets 0$
\While{$\tilde{\sigma}_x \neq \sigma_x$}

    \State $\hat{\sigma}_x \gets \sigma_x + \tilde{\sigma}_x$
    \State If a bit in $\tilde{Z}_z$ (associated to the vertices $L_2$) or in $\tilde{Z}_{Res}$ (associated to the vertices $L_1$) is in contact (via the graph $G$) with more $1$'s than $0$'s in $\hat{\sigma}_x$ (associated to the vertices $R_2$), then flip it.
    \State $\tilde{\sigma}_x \gets A\cdot \tilde{Z}_{Res} + D^T\cdot \tilde{Z}_z$
\EndWhile

\end{algorithmic}
\end{algorithm}

In Algorithm~\ref{alg:Z_sequential_reduction}, we formally present the sequential error-reduction algorithm for the $Z$ errors, although it is the same bit-flipping procedure as Algorithm~\ref{alg:X_sequential_reduction}, albeit applied to different vectors, and without the final calculation step. The proof of runtime and correctness of the algorithm are also essentially the same as for the $X$ errors, and so we omit and abridge them as follows.
\begin{proposition}
    The sequential error-reduction algorithm for $Z$ errors terminates, and runs in linear time.
\end{proposition}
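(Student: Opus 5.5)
The proof follows the earlier proposition for the $X$ errors, applied to the bipartite graph between $L_1 \cup L_2$ (carrying the unknowns $\tilde{Z}_{Res}$ and $\tilde{Z}_z$) and $R_2$ (carrying the syndrome bits). The potential function is the Hamming weight of the current unsatisfied-check vector
\[
\hat{\sigma}_x = \sigma_x + \tilde{\sigma}_x = Z_x + A\cdot(Z_{Res}+\tilde{Z}_{Res}) + D^T\cdot(Z_z+\tilde{Z}_z),
\]
a vector in $\mathbb{F}_2^m$, so its weight is initially at most $m$. The argument has three steps.

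First, I show that every executed flip strictly decreases $|\hat{\sigma}_x|$. A bit in $\tilde{Z}_{Res}$ corresponds to a vertex of $L_1$, whose $\Delta_1$ neighbours lie in $R_2$ and are the rows of $A$ in that column. A bit in $\tilde{Z}_z$ corresponds to a vertex of $L_2$, whose $\Delta_2$ neighbours in $R_2$ are the rows of $D^T$ in that column. Flipping such a bit toggles exactly the syndrome bits at its neighbours in $R_2$. The algorithm only flips a bit with more $1$'s than $0$'s among those neighbours, so $|\hat{\sigma}_x|$ drops by at least one. This is an integer that is at least $0$, so there are at most $m$ flips and the while loop terminates. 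When no flippable bit remains, the loop condition $\tilde{\sigma}_x \neq \sigma_x$ must be handled as in the $X$ case. I would read the loop as stopping when no flip is available, exactly as the $X$ proof implicitly does.

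Second, I bound the cost of each iteration by a constant, following Lemma 9 of~\cite{sipser1996expander}. The graph $G$ has maximum degree $\max(\Delta_1, \Delta_1', \Delta_2)$, which is a constant. I maintain, for each vertex in $L_1\cup L_2$, a counter of unsatisfied neighbours, together with a list of all vertices whose counter exceeds half their degree. A flip changes at most $\Delta_2$ syndrome bits. Each changed syndrome bit updates at most $\Delta_1'+\Delta_2$ counters, so a flip costs $O(1)$ operations to keep the counters and the list current. Initialisation costs $O(n+m)$, since computing $\sigma_x$'s counters touches each edge once.

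Third, I combine these: there are $O(m)$ flips at $O(1)$ gates each, plus $O(n+m)$ setup, which gives a linear total. The main obstacle is not deep. It lies only in the bookkeeping of step two: checking that the degree-bounded update argument of~\cite{sipser1996expander} carries over even though vertices in $R_2$ have two kinds of neighbours with different degrees $\Delta_1'$ and $\Delta_2$. Both degrees are constants, so the argument goes through unchanged.
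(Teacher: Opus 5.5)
Your proof is correct and is essentially the paper's argument. The paper handles this proposition by deferring to the $X$-error case: each flip strictly lowers the syndrome weight, which gives termination and linearly many flips, and the incremental bookkeeping of Lemma~9 of \cite{sipser1996expander} gives constant work per flip. Your reading of the loop as halting once no flippable bit remains is the same reading the paper makes implicitly; note also that a flip toggles at most $\max(\Delta_1,\Delta_2)$ syndrome bits rather than $\Delta_2$, which changes nothing.
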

\begin{lemma}\label{lem:Z_seq_reduction}
     Suppose that we have a quantum error-reduction code built from an $(n,m,\Delta_1, \Delta_2, \eta_1, \eta_2, \epsilon_1, \epsilon_2)$ lossless $Z$-graph, where $\epsilon = \epsilon_1 = \epsilon_2 \leq\frac{1}{8}$. Suppose that the $Z$ errors are such that
    \begin{align}
        |Z_q| &< \alpha n\\
        |Z_z| &< \beta m\\
        |Z_x| &< \gamma m.
    \end{align}
    Then, there exist small enough constants $\alpha, \beta, \gamma$ such that sequential error-reduction algorithm for $Z$ errors produces an approximation to the residual $Z$ error, $\tilde{Z}_{Res}$, such that
    \begin{equation}
        |Z_{Res} + \tilde{Z}_{Res}| \leq \frac{|Z_x|}{(1/4-\epsilon)\Delta_1} \leq\frac{8|Z_x|}{\Delta_1}.
    \end{equation}
\end{lemma}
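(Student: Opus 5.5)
The plan is to mirror the proof of Lemma~\ref{lem:X_seq_reduction}, treating the rewritten syndrome $\sigma_x = Z_x + A\cdot Z_{Res} + D^T\cdot Z_z$ as a classical error-reduction instance. The message bits are $Z_{Res}$, placed on $L_1$, and $Z_z$, placed on $L_2$. The check bits are the errors $Z_x$, placed on $R_2$, and the check positions also carry the syndrome. Expansion here is the first (left-to-$R_2$) expansion property of Definition~\ref{def:lossless_Z-graphs}: vertices of $L_1$ have degree $\Delta_1$ into $R_2$ via $A$, and vertices of $L_2$ have degree $\Delta_2$ into $R_2$ via $D^T$.

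\textbf{Initial sizes.} First bound the initial message errors. We have $|Z_{Res}| \le |Z_q| + |B^T Z_z| \le \alpha n + \Delta_1'\beta m = (\alpha + \Delta_1\beta) n$, since columns of $B^T$ have weight $\Delta_1' = \frac{n}{m}\Delta_1$. So by choosing $\alpha,\beta$ small relative to $\eta_1,\eta_2$ and $\Delta_1$, the initial errors satisfy $|Z_{Res}| < \eta_1 n$ and $|Z_z| < \eta_2 m$, with room to spare. This is the one genuinely new point compared with the $X$ case: the reduced error is a spread error, so its initial size must be controlled through the sparsity of $B$.

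\textbf{Flipping step.} At each step, let $\hat Z_{Res} = Z_{Res} + \tilde Z_{Res}$ and $\hat Z_z = Z_z + \tilde Z_z$. Let $u$ be the number of unsatisfied checks and $s$ the number of satisfied checks in $R_2$ adjacent to corrupt message bits. Suppose $|\hat Z_{Res}| < \eta_1 n$, $|\hat Z_z| < \eta_2 m$, and
\[
(\tfrac14-\epsilon)\bigl(\Delta_1|\hat Z_{Res}| + \Delta_2|\hat Z_z|\bigr) > |Z_x|.
\]
Expansion gives $u + s \ge (1-\epsilon)(\Delta_1|\hat Z_{Res}| + \Delta_2|\hat Z_z|)$. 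Counting edges gives $\Delta_1|\hat Z_{Res}| + \Delta_2|\hat Z_z| + |Z_x| \ge u + 2s$. Together these yield $u > \tfrac12(\Delta_1|\hat Z_{Res}| + \Delta_2|\hat Z_z|) + |Z_x|$. Hence some corrupt message bit sees more unsatisfied than satisfied checks, and the algorithm continues.

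\textbf{No escape from the small-set regime.} Next show that neither size bound can be the first to fail. Since sizes change by one per flip, failure would force a step with $|\hat Z_{Res}| = \eta_1 n$ (or $|\hat Z_z| = \eta_2 m$). At that step the bound $u \ge (1-2\epsilon)\Delta_1\eta_1 n - \gamma m$ holds (respectively with $\Delta_2\eta_2 m$). But $u$ is monotonically decreasing, so $u$ is at most its initial value $\Delta_1(\alpha+\Delta_1\beta)n + \Delta_2\beta m + \gamma m$. Choosing $\alpha,\beta,\gamma$ small enough gives a contradiction.

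\textbf{Conclusion.} Termination follows because $u$ strictly decreases. At termination the third condition must therefore fail, giving $|\hat Z_{Res}| \le |Z_x|/((\tfrac14-\epsilon)\Delta_1) \le 8|Z_x|/\Delta_1$ using $\epsilon \le \tfrac18$. Since $\hat Z_{Res} = Z_{Res} + \tilde Z_{Res}$, this is the claimed bound.

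The main obstacle is only the bookkeeping in the initial-size step, making the constants depend correctly on $\Delta_1$ and $n/m$. The rest is identical to Lemma~\ref{lem:X_seq_reduction}, and simpler, since no final multiplication by $A^T$ is needed.
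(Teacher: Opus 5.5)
Your proposal is correct and matches the paper's proof. The paper likewise bounds the initial residual error by $|Z_{Res}| \le |Z_q| + \frac{n}{m}\Delta_1 |Z_z|$ using the column weight of $B^T$, and then reruns the argument of Lemma~\ref{lem:X_seq_reduction} with $Z_{Res}$, $Z_z$, $Z_x$ in the roles of $X_q$, $X_x$, $X_z$, reading the final bound off the violated third condition at termination; you simply write out in full the expansion-and-counting steps that the paper invokes only by analogy.
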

\begin{proof}
    By a triangle inequality, the initial residual $Z$ error has size $|Z_{Res}| \leq |Z_q| + \Delta_1\frac{n}{m}|Z_z|$. Therefore, by taking small enough constants $\alpha, \beta, \gamma$, the proof of Lemma~\ref{lem:X_seq_reduction} applies with the analogy explained above. Specifically, $Z_{Res}$ here plays the role of $X_q$ there, $Z_z$ here plays the role of $X_x$ there, and $Z_x$ here plays the role of $X_z$ there. When the algorithm terminates, the remaining residual $Z$ error has the claimed size; see Equation~\eqref{eq:X_message_remaining_size}.
\end{proof}

\subsection{Parallel Error-Reduction Algorithms}\label{sec:parallel_erc_algos}

We now turn to our parallel error-reduction algorithms for our quantum error-reduction codes constructed from lossless $Z$-graphs.
\subsubsection{Parallel Reduction of X Errors}

Just as with the quantum encoding circuits, the quantum unencoding circuits may be performed via a constant-depth quantum circuit with a linear number of quantum gates. Given $X$ errors as before, the $Z$-check syndrome that is measured is
\begin{equation}
    \sigma_z = D\cdot X_x + B\cdot X_q + X_z,
\end{equation}
and the residual $X$ error that we aim to reduce is
\begin{equation}
    X_{Res} = A^T\cdot X_x + X_q.
\end{equation}
Given that the graph corresponding to the matrix $(D|B)$ is a lossless expander, we may use a parallel small-set flip decoding algorithm, as Spielman does~\cite{spielman1995linear}, but here to reduce the errors $X_x$ and $X_q$, before calculating a reduction of $X_{Res}$. We will face the added complication that our lossless expander must be particularly strong in order to get a great enough reduction to overcome the problem of error spreading; in particular, we will require our lossless $Z$-graph to have $\epsilon_i = \mathcal{O}(1/\Delta_i)$. This can be obtained in the randomised construction, although it is not known how to obtain explicit two-sided expanders with this property~\cite{HLMRZ25}, and so we also do not obtain explicit lossless $Z$-graphs with this property. This is one key reason that our parallel algorithms are limited to the randomised construction. 

Our parallel error-reduction algorithm for $X$ errors is described formally in Algorithm~\ref{alg:X_parallel_reduction_algorithm}.

\begin{algorithm}[t]
\caption{Parallel Error-Reduction Algorithm for $X$ Errors}\label{alg:X_parallel_reduction_algorithm}
\begin{algorithmic}[1] 

\Require Lossless $Z$-Graph $G$; \hspace*{0.1cm} Syndrome $\sigma_z = D \cdot X_x + B \cdot X_q + X_z$; \hspace*{0.1cm}
\Ensure $\tilde{X}_{Res}$: approximation to residual $X$ error $X_{Res} = A^T\cdot X_x + X_q$
\State $\tilde{X}_x \gets 0$
\State $\tilde{X}_q \gets 0$
\State In parallel, for every bit in $\tilde{X}_x$ (associated to the vertices $R_2$) or in $\tilde{X}_q$ (associated to the vertices $R_1$), if the bit is in contact (via the graph $G$) with more $1$'s than $0$'s in $\sigma_z$ (associated to the vertices $L_2$), then flip it.
\State $\tilde{X}_{Res} \gets A^T\cdot \tilde{X}_x + \tilde{X}_q$

\end{algorithmic}
\end{algorithm}

\begin{proposition}
    The parallel error-reduction algorithm for $X$ errors may be run in constant depth and with a linear total number of gates.
\end{proposition}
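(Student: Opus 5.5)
The plan is to read Algorithm~\ref{alg:X_parallel_reduction_algorithm} as a constant number of steps, each of which is a local computation on the bounded-degree graph $G$. The whole argument rests on the sparsity of the lossless $Z$-graph. Every vertex of $R_1$ has degree $\Delta_1$ into $L_2$. Every vertex of $R_2$ has degree $\Delta_2$ into $L_2$. Every vertex of $L_1$ has degree $\Delta_1$ into $R_2$, and every vertex of $R_2$ has $\Delta_1' = \frac{n}{m}\Delta_1$ neighbours in $L_1$. All of these are constants, since $n/m$ is fixed by the rate of the code.

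First, the initialisations of $\tilde{X}_x$ and $\tilde{X}_q$ to zero cost depth zero (or one) and $O(n+m)$ gates.

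Second, the parallel flip step is handled bit by bit.
\begin{itemize}
\item Each bit of $\tilde{X}_x$ (a vertex of $R_2$) must decide whether more than half of its $\Delta_2$ neighbours in $L_2$ carry a $1$ in $\sigma_z$. This is a fixed Boolean function (a majority) of $\Delta_2$ input bits. Because $\Delta_2$ is a constant, it is computed by a circuit of constant size and constant depth.
\item In the same way, each bit of $\tilde{X}_q$ (a vertex of $R_1$) evaluates a majority over $\Delta_1$ input bits.
\item All $m+n$ of these circuits run simultaneously, so this step has constant depth and $O((m+n)\max(\Delta_1,\Delta_2)) = O(n+m)$ gates.
\end{itemize}
Fan-out of each syndrome bit is a potential concern, but it is also bounded. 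Each vertex of $L_2$ feeds $\Delta_2 + \Delta_1'$ majority circuits, which is a constant, so any bounded fan-out model adds only constant depth.

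Third, the final computation of $\tilde{X}_{Res} = A^T\tilde{X}_x + \tilde{X}_q$ is done coordinatewise. The coordinate of $\tilde{X}_{Res}$ at a vertex $j \in L_1$ is the XOR of the $\Delta_1$ bits $\tilde{X}_x$ at the neighbours of $j$ in $R_2$, together with the bit $(\tilde{X}_q)_j$. This is a parity of $\Delta_1+1$ bits, computable by a balanced XOR tree of depth $O(\log \Delta_1) = O(1)$ and size $O(\Delta_1)$. Over all $n$ coordinates this gives constant depth and $O(n)$ gates.

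Summing the three stages gives constant depth and a linear total number of gates, since $m = \Theta(n)$. I do not expect a genuine obstacle here. The only point needing care is to state explicitly that every degree involved, including $\Delta_1'$, is a constant independent of the block length. That fact is what keeps each majority and XOR gadget of constant size and makes the fan-out bounded.
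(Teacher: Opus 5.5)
Your proof is correct and matches what the paper intends. The paper states this proposition without proof, relying (as in the sequential case, via Sipser--Spielman) on the constant degrees of the lossless $Z$-graph: the flip step is then a constant-size majority at each vertex of $R_1 \cup R_2$, and $\tilde{X}_{Res} = A^T\tilde{X}_x + \tilde{X}_q$ is a constant-size parity at each coordinate, which is exactly your argument.
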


\begin{lemma}\label{lem:parallel_X_reduction}
    Suppose that we have a quantum error-reduction code built from an $(n,m,\Delta_1, \Delta_2, \eta_1, \eta_2, \epsilon_1, \epsilon_2)$ lossless $Z$-graph, where $\epsilon_1, \epsilon_2 \leq \frac{1}{8}, \epsilon_1< \frac{2}{\Delta_1}$ and $\epsilon_2< \frac{2}{\Delta_2}$. Suppose that 
    \begin{align}
        |X_q| &\leq \alpha n\\
        |X_x| &\leq \beta m\\
        |X_z| &\leq \gamma m.
    \end{align}
    Then, there exist small enough constants $\alpha, \beta$ and $\gamma$ such that at the end of the parallel error-reduction algorithm for $X$ errors, we have the size of remaining errors
    \begin{align}\label{eq:X_q_parallel_reduction}
        |X_q + \tilde{X}_q| &< \frac{32}{\Delta_1}\max\left(|X_q|, |X_x| + |X_z|\right)\\
        |X_x + \tilde{X}_x| &< \frac{32}{\Delta_2}\max\left(|X_q|, |X_x| + |X_z|\right).\label{eq:X_x_parallel_reduction}
    \end{align}
    In addition, the size of the remaining residual $X$ error is
    \begin{equation}
        |X_{Res} + \tilde{X}_{Res}| < 32\left(\frac{n}{m}\frac{\Delta_1}{\Delta_2} + \frac{1}{\Delta_1}\right)\max\left(|X_q|, |X_x| + |X_z|\right).
    \end{equation}
\end{lemma}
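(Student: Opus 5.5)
The argument is a single-round, Spielman-style parallel flip analysis, adapted to the mixed degrees of the lossless $Z$-graph. Write $E_q \subseteq R_1$ and $E_x \subseteq R_2$ for the true message-bit errors $X_q, X_x$, and $E_z \subseteq L_2$ for the check-bit error $X_z$. The parallel step flips every message vertex with a majority of unsatisfied neighbours in $\sigma_z$. After this step the remaining error splits into two parts:
\begin{itemize}
\item $F$, the corrupt vertices in $E_q\cup E_x$ that were \emph{not} flipped;
\item $W$, the uncorrupted vertices that \emph{were} flipped.
\end{itemize}
I will bound $|F\cap R_1|, |W\cap R_1|$ by $\frac{c}{\Delta_1}\max(\cdot)$ and $|F\cap R_2|, |W\cap R_2|$ by $\frac{c}{\Delta_2}\max(\cdot)$. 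The residual bound then follows exactly as at the end of Lemma~\ref{lem:X_seq_reduction}, via
\[
|X_{Res}+\tilde X_{Res}| \le |A^T(X_x+\tilde X_x)| + |X_q+\tilde X_q|,
\]
using that the columns of $A^T$ (rows of $A$) have weight $\Delta_1' = \frac{n}{m}\Delta_1$.

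\textbf{Unflipped corrupt vertices.} I take $\alpha\le\eta_1$ and $\beta\le\eta_2$, so that the expansion of $G$ (from $R_1\cup R_2$ into $L_2$) applies to $E_q\cup E_x$. Call a check in $L_2$ \emph{unique} if it has exactly one neighbour in $E_q\cup E_x$. The standard counting argument with $\epsilon_1,\epsilon_2$ shows that the number of non-unique neighbour edges is at most $2\epsilon_1\Delta_1|E_q|+2\epsilon_2\Delta_2|E_x|$. This quantity is $<4(|E_q|+|E_x|)$ by the hypotheses $\epsilon_i<2/\Delta_i$, and this is precisely where those hypotheses enter: they make the loss an absolute $O(|E|)$ rather than $\epsilon\Delta|E|$.

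A corrupt vertex that is not flipped has at least half its $\Delta_i$ neighbours satisfied. Such a satisfied neighbour is either non-unique, or unique but corrupted by $E_z$. Counting each contributes a weight $\Delta_i/2$ per vertex in $F$, so
\[
\tfrac{\Delta_1}{2}|F\cap R_1| + \tfrac{\Delta_2}{2}|F\cap R_2| \le 4(|E_q|+|E_x|) + |E_z|.
\]
This gives $|F\cap R_i| \le \frac{c}{\Delta_i}\,(|X_q|+|X_x|+|X_z|)$.

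\textbf{Wrongly flipped vertices.} This is the main obstacle. A vertex in $W$ has more than $\Delta_i/2$ unsatisfied neighbours, each of which lies in $N(E)$ or in $E_z$. I apply the joint expansion to $E\cup W'$, where $W'$ is a subset of $W$ chosen with $|W'\cap R_i|$ small enough to keep the total within $\eta$. The union has at least $(1-\epsilon_1)\Delta_1|(E\cup W')\cap R_1| + (1-\epsilon_2)\Delta_2|(E\cup W')\cap R_2|$ neighbours. On the other hand, each vertex of $W'$ has at most $\Delta_i/2 - (\text{its neighbours in } E_z)$ neighbours outside $N(E)\cup E_z$. Comparing the two counts gives
\[
\Big(\tfrac12-\epsilon\Big)\big(\Delta_1|W'\cap R_1|+\Delta_2|W'\cap R_2|\big) \le 2\epsilon_1\Delta_1|E_q|+2\epsilon_2\Delta_2|E_x|+|E_z| \le 4|E|+|E_z|.
\]
The contradiction argument on the size of $W'$ then shows that $W$ itself is small, again yielding $|W\cap R_i| \le \frac{c}{\Delta_i}(|E|+|E_z|)$.

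The delicate bookkeeping is the weighted mixed-degree version of this argument. If the weighted sum is below the threshold, the choice of $W'=W$ is legitimate; otherwise a prefix of $W$ contradicts the inequality. The condition $|E|,|E_z|\le$ small constants times $n$ and $m$ guarantees that the threshold is never reached. Finally, $|X_q|+|X_x|+|X_z| \le 2\max(|X_q|,|X_x|+|X_z|)$, and tracking the constants from $\epsilon\le 1/8$ gives the factor $32$. Note that the $\Delta_i$ in the denominators comes from dividing a weight $\Delta_i$ per vertex into an absolute $O(|E|+|E_z|)$ budget.
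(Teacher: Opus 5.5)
Your proof follows the paper's argument. Both use the same split of the final error into unflipped corrupt bits (the paper's $F_1,F_2$) and wrongly flipped clean bits (the paper's $C_1,C_2$), the same truncation-and-contradiction step to stay inside the expansion regime, and the same comparison of the joint-expansion lower bound with $|N_{L_2}(E\cup W)| < |N_{L_2}(E)|+|E_z|+\frac{\Delta_1}{2}|W\cap R_1|+\frac{\Delta_2}{2}|W\cap R_2|$. Your bound on $F$ counts edges of $E$ into non-unique checks. This is a slightly more direct form of the paper's step, which takes unique neighbours of $F_1\cup F_2$ and notes that the satisfied ones must collide with $V_1\cup V_2$. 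The paper then folds the resulting bound on $|N_{L_2}(V_1\cup V_2)|$ into the inequality for $C_1,C_2$.

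Three details need repair.

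First, you claim a vertex of $W'$ has at most $\Delta_i/2-(\text{its neighbours in }E_z)$ neighbours outside $N(E)\cup E_z$. This is false, since the right side can be negative. The correct statement is ``fewer than $\Delta_i/2$ edges,'' and the additive $|E_z|$ already accounts for $E_z$.

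Second, the size hypothesis in the lossless $Z$-graph definition is coordinatewise, not a weighted threshold. So truncate $W\cap R_1$ and $W\cap R_2$ separately. Each saturated coordinate then contradicts your inequality once $\alpha,\beta,\gamma$ are small.

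Third, the constant $32$ does not follow from the inequalities as you display them. With right-hand side $4|E|+|E_z|$ in the $W$ bound, your two bounds sum to $\frac{1}{\Delta_i}\bigl(\frac{56}{3}|E|+\frac{14}{3}|E_z|\bigr)$, which only gives $\frac{112}{3}$. Your comparison actually yields the sharper right-hand side $\epsilon_1\Delta_1|E_q|+\epsilon_2\Delta_2|E_x|+|E_z|<2|E|+|E_z|$. There is no factor $2$ on the $\epsilon$ terms, because $|N_{L_2}(E)|\le\Delta_1|E_q|+\Delta_2|E_x|$. Hence $|W\cap R_i|<\frac{8}{3\Delta_i}(2|E|+|E_z|)$. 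Together with $|F\cap R_i|<\frac{2}{\Delta_i}(4|E|+|E_z|)$, the total is at most $\frac{1}{\Delta_i}\bigl(\frac{40}{3}|E|+\frac{14}{3}|E_z|\bigr)\le\frac{80}{3\Delta_i}\max(|X_q|,|X_x|+|X_z|)$. This is within the claimed $32$.
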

\begin{proof}
    Throughout, we imagine errorful bits as subsets of $R_1, R_2$ and $L_2$, and the remaining syndrome as a subset of $L_2$.  We denote the sets of errorful bits as $V_1 \subseteq R_1$ (the support of $X_q$), $V_2 \subseteq R_2$ (the support of $X_x$), and $T \subseteq L_2$ (the support of $X_z$). We further denote the sets $F_1 \subseteq V_1$ and $F_2 \subseteq V_2$ as the sets of errorful bits that fail to flip in the algorithm, as well as the sets $C_1 \subseteq R_1$ and $C_2 \subseteq R_2$ as the sets of bits that are not errorful, but get flipped at the end of the algorithm. The aim then simply becomes to upper bound $|C_1 \cup F_1|$, the number of errorful bits in $R_1$ at the end of the algorithm, and $|C_2 \cup F_2|$, the number of errorful bits in $R_2$ at the end of the algorithm.

    We may take $\alpha < \eta_1$ and $\beta < \eta_2$. We will start by showing that $|V_1 \cup C_1| < \eta_1 n$ and $|V_2 \cup C_2| < \eta_2 m$, assuming that $\alpha, \beta$ and $\gamma$ are small enough constants. First, suppose that $|V_1 \cup C_1| \geq \eta_1 n$ for a contradiction. Pick a subset $C_1' \subseteq C_1$ such that $|V_1 \cup C_1'| = \eta_1 n$. We know that more than $\frac{\Delta_1}{2}$ of the edges leaving every vertex in $C_1'$ land on a corrupt check bit (a bit in $T$) or a bit in $L_2$ taking input from a bit in $V_1$ or $V_2$. This gives us
    \begin{align}
        |N_{L_2}(V_1 \cup C_1' \cup V_2)| < |N_{L_2}(V_1 \cup V_2)| + |T| + \frac{\Delta_1}{2}|C_1'| &= |N_{L_2}(V_1 \cup V_2)| + |T| + \frac{\Delta_1}{2}(\eta_1n - |V_1|)\\
        &\leq \Delta_1|V_1| + \Delta_2|V_2| + |T| + \frac{\Delta_1}{2}(\eta_1n-|V_1|).
    \end{align}
    On the other hand, using the expansion property, we have
    \begin{align}
        |N_{L_2}(V_1 \cup C_1' \cup V_2)| &\geq (1-\epsilon_1)\Delta_1|V_1 \cup C_1'| + (1-\epsilon_2)\Delta_2|V_2|\\
        &=(1-\epsilon_1)\Delta_1\eta_1n + (1-\epsilon_2)\Delta_2|V_2|.
    \end{align}
    Putting these together gives us
    \begin{equation}
        \left(\frac{1}{2}-\epsilon_1\right)\Delta_1\eta_1n < \frac{\Delta_1}{2}|V_1| + \epsilon_2\Delta_2|V_2| + |T| \leq \frac{\Delta_1}{2}\alpha n + \epsilon_2\Delta_2\beta m + \gamma m.
    \end{equation}
    One may choose small enough constants $\alpha, \beta, \gamma$ such that this equation yields a contradiction. Similar methods show that if $|V_2 \cup C_2| \geq \eta_2m$, then one can derive a contradiction given small enough $\alpha, \beta, \gamma$. Given that we now have $|V_1 \cup C_1| < \eta_1n$ and $|V_2 \cup C_2| < \eta_2m$, we have
    \begin{multline}\label{eq:VC_bound}
        (1-\epsilon_1)\Delta_1|V_1 \cup C_1| + (1-\epsilon_2)\Delta_2|V_2 \cup C_2| \leq |N_{L_2}(V_1 \cup V_2 \cup C_1 \cup C_2)| < \\|N_{L_2}(V_1 \cup V_2)| + |T| + \frac{\Delta_1}{2}|C_1| + \frac{\Delta_2}{2}|C_2|,
    \end{multline}
    where the latter inequality follows from the same considerations as above on the neighbours of $C_i$.

    We next turn to the sets $F_1$ and $F_2$. Since $|F_1| < \eta_1n$ and $|F_2| < \eta_2m$, we can apply the expansion to $F_1 \cup F_2$, and furthermore it is quick to show that the set of unique neighbours of $F_1 \cup F_2$ in $L_2$, that is, the set of nodes in $L_2$ with exactly one neighbour in $F_1 \cup F_2$, denoted $N_{L_2}^*(F_1 \cup F_2)$, has size
    \begin{equation}
        |N_{L_2}^*(F_1 \cup F_2)| \geq (1-2\epsilon_1)\Delta_1|F_1| + (1-2\epsilon_2)\Delta_2|F_2|.
    \end{equation}
    By definition of $F_1$ and $F_2$, we must have that at least $\frac{\Delta_1}{2}|F_1| + \frac{\Delta_2}{2}|F_2|$ edges leaving $F_1 \cup F_2$ land on satisfied checks. Therefore, there are at least
    \begin{equation}
        \left(\frac{1}{2}-2\epsilon_1\right)\Delta_1|F_1| + \left(\frac{1}{2}-2\epsilon_2\right)\Delta_2|F_2| - |T|
    \end{equation}
    edges leaving $F_1 \cup F_2$ landing on satisfied checks, which have no other neighbours in $F_1 \cup F_2$, and that are not themselves corrupt. Each of these must therefore have another neighbour in $V_1 \cup V_2$. This implies a lower bound on the number of collisions in $L_2$ of edges leaving $V_1 \cup V_2$, and thus that
    \begin{equation}
        |N_{L_2}(V_1 \cup V_2)| \leq \Delta_1|V_1| + \Delta_2|V_2| - \left(\frac{1}{4}-\epsilon_1\right)\Delta_1|F_1| - \left(\frac{1}{4}-\epsilon_2\right)\Delta_2|F_2| + \frac{|T|}{2}.
    \end{equation}
    Combining this with Equation~\eqref{eq:VC_bound} gives us
    \begin{equation}
        \left(\frac{1}{2}-\epsilon_1\right)\Delta_1|C_1| + \left(\frac{1}{2}-\epsilon_2\right)\Delta_2|C_2| + \left(\frac{1}{4}-\epsilon_1\right)\Delta_1|F_1| + \left(\frac{1}{4}-\epsilon_2\right)\Delta_2|F_2| < \epsilon_1\Delta_1|V_1| + \epsilon_2\Delta_2|V_2| + \frac{3}{2}|T|,
    \end{equation}
    which amounts to an upper bound on the number of errorful message bits at the end of the algorithm in terms of the number of errorful message and check bits at the beginning of the algorithm.

   Now, at the end of the algorithm, the number of errors remaining in $R_1$ is
    \begin{align}
        |C_1| + |F_1| < \frac{\epsilon_1\Delta_1|X_q| + \epsilon_2\Delta_2|X_x| + (3/2)|X_z|}{(1/4-\epsilon_1)\Delta_1} &\leq \frac{8\epsilon_1\Delta_1|X_q| + 8\epsilon_2\Delta_2|X_x| + 12|X_z|}{\Delta_1}\\
        &<\frac{16|X_q| + 16(|X_x| + |X_z|)}{\Delta_1}.
    \end{align}
    Similarly, the number of errors remaining in $R_2$ is
    \begin{align}
        |C_2| + |F_2| < \frac{\epsilon_1\Delta_1|X_q| + \epsilon_2\Delta_2|X_x| + (3/2)|X_z|}{(1/4-\epsilon_2)\Delta_2} &\leq \frac{8\epsilon_1\Delta_1|X_q| +8\epsilon_2\Delta_2|X_x| + 12|X_z|}{\Delta_2}\\
        &< \frac{16|X_q| + 16(|X_x| + |X_z|)}{\Delta_2}.
    \end{align}
    One can see that, at the end of the round, the number of errors remaining in $R_i$ is less than
    \begin{equation}
        \frac{32}{\Delta_i}\max\left(|X_q|, |X_x| + |X_z|\right)
    \end{equation}
    for $i = 1, 2$. Concretely, this means that
    \begin{align}
        |X_q + \tilde{X}_q| &< \frac{32}{\Delta_1}\max\left(|X_q|, |X_x| + |X_z|\right)\\
        |X_x + \tilde{X}_x| &< \frac{32}{\Delta_2}\max\left(|X_q|, |X_x| + |X_z|\right).
    \end{align}
    Using that $|A^Tv| \leq \Delta_1\frac{n}{m}v$ for any vector $v$, and by a triangle inequality, we have the claimed bound on the remaining residual $X$ error $X_{Res} + \tilde{X}_{Res}$.
\end{proof}

\subsubsection{Parallel Reduction of Z Errors}

There is a further complication present for the parallel reduction of $Z$ errors. As before, we have the measured $X$-syndrome
\begin{equation}
    \sigma_x = Z_x + A\cdot Z_q + C\cdot Z_z = Z_x + A\cdot Z_{Res} + D^T\cdot Z_z
\end{equation}
and we aim to reduce the residual $Z$ error
\begin{equation}
    Z_{Res} = Z_q + B^T\cdot Z_z.
\end{equation}
Naively, we would attempt to perform the same parallel reduction to directly calculate an approximation to $Z_{Res}$. The problem with this is that, if one does so, one ends up with (loosely speaking)
\begin{equation}
    |Z_{Res} + \tilde{Z}_{Res}| < \frac{32}{\Delta_1}\max\left(|Z_{Res}|, |Z_z| + |Z_x|\right).
\end{equation}
This is problematic because we can only estimate $|Z_{Res}| \leq |Z_q| + \Delta_1\frac{n}{m}|Z_z|$, and in the worst case, our estimate would look something like
\begin{equation}
    |Z_{Res} + \tilde{Z}_{Res}| < 32\frac{n}{m}|Z_z|,
\end{equation}
which is not giving us error reduction at all.

To remedy this, we perform two rounds of the parallel error reduction. First, we attempt only to get an approximation to the error $Z_z$, and in doing so we get a reduction of this error by a factor $\sim \Delta_2$. We can use this to get an initial approximation to $Z_{Res}$ via $B^T\cdot Z_z$. Using this, we can then perform parallel error reduction on $Z_{Res}$. Formally, the algorithm for reducing the $Z$ errors in parallel is written in Algorithm~\ref{alg:Z_parallel_reduction}.

\begin{algorithm}[ht]
\caption{Parallel Error-Reduction Algorithm for $Z$ Errors}\label{alg:Z_parallel_reduction}
\begin{algorithmic}[1] 

\Require Lossless $Z$-Graph $G$; \hspace*{0.1cm} Syndrome $\sigma_x = Z_x + A\cdot Z_q + C\cdot Z_z = Z_x + A\cdot Z_{Res} + D^T\cdot Z_z$
\Ensure $\tilde{Z}_{Res}$: approximation to the residual $Z$ error $Z_{Res} = Z_q + B^T\cdot Z_z$
\State $\tilde{Z}_z \gets 0$
\State In parallel, for every bit in $\tilde{Z}_z$ (associated to the vertices in $L_2$), if the bit is in contact (via the graph $G$) with more $1$'s than $0$'s in $\sigma_x$ (associated to the vertices $R_2$), then flip it.
\State $\tilde{Z}_{Res} \gets B^T \cdot \tilde{Z}_z$
\State $\hat{\sigma}_x \gets \sigma_x + AB^T\cdot \tilde{Z}_z + D^T\cdot \tilde{Z}_z = \sigma_x + C\cdot \tilde{Z}_z$
\State In parallel, for every bit in $\tilde{Z}_{Res}$ (associated to the vertices in $L_1$), if the bit is in contact (via the graph $G$) with more $1$'s than $0$'s in $\hat{\sigma}_x$ (associated to the vertices $R_2$), then flip it.

\end{algorithmic}
\end{algorithm}
\begin{proposition}
    The parallel error-reduction algorithm for $Z$ errors may be run in constant depth and with a linear total number of gates.
\end{proposition}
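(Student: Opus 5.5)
The plan is to count the gates and the depth of each of the five lines of Algorithm~\ref{alg:Z_parallel_reduction} separately, using only the sparsity of the lossless $Z$-graph $G$. Every vertex of $G$ has constant degree: vertices in $L_1,R_1$ have degree $\Delta_1$, and vertices in $L_2,R_2$ have degree $\Delta_1'+\Delta_2$ with $\Delta_1'=\frac{n}{m}\Delta_1$. Since $\frac{n}{m}$ is a constant fixed by the rate, this degree is constant. Consequently every row and column of $A$, $B$, $D$ (and their transposes) has constant weight. The whole argument is therefore the same as for the earlier parallel $X$ algorithm, applied to a constant number of sequential stages.

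\textbf{Line 2 (first flip).} Each bit of $\tilde{Z}_z$, indexed by $v\in L_2$, depends only on the $\Delta_2$ bits of $\sigma_x$ at its $R_2$-neighbours. A majority of a constant number of bits is computed by a constant-size, constant-depth circuit. Doing this for all $m$ bits in parallel costs $O(m)$ gates and $O(1)$ depth.

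\textbf{Line 3.} We compute $\tilde{Z}_{Res}=B^T\tilde{Z}_z$. Each output bit, for $u\in L_1$ (equivalently, indexed like $R_1$ via $B$), is the XOR of the at most $\Delta_1$ entries of $\tilde{Z}_z$ in its support. This is again constant fan-in, so it costs $O(n)$ gates and $O(1)$ depth.

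\textbf{Line 4.} We compute $\hat\sigma_x=\sigma_x+A(B^T\tilde{Z}_z)+D^T\tilde{Z}_z$. Here we reuse $w=B^T\tilde{Z}_z$ from line 3 and never form $C$ explicitly.
\begin{itemize}
\item Each entry of $Aw$ is an XOR of $\Delta_1'$ entries of $w$.
\item Each entry of $D^T\tilde{Z}_z$ is an XOR of $\Delta_2$ entries of $\tilde{Z}_z$.
\end{itemize}
Each output bit of $\hat\sigma_x$ is thus the XOR of a constant number of values, costing $O(m)$ gates and $O(1)$ depth.

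\textbf{Line 5 (second flip).} Each bit of $\tilde{Z}_{Res}$, for a vertex in $L_1$, looks at its $\Delta_1$ neighbours in $R_2$ via $A$ and takes a majority, then flips conditionally. This costs $O(n)$ gates and $O(1)$ depth.

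\textbf{Conclusion.} The algorithm is a sequence of four constant-depth, linear-size layers, so the total depth is $O(1)$ and the total gate count is $O(n+m)=O(N)$.

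\textbf{Main obstacle.} The only point needing care is line 4. If one computed $C\tilde{Z}_z$ using $C=AB^T+D^T$ directly, $C$ could in principle be less sparse. It is still sparse, with row weight at most $\Delta_1'\Delta_1+\Delta_2$, but the cleaner route is to factor the product through the already-computed $w$. Either way the fan-in stays constant, and this is the step I would write out explicitly.
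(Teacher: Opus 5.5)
Your proposal is correct and is essentially the paper's argument. The paper states this proposition without a written proof and relies implicitly on the constant degrees of the lossless $Z$-graph, as in the sequential case; your layer-by-layer count of constant fan-in majority and XOR computations makes that argument explicit. You also handle the one point worth recording, computing $C\tilde{Z}_z$ as $A(B^T\tilde{Z}_z)+D^T\tilde{Z}_z$ or bounding the row weight of $C$ by $\Delta_1'\Delta_1+\Delta_2$, correctly.
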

\begin{lemma}\label{lem:parallel_Z_reduction}
    Suppose that we have a quantum error-reduction code built from an $(n,m,\Delta_1, \Delta_2, \eta_1, \eta_2, \epsilon_1, \epsilon_2)$ lossless $Z$-graph, where $\epsilon_1, \epsilon_2 \leq \frac{1}{8}, \epsilon_1< \frac{2}{\Delta_1}$ and $\epsilon_2< \frac{2}{\Delta_2}$, and where
    \begin{equation}\label{eq:imbalanced_Deltas_condition}
        \frac{128\Delta_1'^2}{\Delta_2} \leq 1,
    \end{equation}
    for $\Delta_1' = \frac{n}{m}\Delta_1$. Suppose that
    \begin{align}
        |Z_q| &\leq \alpha n\\
        |Z_z| &\leq \beta m\\
        |Z_x| &\leq \gamma m.
    \end{align}
    Then, there exist small enough constants $\alpha, \beta$ and $\gamma$ such that at the at the end of the parallel error-reduction algorithm for $Z$ errors, the size of the remaining residual $Z$ error is
    \begin{equation}
        |Z_{Res} + \tilde{Z}_{Res}| <\frac{128}{\Delta_1}\max\left(|Z_q|, |Z_x|+|Z_z|\right).
    \end{equation}
\end{lemma}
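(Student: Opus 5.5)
The proof has two rounds, mirroring the two parallel flip steps of Algorithm~\ref{alg:Z_parallel_reduction}, and each round reuses the argument of Lemma~\ref{lem:parallel_X_reduction} applied to the ``left-to-right'' expansion of the lossless $Z$-graph. Write the syndrome as
\[
\sigma_x = Z_x + A\cdot Z_{Res} + D^T\cdot Z_z .
\]
Here $Z_{Res}$ lives on $L_1$, $Z_z$ on $L_2$, the check-bit error $Z_x$ on $R_2$, and the syndrome is indexed by $R_2$. This is exactly the structure of Lemma~\ref{lem:parallel_X_reduction} with $(R_1,R_2,L_2)$ replaced by $(L_1,L_2,R_2)$. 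It uses the first expansion property of Definition~\ref{def:lossless_Z-graphs}, namely $N_{R_2}(S_1\cup S_2)$ with $S_1\subseteq L_1$ and $S_2\subseteq L_2$.

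\textbf{Round one.} Only bits of $L_2$ are flipped. I would rerun the proof of Lemma~\ref{lem:parallel_X_reduction} with the $L_1$ error $V_1=\mathrm{supp}(Z_{Res})$ held fixed and never flipped (so $C_1=F_1=\emptyset$), treating it as additional noise. The $C_2$ and $F_2$ counting then goes through unchanged and gives
\[
|Z_z+\tilde Z_z| \lesssim \frac{1}{(1/4-\epsilon_2)\Delta_2}\left(\epsilon_1\Delta_1|Z_{Res}|+\epsilon_2\Delta_2|Z_z|+\tfrac32|Z_x|\right).
\]
Since $\epsilon_1\Delta_1<2$, $\epsilon_2\Delta_2<2$ and $|Z_{Res}|\le |Z_q|+\Delta_1'|Z_z|$, this is at most roughly $\frac{16}{\Delta_2}\big(|Z_q|+(\Delta_1'+1)|Z_z|+|Z_x|\big)$. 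One point needs checking first. The smallness hypotheses $|V_1|\le\eta_1 n$ and $|V_2\cup C_2|<\eta_2 m$ must still hold. They do, because $|Z_{Res}|\le \alpha n+\Delta_1'\beta m$, and we take $\alpha,\beta,\gamma$ small relative to $\eta_1,\eta_2$ and $\Delta_1'$.

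\textbf{Round two.} After the update $\hat\sigma_x=\sigma_x+C\tilde Z_z$, the syndrome becomes
\[
\hat\sigma_x = Z_x + A\cdot(Z_{Res}+B^T\tilde Z_z) + D^T\cdot(Z_z+\tilde Z_z).
\]
Note that $Z_{Res}+B^T\tilde Z_z = Z_q + B^T(Z_z+\tilde Z_z)$. The new $L_1$ error $Z'_{Res}$ therefore has weight at most $|Z_q|+\Delta_1'|Z_z+\tilde Z_z|$, and the new $L_2$ error $Z'_z=Z_z+\tilde Z_z$ is small by round one. Now only $L_1$ bits are flipped, with $L_2$ bits held fixed. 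The symmetric counting, now on $C_1$ and $F_1$, gives
\[
|Z'_{Res}+\tilde Z'_{Res}| < \frac{8}{\Delta_1}\left(\epsilon_1\Delta_1|Z'_{Res}|+\epsilon_2\Delta_2|Z'_z|+\tfrac32|Z_x|\right) \le \frac{16}{\Delta_1}\left(|Z'_{Res}|+|Z'_z|+|Z_x|\right).
\]

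\textbf{Combining the rounds.} Substituting the round-one bounds, the dangerous term is $\Delta_1'|Z'_z|\lesssim \frac{16\Delta_1'(\Delta_1'+1)}{\Delta_2}\big(|Z_q|+|Z_z|+|Z_x|\big)$. Condition~\eqref{eq:imbalanced_Deltas_condition}, $128\Delta_1'^2\le\Delta_2$, makes this at most a small constant times $\max(|Z_q|,|Z_x|+|Z_z|)$. Hence $|Z'_{Res}|+|Z'_z|+|Z_x|\le c\max(|Z_q|,|Z_x|+|Z_z|)$ for a constant $c$ around $4$, and the final bound is $\frac{128}{\Delta_1}\max(\cdot)$.

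\textbf{Main obstacle.} The hard step is adapting the $F$/$C$ counting of Lemma~\ref{lem:parallel_X_reduction} to the case where one side is frozen. In that case the fixed error on the other side contributes collisions. I expect this is absorbed exactly as the term $\epsilon_i\Delta_i|V_i|$ arises in equation~\eqref{eq:VC_bound}, via the joint expansion applied to $V_1\cup V_2\cup C_2$. The remainder is constant bookkeeping to land within the factor $128$.
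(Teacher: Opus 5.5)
Your proposal is correct and follows essentially the paper's route: two rounds, each an instance of the Lemma~\ref{lem:parallel_X_reduction} analysis, combined via $|Z_{Res}|\le|Z_q|+\Delta_1'|Z_z|$ and $128\Delta_1'^2\le\Delta_2$. The first round reads off the $L_2$-side bound on $Z_z+\tilde Z_z$, and the second reads off the $L_1$-side bound on the updated residual $Z_q+B^T(Z_z+\tilde Z_z)$. The ``frozen side'' obstacle you flag is not a real difficulty: in the one-shot parallel flip, each bit's decision depends only on the syndrome, so the set of bits flipped on the active side is identical to that in the two-sided algorithm. This is why the paper can simply cite Lemma~\ref{lem:parallel_X_reduction} as a black box, and its bounds coincide with the ones you re-derive by rerunning the counting.
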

We comment that, for the parallel reduction of $Z$ errors, we require two properties of the lossless $Z$-graph that we may obtain from our randomised construction but not from our explicit construction. The first is the same as in the parallel reduction of $X$ errors, that is, that the expansion of strong enough to achieve $\epsilon_i = O(1/\Delta_i)$. The second is that, in the randomised construction, we may obtain the lossless $Z$-graphs for any pair of integers $\Delta_1, \Delta_2$, which will allow us to fulfil the condition of Equation~\eqref{eq:imbalanced_Deltas_condition}. In our explicit construction of lossless $Z$-graphs, $\Delta_1$ and $\Delta_2$ are chosen to have some constant imbalance, and then taken to be large enough numbers, thus making this sort of condition hard to satisfy.
\begin{proof}[Proof of Lemma~\ref{lem:parallel_Z_reduction}]
    Given small enough constants $\alpha, \beta$ and $\gamma$, the proof of Lemma~\ref{lem:parallel_X_reduction} applies to the first round of reduction with $Z_z$ in place of $X_x$, $Z_{Res}$ in place of $X_q$ and $Z_x$ in place of $X_z$. However, we emphasise that in this first round we are only attempting to reduce the error $Z_z$. By Lemma~\ref{lem:parallel_X_reduction}, in particular Equation~\eqref{eq:X_x_parallel_reduction} at the end of the first round of reduction, we have an approximation $\tilde{Z}_z$ to the error $Z_z$ such that the remaining error $\hat{Z}_z = Z_z + \tilde{Z}_z$ satisfies
    \begin{equation}
        |\hat{Z}_z| < \frac{32}{\Delta_2}\max\left(|Z_{Res}|, |Z_z| + |Z_x|\right).
    \end{equation}
    The second round of reduction then runs the same algorithm but with $Z_z$ replaced with $\hat{Z}_z$, $Z_{Res}$ replaced with $\hat{Z}_{Res} = Z_q + B^T\cdot \hat{Z}_z$, and the syndrome adjusted accordingly. Lemma~\ref{lem:parallel_X_reduction} applies once more, assuming $\alpha, \beta$ and $\gamma$ are sufficiently small, and by Equation~\eqref{eq:X_q_parallel_reduction}, the resultant correction reduces the residual $Z$ error to a size less than
    \begin{align}
        \frac{32}{\Delta_1}\max\left(|\hat{Z}_{Res}|, |\hat{Z}_z| + |Z_x|\right)
        &= \frac{32}{\Delta_1}\max\left(|Z_q + B^T\cdot (Z_z+\tilde{Z}_z)|, |\hat{Z}_z| + |Z_x|\right)\\
        &\leq \frac{32}{\Delta_1}\max\left(|Z_q| + \Delta_1'|Z_z+\tilde{Z}_z|, |Z_z+\tilde{Z}_z| + |Z_x|\right)\\
        &\leq \frac{128}{\Delta_1}\max\left(|Z_q|, |Z_x|, \Delta_1'|Z_z + \tilde{Z}_z|\right)\\
        &<\frac{128}{\Delta_1}\max\left(|Z_q|, |Z_x|, \frac{32\Delta_1'}{\Delta_2}\max\left(|Z_{Res}|, |Z_z| + |Z_x|\right)\right)\\
        &\leq\frac{128}{\Delta_1}\max\left(|Z_q|, |Z_x|, \frac{32\Delta_1'}{\Delta_2}\max\left(|Z_q| + \Delta_1'|Z_z|, |Z_z| + |Z_x|\right)\right)\\
        &\leq\frac{128}{\Delta_1}\max\left(|Z_q|, |Z_x|, \frac{128\Delta_1'}{\Delta_2}\max\left(|Z_q|, \Delta_1'|Z_z|, |Z_x|\right)\right)\\
        &=\frac{128}{\Delta_1}\max\left(|Z_q|, |Z_x|, \frac{128\Delta_1'^2}{\Delta_2}|Z_z|\right)\\
        &\leq\frac{128}{\Delta_1}\max\left(|Z_q|, |Z_x|, |Z_z|\right)\\
        &\leq\frac{128}{\Delta_1}\max\left(|Z_q|, |Z_x|+|Z_z|\right).
    \end{align}
\end{proof}

\section{Lossless Z-Graphs}\label{sec:lossless_Z_graphs}

We now state our definition of lossless $Z$-graphs. In Section~\ref{sec:random_lossless_Z_graphs}, we will prove their existence via a randomised procedure. In Section~\ref{sec:explicit_lossless_Z_graphs}, we will construct them explicitly.

\begin{definition}\label{def:main_lossless_Z-graphs}[Lossless $Z$-Graphs]
    An $(n, m, \Delta_1, \Delta_2, \eta_1, \eta_2, \epsilon_1, \epsilon_2)$ lossless $Z$-graph is a bipartite graph with the following properties. The left vertices are partitioned into two sets called $L_1$ and $L_2$, which have sizes $n$ and $m$, respectively. Similarly, the right vertices are partitioned into sets of size $R_1$ and $R_2$, which have sizes $n$ and $m$, respectively. The subgraphs obtained by restriction to $(L_1, R_2)$ and $(R_1, L_2)$ are $(\Delta_1, \Delta_1')$-biregular graphs, where $\Delta_1' = \frac{n}{m}\Delta_1$, and the subgraph obtained by restriction to $(L_2, R_2)$ is a $\Delta_2$-regular (bipartite) graph. There are no edges between vertices in $L_1$ and $R_1$.

    Given subsets $S_1 \subseteq L_1$ and $S_2 \subseteq L_2$ of size $|S_1| \leq \eta_1 n$ and $|S_2| \leq \eta_2 m$, $S_1 \cup S_2$ has a large number of neighbours in $R_2$, in particular,
    \begin{equation}
        |N_{R_2}(S_1\cup S_2)| \geq (1-\epsilon_1)\Delta_1\cdot |S_1| + (1-\epsilon_2)\Delta_2\cdot |S_2|. 
    \end{equation}
    Similarly, given subsets $S_1 \subseteq R_1$ and $S_2 \subseteq R_2$ of size $|S_1| \leq \eta_1 n$ and $|S_2| \leq \eta_2m$, $S_1 \cup S_2$ has a large number of neighbours in $L_2$, namely,
    \begin{equation}
        |N_{L_2}(S_1\cup S_2)| \geq (1-\epsilon_1)\Delta_1\cdot |S_1| + (1-\epsilon_2)\Delta_2\cdot |S_2|.
    \end{equation}
\end{definition}

\subsection{Random Construction}\label{sec:random_lossless_Z_graphs}

We now prove the existence of lossless $Z$-graphs via a simple randomised procedure.

\begin{theorem}\label{thm:random_lossless_Z_graphs}

Fix $0 < \delta < 1$, integers $\Delta_1, \Delta_2 > 0$ and $C > 0$. Then, there exist $\eta_1, \eta_2 > 0$ such that, for all $n$ large enough, there exists an $(n,m,\Delta_1, \Delta_2, \eta_1, \eta_2, \epsilon_1, \epsilon_2)$ lossless $Z$-graph, where $\frac{n}{m} = C$, and $\epsilon_i = \frac{1+\delta}{\Delta_i}$ for $i = 1,2$.
    
\end{theorem}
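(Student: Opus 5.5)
We sample the graph from the configuration model. Fix vertex sets $L_1,L_2,R_1,R_2$ of sizes $n,m,n,m$. Give each vertex the required number of half-edges: $\Delta_1$ for each vertex of $L_1$ and $R_1$, $\Delta_1'$ towards $L_1$ (respectively $R_1$) for each vertex of $R_2$ (respectively $L_2$), and $\Delta_2$ towards each other for each vertex of $L_2$ and $R_2$. Then take three independent uniformly random perfect matchings of half-edges, for the pairs $(L_1,R_2)$, $(R_1,L_2)$ and $(L_2,R_2)$. Multi-edges can be tolerated, or removed at negligible cost, since they only reduce neighbourhoods and the counting below already bounds collisions. By the symmetry of the construction under swapping $(L_1,L_2,R_2)\leftrightarrow(R_1,R_2,L_2)$, it suffices to prove the $R_2$-expansion property with failure probability $o(1)$; a union bound over the two properties then finishes the proof.

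\textbf{Core estimate.} Fix $s_1=|S_1|$ and $s_2=|S_2|$, and write $D=\Delta_1 s_1+\Delta_2 s_2$ for the number of edges leaving $S_1\cup S_2$. Reveal these edges one at a time in the configuration model. Each edge lands on a uniformly random unused half-edge of $R_2$, so conditioned on the past it hits an already-touched vertex of $R_2$ with probability at most $\Delta_{\max}D/(\text{remaining half-edges}) = O(D/m)$, where $\Delta_{\max}=\Delta_1'+\Delta_2$. Expansion fails exactly when the number of such ``repeat'' edges exceeds $\epsilon_1\Delta_1 s_1+\epsilon_2\Delta_2 s_2=(1+\delta)(s_1+s_2)=:k$. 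Hence
\[
\Pr[\text{failure for }S_1,S_2]\le \binom{D}{k}\left(\frac{cD}{m}\right)^{k}.
\]
Multiplying by the $\binom{n}{s_1}\binom{m}{s_2}$ choices of sets, using $D\le \Delta_{\max}(s_1+s_2)$ and $n=Cm$, each term is at most
\[
\left(\frac{e n}{s_1}\right)^{s_1}\left(\frac{e m}{s_2}\right)^{s_2}\left(\frac{c'(s_1+s_2)}{m}\right)^{(1+\delta)(s_1+s_2)}.
\]
The key point is that $\epsilon_i\Delta_i=1+\delta$ is independent of $\Delta_i$, so the collision exponent exceeds the entropy exponent $s_1+s_2$ by a $\delta$ margin. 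Writing $s=s_1+s_2$ and bounding $(en/s_1)^{s_1}(em/s_2)^{s_2}\le (e(C+1)m/s)^{s}\cdot 2^{s}$ (by concavity of entropy), each term is at most $\bigl(c''(s/m)^{\delta}\bigr)^{s}$. Summing over $s\le(\eta_1 C+\eta_2)m$, with at most $s+1$ splittings per $s$, gives $o(1)$ once $\eta_1,\eta_2$ are small: the terms with $s\le\sqrt m$ contribute $O(m^{-\delta})$ in total, and the remaining terms are geometrically small.

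\textbf{Main obstacle.} The step I expect to be delicate is the imbalanced regime the paper flags, namely when $s_1\gg s_2$ or $s_2\gg s_1$. There the per-type entropy $\binom{n}{s_1}$ is paid against a collision budget that is dominated by the other type, and the constants $c'$ involve $\Delta_{\max}$, which is large relative to $\Delta_1$. The concavity bound above appears to handle this uniformly, because the budget $(1+\delta)(s_1+s_2)$ tracks the total size rather than the edge count. If the constants do not close, I would fall back on the reduction the paper sketches. First, prove separately that the bipartite graphs $(L_1,R_2)$ and $(L_2,R_2)$ are each lossless with some smaller error $\epsilon'$. Then, when $s_1/s_2$ exceeds a large constant $K$, use the fact that $\Delta_2 s_2$ is negligible compared with $\epsilon\Delta_1 s_1$: at most $\Delta_2 s_2\le(\delta/2)s_1$ neighbours can be ``stolen'', so it suffices that $S_1$ alone expands with slack $1+\delta/2$. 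The case $s_2/s_1>K$ is handled symmetrically. With $\eta_1,\eta_2$ chosen last, depending on $\delta,\Delta_1,\Delta_2,C$, this completes the proof.
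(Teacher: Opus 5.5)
Your proof is correct, and it takes a different and somewhat cleaner route than the paper at the key step.

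\textbf{What the paper does.} The paper also samples from the configuration model and uses the left/right symmetry. It then fixes a target set $T$ of size $t=\beta_1 s_1+\beta_2 s_2$ and bounds $\binom{n}{s_1}$ and $\binom{m}{s_2}$ separately. This produces factors $t/s_1$ and $t/s_2$, which are bounded only when $s_1/s_2$ is bounded above and below. That forces a three-case split. First, single-type lossless expansion of $S_1$ and of $S_2$ is proved with slack $\delta/2$. This is used in the two imbalanced regimes $s_1/s_2>\gamma^+$ and $s_1/s_2<\gamma^-$; that is essentially your fallback. The joint union bound is run only in the balanced regime.

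\textbf{What you do instead.} You count collisions by edge exposure and merge the two entropies via $\binom{n}{s_1}\binom{m}{s_2}\le\binom{n+m}{s_1+s_2}$. This makes the joint bound uniform in the ratio, so neither the case analysis nor the auxiliary single-type lemmas are needed. The reason it works is the one you identify: the collision budget $\epsilon_i\Delta_i=1+\delta$ is the same for both vertex types, so only the total $s=s_1+s_2$ enters.

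The same Vandermonde step would also remove the case split from the paper's fixed-$T$ argument. The bound becomes $\bigl[e(C+1)e^{t/s}(t/s)(t/m)^{\delta}\bigr]^{s}$ with $t/s\le\max_i\beta_i$. The paper's route is more modular, since it reuses the textbook one-type lossless-expander estimate, but it buys nothing beyond that.

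\textbf{Minor points.}
\begin{itemize}
\item Your per-edge collision estimate should be computed within each matching rather than with a combined $\Delta_{\max}$.
\begin{itemize}
\item For an edge leaving $S_1$, the probability is at most $D\Delta_1'/(\Delta_1(n-s_1))\le 2D/m$.
\item For an edge leaving $S_2$, it is at most $D\Delta_2/(\Delta_2(m-s_2))\le 2D/m$.
\end{itemize}
Your conclusion $O(D/m)$ is unaffected.
\item Deleting multi-edges would break the exact $(\Delta_1,\Delta_1')$- and $\Delta_2$-regularity that the definition demands. The standard fix is to condition on the three configurations being simple. For fixed degrees this has probability bounded below by a positive constant, while your failure probability is $O(m^{-\delta})$, so a simple good graph exists. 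The paper's proof is silent on this point as well.
\end{itemize}
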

\begin{proof}
    We sample a lossless $Z$-graph uniformly at random from the natural ensemble. That is, we consider the set of left vertices, split into $L_1$ of size $n$ and $L_2$ of size $m$. We also consider the right vertices, split into $R_2$ of size $m$ and $R_1$ of size $n$. All vertices in $L_1$ and $R_1$ start with $\Delta_1$ half edges going towards the vertices in $R_2$ and $L_2$, respectively. The vertices in $L_2$ and $R_2$ start with $\Delta_1'$ half-edges going towards the vertices in $R_1$ and $L_1$, respectively, as well as $\Delta_2$ half-edges going towards each other. The half-edges are then joined to each other at random.

    We will show that there exist $\eta_1, \eta_2 > 0$ such that, with high probability, given subsets $S_1 \subseteq R_1$ and $S_2 \subseteq R_2$ of sizes $s_1 \leq \eta_1n$ and $s_2 \leq \eta_2m$, $N_{L_2}(S_1 \cup S_2)$ is large. By symmetry, and by a union bound, we will obtain the same expansion statement for subsets $S_1 \subseteq L_1$ and $S_2 \subseteq L_2$. Concretely, we will show that
    \begin{equation}
        |N_{L_2}(S_1 \cup S_2)| \geq (\Delta_1-1-\delta)s_1 + (\Delta_2-1-\delta)s_2.
    \end{equation}

    Let $\tilde{\delta} = \frac{\delta}{2}$. For the benefit of the latter part of the proof, we will begin by simply showing that given $S_2 \subseteq R_2$ of size $s_2 \leq \tilde{\eta}_2m$, where $\tilde{\eta}_2$ is a constant depending only on $\Delta_2$ and $\delta$, the number of neighbours of $S_2$ in $L_2$, i.e., $N_{L_2}(S_2)$, is large. In particular, we will show that $|N_{L_2}(S_2)| \geq \tilde{\beta}_2 s_2$, where $\tilde{\beta}_2 = \Delta_2-1-\tilde{\delta}$, which is a standard statement of lossless expansion. Indeed, we have $|N_{L_2}(S_2)| \leq t$, where $t = \tilde{\beta}_2s_2$, if and only if $N_{L_2}(S_2) \subseteq T$ for $T \subseteq L_2$ some set of size $t$. Fixing sets $S_2$ and $T$ of these sizes, we have $N_{L_2}(S_2) \subseteq T$ with probability
    \begin{equation}
        \frac{(t\Delta_2)(t\Delta_2-1)\ldots(t\Delta_2-s_2\Delta_2+1)}{(m\Delta_2)(m\Delta_2-1)\ldots(m\Delta_2-s_2\Delta_2+1)} \leq \left(\frac{t}{m}\right)^{s_2\Delta_2}.
    \end{equation}
    By a union bound over sets $S_2$ of size $s_2$, over sets $T$ of size $t = \tilde{\beta}_2s_2$, and finally over sizes $s_2 = 1, \ldots, \tilde{\eta}_2m$, the probability of having a bad set $S_2$ is at most
    \begin{align}
        \sum_{s_2=1}^{\tilde{\eta}_2m}\begin{pmatrix}
            m\\ s_2
        \end{pmatrix}\begin{pmatrix}
            m\\ t
        \end{pmatrix}\left(\frac{t}{m}\right)^{s_2\Delta_2} &\leq \sum_{s_2=1}^{\tilde{\eta}_2m}\left(\frac{em}{s_2}\right)^{s_2}\left(\frac{em}{t}\right)^t\left(\frac{t}{m}\right)^{s_2\Delta_2}\\
        &=\sum_{s_2=1}^{\tilde{\eta}_2m}\left[e^{1+\tilde{\beta}_2}\tilde{\beta}_2\left(\frac{t}{m}\right)^{\tilde{\delta}}\right]^{s_2}\\
        &\leq\sum_{s_2=1}^{\tilde{\eta}_2m}\left[e^{1+\tilde{\beta}_2}\tilde{\beta}_2^{1+\tilde{\delta}}\tilde{\eta}_2^{\tilde{\delta}}\right]^{s_2}
    \end{align}
    For any $\Delta_2$ and $\delta$, we may take $\tilde{\eta}_2$ to be a small enough constant that the quantity in square brackets is smaller than $q$, for any $0 < q < 1$. The probability of a bad set $S_2$ is then upper bounded by $\sum_{s_2=1}^\infty q^{s_2}$, which may be made arbitrarily small.

    The same considerations show that, with high probability, given any set $S_1 \subseteq R_1$ of size $s_1 \leq \tilde{\eta}_1n$, where $\tilde{\eta}_1$ is a constant depending only on $\Delta_1$, $\delta$ and $\frac{n}{m}$, $|N_{L_2}(S_1)| \geq \tilde{\beta}_1s_1$, where $\tilde{\beta}_1 = \Delta_1-1-\tilde{\delta}$.

    We now turn to the main statement on the joint expansion of sets $S_1 \subseteq R_1$ and $S_2 \subseteq R_2$ of sizes $s_1 \leq \eta_1n$ and $s_2 \leq \eta_2m$, respectively. We will show this in three cases, which will then all hold simultaneously by a union bound. The three cases will be
    \begin{enumerate}
        \item $\frac{s_1}{s_2} > \gamma^+$;
        \item $\frac{s_1}{s_2} < \gamma^-$;
        \item $\gamma^- \leq \frac{s_1}{s_2} \leq \gamma^+$.
    \end{enumerate}
    Here, $\gamma^\pm$ are constants defined via the equations
    \begin{equation}
        \frac{1}{\gamma^-\left(\frac{\tilde{\beta}_1}{\tilde{\beta}_2}\right)+1} = \frac{1}{\frac{1}{\gamma^+}\left(\frac{\tilde{\beta}_2}{\tilde{\beta}_1}\right)+1} = 1-\frac{\tilde\delta}{\max(\Delta_1, \Delta_2)-1},
    \end{equation}
    where one may check that $\gamma^- < \gamma^+$.

    For the first case of $\frac{s_1}{s_2} > \gamma^+$, we may use the expansion of sets $S_1 \subseteq R_1$ alone to show the expansion of the set $S_1 \cup S_2$. Indeed, the set $S_1 \cup S_2$ has at least as many neighbours in $L_2$ as $S_1$ alone does, which we know to be at least $\tilde{\beta}_1s_1$ (we may take $\eta_1 \leq \tilde{\eta}_1$). We then have, for any $\lambda \in (0,1)$,
    \begin{align}
        \tilde{\beta}_1s_1 &= \lambda\tilde{\beta}_1s_1 + (1-\lambda)\tilde{\beta}_1s_1\\
        &> \lambda\tilde{\beta}_1s_1 + (1-\lambda)\frac{\tilde{\beta}_1}{\tilde{\beta}_2}\gamma^+\tilde{\beta}_2s_2.
    \end{align}
    We may now choose $\lambda$ to be the solution of $\lambda = (1-\lambda)\frac{\tilde{\beta}_1}{\tilde{\beta}_2}\gamma^+$, which is
    \begin{equation}
        \lambda = \frac{1}{\frac{1}{\gamma^+}\left(\frac{\tilde{\beta}_2}{\tilde{\beta}_1}\right)+1} = 1-\frac{\tilde{\delta}}{\max(\Delta_1,\Delta_2)-1}.
    \end{equation}
    We find that the number of neighbours of $S_1 \cup S_2$ in $L_2$ is at least
    \begin{equation}
        \left(1-\frac{\tilde{\delta}}{\max(\Delta_1, \Delta_2)-1}\right)(\tilde{\beta}_1s_1+\tilde{\beta}_2s_2),
    \end{equation}which is turn is at least
    \begin{multline}
        \left(1-\frac{\tilde{\delta}}{\Delta_1-1}\right)\left(1-\frac{\tilde{\delta}}{\Delta_1-1}\right)(\Delta_1-1)s_1+\left(1-\frac{\tilde{\delta}}{\Delta_2-1}\right)\left(1-\frac{\tilde{\delta}}{\Delta_2-1}\right)(\Delta_2-1)s_2
        >       \beta_1s_1 + \beta_2s_2,
    \end{multline}
    where $\beta_1 = \Delta_1-1-\delta$ and $\beta_2 = \Delta_2-1-\delta$.

    For the second case, that of $\frac{s_1}{s_2} < \gamma^-$, the expansion of sets $S_2 \subseteq R_2$ alone provides the desired statement, via essentialy the same argument that we have just given. 
    
    It remains to treat the third case, that of $\gamma^- \leq \frac{s_1}{s_2} \leq \gamma^+$. Let $t = \beta_1s_1 + \beta_2s_2$ and fix sets $S_1 \subseteq R_1$, $S_2 \subseteq R_2$ and $T \subseteq L_2$ of sizes $s_1, s_2$ and $t$, respectively. We have that $N_{L_2}(S_1 \cup S_2) \subseteq T$ with probability
    \begin{equation}
        \frac{(t\Delta_2)(t\Delta_2-1)\ldots (t\Delta_2-s_2\Delta_1+1)}{(m\Delta_2)(m\Delta_2-1)\ldots(m\Delta_2-s_2\Delta_2+1)}\cdot\frac{(t\Delta_1')(t\Delta_1'-1)\ldots (t\Delta_1'-s_1\Delta_1+1)}{(m\Delta_1')(m\Delta_1'-1)\ldots(m\Delta_1'-s_1\Delta_1+1)} \leq \left(\frac{t}{m}\right)^{s_1\Delta_1+s_2\Delta_2}.
    \end{equation}
    We now union bound over sets $T$ of size $t$, $S_1$ of size $s_1$, and $S_2$ of size $s_2$, and finally over positive integers $s_1$ and $s_2$ such that $s_1 \leq \eta_1n$, $s_2 \leq \eta_2m$ and $\gamma^- \leq \frac{s_1}{s_2}\leq \gamma^+$. One finds that the probability of bad sets $S_1$ and $S_2$ is at most
    \begin{equation}
        \sum_{s_1, s_2}\left[e^{1+\beta_1}\frac{n}{m}\frac{t}{s_1}\left(\frac{t}{m}\right)^\delta\right]^{s_1}\left[e^{1+\beta_2}\frac{t}{s_2}\left(\frac{t}{m}\right)^\delta\right]^{s_2},
    \end{equation}
    where the range of the summation is as described. We may upper bound this quantity as
    \begin{equation}
        \sum_{s_1, s_2}\left[e^{1+\beta_1}\frac{n}{m}\left(\beta_1+\frac{\beta_2}{\gamma^-}\right)\left(\frac{t}{m}\right)^\delta\right]^{s_1}\left[e^{1+\beta_2}\left(\beta_1\gamma^++\beta_2\right)\left(\frac{t}{m}\right)^\delta\right]^{s_2},
    \end{equation}
    which we may further upper bound as
    \begin{equation}
        \sum_{s_1=1}^{\eta_1n}\sum_{s_2=1}^{\eta_2m}\left[e^{1+\beta_1}\frac{n}{m}\left(\beta_1+\frac{\beta_2}{\gamma^-}\right)\left(\frac{t}{m}\right)^\delta\right]^{s_1}\left[e^{1+\beta_2}\left(\beta_1\gamma^++\beta_2\right)\left(\frac{t}{m}\right)^\delta\right]^{s_2}.
    \end{equation}
    Recalling that $t = \beta_1s_2+\beta_2s_2$, we see that by taking $\eta_1, \eta_2$ to be sufficiently small, the quantities in both the square brackets may be made small, and summing the geometric series as before leads to a small quantity.
\end{proof}

\subsection{Explicit Construction}\label{sec:explicit_lossless_Z_graphs}

We now prove the existence of explicit constructions of lossless $Z$-graphs with slightly weaker properties than the randomized construction of Theorem~\ref{thm:random_lossless_Z_graphs}.

\begin{theorem} \label{thm:Z-lossless}
    For any $\eps > 0$, $\alpha > 0$, and $\beta_1, \beta_2 \in \bbN$, there exist $k = k(\eps) \in \bbN$ and $d_0 = d_0(\eps, \alpha, \beta_1, \beta_2)$ such that for all integers $d_1, d'_1, d_2 \ge d_0$ with $\beta_1 d_1 = \beta_2 d'_1$ and $d_2 \in (1 \pm 0.001) \alpha \cdot d_1$, there are $\eta = \eta(k, d_1, d'_1, d_2) > 0$, $q_0, D' \in \bbN$, and an explicit family $\{ \calG_q \}_{q \in \calQ}$ of $(n_q, m_q, \Delta_1, \Delta_2, \eta, \eta, \eps, \eps)$ lossless Z-graphs indexed by $\calQ = \{ q : q \text{ is prime}, q \equiv 1 \text{ mod } 4, q > q_0 \}$, where $n_q = \beta_1 \cdot \frac{q(q^2-1)D'}{2}$ and $m_q = \beta_2 \cdot \frac{q(q^2-1)D'}{2}$, $\Delta_1 := kd_1$, and $\Delta_2 := kd_2$. Further, there is an algorithm that takes as input $q \in \calQ$ and in time $\mathrm{poly}(q)$ outputs $\calG_q$. 
\end{theorem}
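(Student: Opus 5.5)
We will follow the local-to-global framework of~\cite{HLMRZ25}, replacing their constant-size lossless gadget with a constant-size lossless $Z$-graph. The blueprint is the same: a Cayley-type structure on $\mathrm{PSL}_2(\mathbb{F}_q)$ (order $q(q^2-1)/2$) built from Ramanujan/LPS-type generators. Each group element, or coset in a suitable partition, carries a copy of a gadget on $D'$ vertices per part. The blueprint is a $k$-partite structure of "local views": every global vertex lies in $k$ local gadget copies, one per colour, and its global neighbourhood is the union of its neighbourhoods in these $k$ copies. This is why the global degrees are $\Delta_1=kd_1$ and $\Delta_2=kd_2$. We will put $\beta_1$ copies of the $L_1$/$R_1$ blow-up and $\beta_2$ copies of the $L_2$/$R_2$ blow-up at each blueprint location. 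The relation $\beta_1 d_1=\beta_2 d_1'$ is exactly the edge-count balance needed for the $(L_1,R_2)$ and $(R_1,L_2)$ parts to be $(\Delta_1,\Delta_1')$-biregular with $n_q/m_q=\beta_1/\beta_2$.

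\textbf{Step 1: gadget.} We obtain a constant-size lossless $Z$-graph with degrees $d_1,d_1',d_2$ and the sizes dictated by the blueprint. We take it from Theorem~\ref{thm:random_lossless_Z_graphs} (adapted to the small-set, fixed-size regime), found by brute-force search in constant time. Its loss parameter is $\epsilon_0=O(1/d_0)\ll\eps$ at the gadget's small-set threshold. The gadget has constant size, so "explicit" is preserved and the total running time stays $\mathrm{poly}(q)$.

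\textbf{Step 2: local-to-global expansion.} Take $S_1\subseteq R_1$ and $S_2\subseteq R_2$ with $|S_1|\le\eta n_q$ and $|S_2|\le\eta m_q$ (the $L$-side is symmetric). We count collisions in $L_2$, i.e.\ pairs of edges from $S_1\cup S_2$ landing on the same vertex. Following~\cite{HLMRZ25}, we classify each edge by its colour and local copy. A collision inside a single gadget copy is controlled by the gadget's joint expansion, provided the restriction of $S_1\cup S_2$ to that copy is small. A collision between different colours is controlled by the blueprint's spectral/small-set property: small sets have few vertices whose neighbourhoods overlap across many colours. The latter is the step that forces $k=k(\eps)$, since a vertex can be "bad" in at most a $1/k$-type fraction of colours.

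The key point is that the gadget bound is stated jointly, as $(1-\epsilon_0)d_1|S_1^{\mathrm{loc}}|+(1-\epsilon_0)d_2|S_2^{\mathrm{loc}}|$. Summed over copies, it therefore yields the mixed-degree global bound rather than two separate bounds that could cancel. Copies where the local set is too large (dense) are handled as in~\cite{HLMRZ25}: such copies are few by the blueprint's expansion, with a weighted count in which $S_2$-vertices carry weight $d_2$ and $S_1$-vertices weight $d_1$.

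\textbf{Main obstacle.} The hard part is that the global argument of~\cite{HLMRZ25} is written for a single degree. With two vertex types of different degrees, the "heavy copy" accounting must be weighted by degree. We need $d_2\in(1\pm0.001)\alpha d_1$ so that the weights are comparable up to the fixed constant $\alpha$, letting the single-degree argument go through with constants depending on $\alpha,\beta_1,\beta_2$. This dependence is why $d_0$ depends on these parameters. Finally, we choose $\eta$ small relative to the gadget threshold and the blueprint's small-set constant, and $q_0$ large enough that the blueprint's spectral bound applies. Collecting the losses gives $\epsilon_0$ plus an $O(1/k)$ term plus a blueprint term, which is at most $\eps$ for suitable $k$ and $d_0$.
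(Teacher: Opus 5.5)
\textbf{Gap: the gadget is missing the spread property.} Your architecture matches the paper's. You use the structured base graph of \cite{HLMRZ25}, duplicated $\beta_1$ times for $L_1,R_1$ and $\beta_2$ times for $L_2,R_2$, with a copy of a mixed-degree gadget on the neighbourhood of every middle vertex $u\in M$. The joint gadget bound is summed over light copies, and heavy copies are discarded via the base graph's neighbour expansion. The problem is what you require of the gadget, and it breaks your treatment of collisions between \emph{different} gadget copies.

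Two middle vertices $u\in M_a$, $v\in M_b$ overlap in $R_2$ exactly on a special set $\mathrm{Nbr}_u(Q_i^{a,b})$. That set has size $\Theta(D/s)=\Theta(\sqrt D)$, far more than the $d_1$ or $d_2$ red edges one vertex of $S_1\cup S_2$ sends into $\mathcal{H}_u$. Skeleton expansion of $G[R_2,M]$ only controls \emph{how many} pairs of active middle vertices overlap (roughly $\lambda=O(D^{1/4})$ per vertex). It says nothing about how much of $N^{\mathcal{H}_u}_{Z_2}(S(u))$ lands in those shared special sets, and a priori it could be all of it.

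What closes Claim~\ref{claim:middle-to-right} is the \emph{spread} property of Lemma~\ref{lem:Zgadget}: $\sum_{i\in W}|N_{Z_2}(A_1\cup A_2)\cap\mathcal{Z}_i^{a,b}|\le 64|W|\max\{(d_1|A_1|+d_2|A_2|)/s,\log D\}$, together with its $Y_2$ analogue. With $\lambda\le s\delta$ and $d_1,d_2\ge\frac{1}{\delta}\max\{\lambda,\sqrt s\}\log D$, this makes the shared mass an $O(\delta)$ fraction of the red edges. Lossless expansion is a statement about cardinalities and is invariant under relabelling $Z_2$, so it cannot supply this. A gadget found by brute-force search for a lossless Z-graph, as in your Step~1, need not be spread relative to the fixed partitions $\{\mathcal{Z}_i^{a,b}\}$ of the base graph.

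The fix is to add spread to the gadget's specification; the gadget is still constant size, so brute-force search keeps the construction explicit. For a random gadget this is easy, because spread, unlike expansion, is subadditive over unions. Apply Lemma~\ref{lem:HLMOZ-spread} to $(Y_1,Z_2)$ and $(Y_2,Z_2)$ separately and add. Relatedly, your account of where $k=k(\epsilon)$ comes from is off. In the paper, $k\ge 16/\epsilon^2$ is forced by the $2k^{-1/2}$ term in the claim that most edges of $G[S_i,U]$ go to low-degree middle vertices, which comes from $2\sqrt k$-neighbour expansion. Cross-gadget collisions cost only $o_D(1)$, needing $k\delta^2=o_D(1)$.

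\textbf{Secondary issue: Step~1 targets the wrong parameter regime.} Light copies are those with fewer than $\tau/\delta$ active vertices, where $\tau=O(D^{5/8})$. So the gadget must be lossless for all $A_1,A_2$ with $d_1|A_1|+d_2|A_2|\le\mu D_2$, $\mu=o_D(1)$. Meanwhile the collision step forces $d_1,d_2\gtrsim D^{1/4}\log D/\delta$. At such set sizes, a random degree-$d$ gadget already has internal collisions on a $\Theta(d|A|/D)\ge D^{-1/8}$ fraction of its edges. So loss $O(1/d_0)$ is not available there, and Theorem~\ref{thm:random_lossless_Z_graphs}, whose $\eta$ is a tiny function of the degree, gives no control at local sizes near $D^{5/8}/\delta$. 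This is exactly why the explicit construction does not reach $\epsilon_i=O(1/\Delta_i)$.

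What you need, and what the paper proves via an Erd\H{o}s--R\'enyi mixture and the embedding theorem of \cite{HMMP24}, is loss $o_D(1)$ in the window $D^{1/4}\log^2 D/\delta\le d_1,d_1',d_2\le \delta D^{3/8}/\log D$. Since you only need $\epsilon_0\ll\epsilon$, this suffices. Fitting all three degrees into this window for a single $D$ is what the conditions $d_2\in(1\pm0.001)\alpha d_1$ and $\beta_1d_1=\beta_2d_1'$ are for. The degree weighting you flag as the main obstacle is harmless: the light-versus-heavy claim is applied to $S_1$ and $S_2$ separately, and the two edge counts are then weighted by $d_1$ and $d_2$.
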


The remainder of this section will focus on proving \Cref{thm:Z-lossless}.

\subsubsection{Components}

Our Z-graph will be constructed by combining a \emph{base graph} and a constant size \emph{Z-gadget graph}. The following was shown in~\cite{HLMRZ25}.

\begin{definition}[Structured bipartite graph]
\label{def:structured-bipartite-graph}
    A $(k,D)$-biregular bipartite graph $G$ between vertex sets $V$ and $M$ is a \emph{structured bipartite graph} if:
    \begin{enumerate}[(1)]
        \item The set $M$ can be expressed as a disjoint union $\sqcup_{a\in[k]}M_a$ such that each $v\in V$ has exactly one neighbor in each $M_a$.
        \item \label{property:nbr}
        For each vertex $u\in M$, there is an injective function $\mathrm{Nbr}_u:[D]\to V$ that specifies an ordering of the $D$ neighbors of $u$.
        \item \label{property:special-sets}
        There is an $s\in \bbN$ such that the following holds:
        for each pair of distinct $a, b\in [k]$, there are $r(a,b)$ \emph{special sets} $\{Q_i^{a,b} \subseteq [D]\}_{i\in[r(a,b)]}$ that partition $[D]$ (abbreviated to $r$ and $Q_i$), each $|Q_i| \in [\frac{D}{2s}, \frac{2D}{s}]$,
        such that for every $u \in M_a$, there are distinct $v_1,\dots,v_r \in M_b$ with $N(u)\cap N(v_i) = \Nbr_u(Q_i)$ for each $i\in[r]$ and $N(u) \cap N(v') = \varnothing$ for all other $v'\in M$.
    \end{enumerate}
\end{definition}

\begin{lemma}[\cite{HLMRZ25}] \label{lem:HLMRZ-base-graph}
    For every $k$ that is a power of $2$, and large enough $D\in\bbN$, there is an algorithm that takes in large enough prime $q\equiv 1\text{ mod }4$ and constructs vertex sets $V, M$ such that $|M| = k \cdot \frac{q(q^2-1)}{2}$, $|V| = D \cdot \frac{q(q^2-1)}{2}$ along with structured $(k, D)$-biregular bipartite graph $G$ on $(V,M)$, with the following properties:
    \begin{itemize}
        \item $s = \Theta(\sqrt{D})$ for the special set structure.
        \item $G$ is a $O\parens*{D^{5/8}}$-small-set $2\sqrt{k}$-neighbor expander.
        \item $G$ is a $O\parens*{D^{1/4}}$-small-set skeleton expander.
    \end{itemize}
\end{lemma}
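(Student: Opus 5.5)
Since this lemma is quoted from~\cite{HLMRZ25}, my plan is to reconstruct its proof in outline rather than derive anything new. The construction should be an algebraic \emph{Cayley-type} incidence structure over the group $\Gamma = \mathrm{PSL}(2,q)$, which has order $\frac{q(q^2-1)}{2}$. The vertex counts match this: $M = \bigsqcup_{a \in [k]} M_a$ with each $M_a \cong \Gamma$, and $|V| = D|\Gamma|$.

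\textbf{Step 1: the incidence rule.} I would pick $k$ (a power of $2$) generating subsets $A_1, \dots, A_k \subseteq \Gamma$, realised as elements of $\mathrm{PSL}(2,q)$ for $q \equiv 1 \bmod 4$, in the style of the LPS/Ramanujan constructions and the quadripartite complexes of Panteleev--Kalachev and Dinur et al. Each $v \in V$ would be a small \emph{local configuration}: a group element $g$ together with a label recording how it sits relative to all the $A_a$. The neighbour of $v$ in $M_a$ is then $g$ translated by a fixed word depending on $a$. This gives property (1) immediately. It also gives the orderings $\mathrm{Nbr}_u$ of property~\ref{property:nbr}, since the neighbourhood of $u \in M_a$ is a translate of a fixed $D$-element local pattern.

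\textbf{Step 2: the special sets.} For $a \neq b$, I would take the sets $Q^{a,b}_i$ to be the fibres of the local pattern under the projection that records the $M_b$-coordinate. Then $N(u) \cap N(v_i) = \mathrm{Nbr}_u(Q_i)$ holds by construction, and the neighbourhoods of $u$ meet nothing else in $M_b$. If the local pattern has a product structure of size about $\sqrt{D} \times \sqrt{D}$ (as in a square/tensor gadget), each fibre has size about $\sqrt{D}$ up to a factor of $2$. This gives $s = \Theta(\sqrt D)$.

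\textbf{Step 3: the two expansion properties.} Both come from spectral bounds.
\begin{itemize}
\item For neighbour expansion, the Cayley graphs $\mathrm{Cay}(\Gamma, A_a^{-1}A_b)$ should be near-Ramanujan, with second eigenvalue $O(\sqrt{|A|})$. The expander mixing lemma then bounds the number of pairs $u \in M_a$, $v \in M_b$ that share neighbourhoods inside a small set. Summing over the $\binom{k}{2}$ pairs $(a,b)$ gives $2\sqrt k$-neighbour expansion for sets of size below the $O(D^{5/8})$ threshold.
\item For skeleton expansion, I would apply the same mixing argument to the graph on $M$ whose edges are the special-set intersections. The threshold $D^{1/4}$ then comes from the ratio between the special-set size $\sqrt D$ and the eigenvalue error term.
\end{itemize}

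\textbf{Main obstacle.} The hardest part is choosing the generators so that two things hold at once: the exact special-set structure of Step 2, which needs a ``transversality'' of the cosets, and the near-Ramanujan spectral gap needed in Step 3. Of the two, I expect the spectral side to be the crux. I would first try taking the $A_a$ as images of LPS-type generators, so that Ramanujan bounds are available off the shelf, and verify transversality directly from the algebra of $\mathrm{PSL}(2,q)$. Explicitness is then clear: all objects are computable in $\mathrm{poly}(q)$ time.
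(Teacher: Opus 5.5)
The paper does not prove this lemma. It imports it from \cite{HLMRZ25} as a black box, and even says it never needs the definitions of the two expansion notions. So your outline has to stand on its own, and it has a genuine gap exactly where the content lies.

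\textbf{Step~2.} You need the following for \emph{every} one of the $\binom{k}{2}$ pairs $a\neq b$, and for arbitrarily large powers of two $k$ (the application takes $k\ge 16/\epsilon^2$). The map sending each of the $D$ cells at $u\in M_a$ to its $M_b$-vertex must have all fibres within a factor $2$ of one common size $D/s$, with $s=\Theta(\sqrt D)$. A $\sqrt D\times\sqrt D$ product pattern does not deliver this. Take a square complex whose four corners $g,\ g\alpha,\ g\beta,\ g\alpha\beta$ are the types. Adjacent corners have fibres of size $\sqrt D$, but the diagonal pair $(g,g\alpha\beta)$ has fibres of size $O(1)$, i.e.\ $r\approx D$. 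You also only get four types.

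What is needed is a structure in which every pair of types is separated by the same number of independent directions. One natural choice is the top cells of a $k$-dimensional Ramanujan Cayley cubical complex; objects of this kind underlie the construction in \cite{HLMRZ25}. Take $D=n^k$, and take the $k$ vertex types to be codewords in $\{0,1\}^k$ at pairwise Hamming distance exactly $k/2$ (rows of a Sylvester--Hadamard matrix). Then every pair $a\neq b$ has $n^{k/2}=\sqrt D$ partners, each sharing $\sqrt D$ cells. This is a natural explanation for the hypothesis that $k$ is a power of $2$, a hypothesis your outline never uses.

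\textbf{Step~3.} This is also not a proof. You never define small-set neighbour or skeleton expansion, so the thresholds $D^{5/8}$ and $D^{1/4}$ are read off the statement rather than derived. The spectral input is also not available off the shelf. Taking the $A_a$ to be images of LPS generators gives no control over $\mathrm{Cay}(\Gamma,A_a^{-1}A_b)$. You need near-Ramanujan behaviour \emph{and} the exact fibre structure simultaneously, and that is precisely the crux you leave open.

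In a cubical complex both properties come from one source. The directional adjacency operators commute and are each Ramanujan. The skeleton graph between two types has as adjacency operator the product of these operators over the $k/2$ directions in which the types differ. It therefore has degree $\Theta(\sqrt D)$ and nontrivial eigenvalues $O_k(D^{1/4})$. Until you commit to such a construction and verify these properties, what you have is a plausible narrative of the cited result rather than a proof of it.
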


The above lemma references notions of ``small set neighbor expanders'' and ``small set skeleton expander.'' We will not need to know what these mean, though the curious reader may refer to \cite{HLMRZ25} for details. 

Let $\beta \in \bbN$ and let $G = (V, M)$ be a bipartite graph. We will be concerned with properties of the \emph{$\beta$-duplication of $G$}, which is the graph on vertex sets $V' = V_1 \cup \dots \cup V_\beta$ and $M'$, where $M' = M$, each $V_i$ is a copy of $V$, and each graph $G'[V_i, M]$ is a copy of $G$.

\begin{proposition} \label{prop:dup-exp-defs}
    Let $\beta \in \bbN$ and suppose $G = (V, M)$ is a $\tau$-small-set $j$-neighbor expander (resp. $\lambda$-small-set skeleton expander). Then, the $\beta$-duplication of $G$ is a $\beta\tau$-small-set $j$-neighbor expander (resp. $\beta\lambda$-small-set skeleton expander).
\end{proposition}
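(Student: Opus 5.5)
The plan is to work directly from the definitions in~\cite{HLMRZ25}. The key structural observation is that the $\beta$-duplication leaves the vertex set $M' = M$ untouched and replaces every $v \in V$ by $\beta$ copies $v^{(1)}, \ldots, v^{(\beta)}$, each with $N_{G'}(v^{(i)}) = N_G(v)$. Both expansion notions concern small sets $T \subseteq M$ together with a count of structures living on the $V$-side, namely $V$-vertices or shared $V$-neighbours. So the smallness threshold on $T$ does not change, while every such count is multiplied exactly by $\beta$. I am recalling the shape of the definitions rather than quoting them, so the first step is to fix them precisely and check that they have this form.

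\textbf{Neighbor expansion.} As I recall it, $G$ is a $\tau$-small-set $j$-neighbor expander if every sufficiently small $T \subseteq M$ satisfies
\begin{equation*}
    \left|\{v \in V : |N(v) \cap T| \geq j\}\right| \leq \tau |T|.
\end{equation*}
Fix such a $T$. The size condition depends only on $|T|$, and perhaps on $|M|$, so it is the same in $G'$. A vertex $v^{(i)} \in V'$ has $|N_{G'}(v^{(i)}) \cap T| = |N_G(v) \cap T|$. Hence the set of vertices of $V'$ with at least $j$ neighbours in $T$ is exactly the disjoint union of the $\beta$ copies of the corresponding set in $V$. Its size is therefore at most $\beta\tau|T|$, as claimed.

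\textbf{Skeleton expansion.} The skeleton of $G$ is the multigraph on $M$ with one edge between $u$ and $w$ for each common neighbour $v \in V$. Small-set $\lambda$-skeleton expansion bounds the number of skeleton edges inside a small $T \subseteq M$, for instance by $\lambda|T|$. In the duplication, each common neighbour $v$ of $u, w$ gives rise to exactly $\beta$ common neighbours $v^{(1)}, \ldots, v^{(\beta)}$ in $G'$. So the skeleton of $G'$ is the skeleton of $G$ with every edge multiplicity multiplied by $\beta$. Again $M$ and the smallness condition are unchanged, so the edge count inside $T$ is at most $\beta\lambda|T|$.

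The main obstacle is definitional rather than mathematical. I must confirm that in~\cite{HLMRZ25} both notions quantify over small subsets of $M$, the side that is not duplicated. I must also confirm that the parameters $\tau$ and $\lambda$ enter linearly as upper bounds on counts of $V$-side objects.

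If instead some smallness threshold is phrased relative to $|V|$, or a bound is normalised by the degree $D$, I would redo the bookkeeping. The duplicated graph is $(k, \beta D)$-biregular with $|V'| = \beta|V|$, so one checks that the normalisations absorb the factor of $\beta$ consistently.

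If a definition instead quantifies over subsets $S \subseteq V'$, I would decompose $S = \bigsqcup_i S_i$ with $S_i$ in the $i$-th copy, view $S$ as a multiset in $V$, and bound its multiplicity by $\beta$. This is the step I would check most carefully.
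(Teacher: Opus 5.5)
Your proposal is correct and follows the argument the paper has in mind; the paper only says the proof is straightforward from the definitions. The content is that $M$ is left untouched while every $V$-side count picks up a factor of at most $\beta$, either directly or, for sets $S'\subseteq V'$, by splitting $S'$ into the $\beta$ pieces $S'\cap V_i$, each no larger than $S'$.

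One minor remark concerns the skeleton. If the skeleton in \cite{HLMRZ25} has one edge per special set (i.e.\ per pair $u\in M_a$, $w\in M_b$ with $N(u)\cap N(w)\neq\varnothing$) rather than one per common neighbour, then the duplicated special sets are $Q_i^{a,b}\times[\beta]$ with the same partners $v_1,\dots,v_r$ (Proposition~\ref{prop:dup-structured-bipartite}). In that case the skeleton of the duplication is literally the same graph, so $\lambda$-skeleton expansion transfers verbatim and the stated $\beta\lambda$ bound follows trivially, with no multiplicity bookkeeping needed.
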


The proof of \Cref{prop:dup-exp-defs} is straightforward from the definitions.

\begin{proposition} \label{prop:dup-structured-bipartite}
    Let $\beta \in \bbN$, and let $G = (V,M)$ be a $(k,D)$-biregular structured bipartite graph. Then, the $\beta$-duplication of $G$ is a structured bipartite graph.
\end{proposition}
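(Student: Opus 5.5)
We verify the three clauses of Definition~\ref{def:structured-bipartite-graph} directly for the $\beta$-duplication $G'$ on $(V',M)$, where $V'=V_1\sqcup\dots\sqcup V_\beta$. Let $\phi_j: V\to V_j$ denote the copy isomorphisms. Then $u\in M$ is adjacent to $\phi_j(v)$ exactly when $u$ is adjacent to $v$ in $G$. Two immediate consequences follow. Each $v'\in V'$ has degree $k$, and each $u\in M$ has degree $\beta D$. So $G'$ is $(k,\beta D)$-biregular, and it is this graph that we show is structured, with parameters $(k,\beta D)$.

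\textbf{Property (1).} Keep the same partition $M=\sqcup_{a\in[k]}M_a$. A vertex $\phi_j(v)$ has exactly the neighbours of $v$ in $G$. Since $v$ has exactly one neighbour in each $M_a$, so does $\phi_j(v)$.

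\textbf{Property (2).} Identify $[\beta D]$ with $[\beta]\times[D]$ and define
\[
\mathrm{Nbr}'_u(j,i)=\phi_j(\mathrm{Nbr}_u(i)).
\]
This map is injective because each $\phi_j$ is injective and the sets $V_j$ are disjoint. Its image is $\bigsqcup_j\phi_j(N_G(u))=N_{G'}(u)$.

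\textbf{Property (3).} For distinct $a,b$, take the special sets to be
\[
Q_i'=[\beta]\times Q_i^{a,b},\qquad i\in[r(a,b)].
\]
These partition $[\beta]\times[D]$, and their sizes satisfy
\[
|Q_i'|=\beta|Q_i|\in\Big[\tfrac{\beta D}{2s},\tfrac{2\beta D}{s}\Big],
\]
so the same $s$ works relative to the new degree $\beta D$. For $u\in M_a$, use the same vertices $v_1,\dots,v_r\in M_b$ as in $G$. Because neighbourhoods in $G'$ are disjoint unions of the copies of the neighbourhoods in $G$, we get
\[
N_{G'}(u)\cap N_{G'}(w)=\bigsqcup_j\phi_j\big(N_G(u)\cap N_G(w)\big).
\]
This gives $N_{G'}(u)\cap N_{G'}(v_i)=\mathrm{Nbr}'_u(Q_i')$, and $N_{G'}(u)\cap N_{G'}(v')=\varnothing$ for every other $v'\in M$.

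The only point requiring care is this last step: confirming that $s$ is unchanged once the degree is rescaled to $\beta D$, and that intersections of neighbourhoods decompose copy by copy. Both follow because the copies $V_j$ are disjoint and the vertex set $M$ is shared by all copies.
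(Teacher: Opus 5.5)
Your proof is correct and follows the paper's argument. The paper likewise keeps the partition of $M$, defines $\mathrm{Nbr}'_u$ copy by copy via $[\beta D]\cong[\beta]\times[D]$, and takes the new special sets to be the old ones times $[\beta]$; you additionally spell out the checks it leaves implicit, namely that the same $s$ works for degree $\beta D$ and that neighbourhood intersections decompose over the disjoint copies $V_j$.
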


\begin{proof}
    The three properties in \Cref{def:structured-bipartite-graph} follow straightforwardly from the fact that each $(V_j, M)$ is a structured bipartite graph. To be explicit, for \Cref{property:nbr}, the function $\Nbr_u : [\beta D] \cong [\beta] \times [D] \rightarrow V_1 \cup \dots \cup V_\beta$ sends $(i, j)$ to the vertex $\Nbr_u(j)$ in $V_i$, and for \Cref{property:special-sets}, the special set $(\calQ')_i^{(a,b)}$ is defined to be $\calQ_i^{(a,b)} \times [\beta]$, i.e. it is the union of the $\beta$ special sets in each of $V_j$, $j \in [\beta_2]$. 
\end{proof}

\paragraph{The base graph.} We are now ready to define our base graph. For $k$ a power of $2$ and sufficiently large $D' \in \bbN$, let $G' = (V', M')$ be the structured $(k',D')$-biregular bipartite graph guaranteed in \Cref{lem:HLMRZ-base-graph}. Our base graph $G$ consists of five vertex sets, $L_1, L_2, M, R_1, R_2$. The graphs $G[L_1, M]$ and $G[R_1, M]$ are $\beta_1$-duplications of $G'$, and the graphs $G[L_2, M]$ and $G[R_2, M]$ are $\beta_2$-duplications of $G'$. That is, $L_1 = L_{1,1} \cup \dots \cup L_{1,\beta_1}$ and $R_1 = R_{1,1} \cup \dots \cup R_{1,\beta_1}$ each consist of $\beta_1$ copies of $V'$, where each $G[L_{1,i}, M]$ and $G[R_{1,i}, M]$ is a copy of $G'$, and $L_2 = L_{2,1} \cup \dots \cup L_{2,\beta_2}$ and $R_2 = R_{2,1} \cup \dots \cup R_{2,\beta_2}$ each consist of $\beta_2$ copies of $V'$, where each $G[L_{2,i}, M]$ and $G[R_{2,i}, M]$ is a copy of $G'$.

The following is a corollary of \Cref{lem:HLMRZ-base-graph}, \Cref{prop:dup-exp-defs}, and \Cref{prop:dup-structured-bipartite}.

\begin{lemma} \label{lem:our-base-graph}
    For every $k$ that is a power of $2$, integers $\beta_1, \beta_2 \in \bbN$, and large enough $D'\in\bbN$, letting $D_1 := \beta_1 D'$, $D_2 := \beta_2 D'$, and $D := D_1 + D_2$, there is an algorithm that takes in large enough prime $q \equiv 1 \text{ mod }4$ and constructs vertex sets $L_1, L_2, M, R_1, R_2$ such that $|M| = k \cdot \frac{q(q^2-1)}{2}$, $|L_1| = |R_1| = \beta_1 \cdot \frac{q(q^2-1) D'}{2}$, $|L_2| = |R_2| = \beta_2 \cdot \frac{q(q^2-1)D'}{2}$ and a graph $G$ on vertex set $L_1 \cup L_2 \cup M \cup R_1 \cup R_2$, satisfying the following properties:
    \begin{itemize}
        \item $G[L_1, M]$ and $G[R_1, M]$ are $(k, D_1)$-biregular structured bipartite graphs,
        \item $G[L_2, M]$ and $G[R_2, M]$ are $(k, D_2)$-biregular structured bipartite graphs,
        \item $s = \Theta(\sqrt{D})$ for the special set structure.
        \item $G[L_1, M], G[L_2, M], G[R_1,M], G[R_2, M]$ are $O\parens*{D^{5/8}}$-small-set $2\sqrt{k}$-neighbor expanders.
        \item $G[L_2, M]$ and $G[R_2, M]$ are a $O\parens*{D^{1/4}}$-small-set skeleton expanders.
    \end{itemize}
\end{lemma}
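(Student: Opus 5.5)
The statement is a corollary, so the plan is to build $G$ directly from the graph $G'=(V',M')$ of \Cref{lem:HLMRZ-base-graph}, with its degree parameter set to $D'$, and then read off each bullet.

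\textbf{Construction and sizes.} Set $M := M'$, so $|M| = k\cdot\frac{q(q^2-1)}{2}$. Take $L_1, R_1$ to be $\beta_1$ disjoint copies of $V'$ and $L_2, R_2$ to be $\beta_2$ disjoint copies of $V'$. The edge sets are as in the definition of the base graph: $G[L_1,M]$ and $G[R_1,M]$ are the $\beta_1$-duplication of $G'$, and $G[L_2,M]$ and $G[R_2,M]$ are the $\beta_2$-duplication. The vertex counts follow immediately from $|V'| = D'\frac{q(q^2-1)}{2}$. Explicitness is inherited from the algorithm of \Cref{lem:HLMRZ-base-graph}, since duplication is a trivial polynomial-time operation.

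\textbf{Biregularity and structure.} In a $\beta$-duplication, each left vertex still has degree $k$ with one neighbor in each $M_a$. Each $u\in M$ gets $\beta D'$ neighbors, so the pieces are $(k,D_1)$- and $(k,D_2)$-biregular. \Cref{prop:dup-structured-bipartite} then gives the structured property. For the special-set parameter, I would check the sizes directly. The special sets $Q_i\times[\beta]$ have sizes in $[\beta\frac{D'}{2s'},\beta\frac{2D'}{s'}]$ with $s'=\Theta(\sqrt{D'})$, so the same $s'$ serves as $s$ for the degree $\beta D'$. Since $\beta_1,\beta_2$ are fixed constants, $D=(\beta_1+\beta_2)D'=\Theta(D')$, and hence $s=\Theta(\sqrt{D})$.

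\textbf{Expansion.} Apply \Cref{prop:dup-exp-defs} with $\beta\in\{\beta_1,\beta_2\}$. This gives small-set neighbor expansion with parameter $\beta\cdot O(D'^{5/8}) = O(D^{5/8})$, and skeleton expansion with parameter $\beta\cdot O(D'^{1/4}) = O(D^{1/4})$. The constants hidden in the $O(\cdot)$ may depend on $\beta_1,\beta_2$, which is allowed because these are fixed.

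The only step needing any care is the bookkeeping of how the parameters ($s$ and the expansion bounds) rescale under duplication relative to $D$ rather than $D'$. That is where I would concentrate. The real difficulty lies elsewhere: it is the later overlay with the Z-gadget, not this corollary.
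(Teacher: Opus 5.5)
Your proposal is correct and is essentially the paper's argument: the paper declares this lemma a corollary of Lemma~\ref{lem:HLMRZ-base-graph} (applied with degree $D'$) together with Propositions~\ref{prop:dup-exp-defs} and~\ref{prop:dup-structured-bipartite}, applied to the $\beta_1$- and $\beta_2$-duplications. You also spell out bookkeeping the paper leaves implicit: the special-set parameter $s=\Theta(\sqrt{D'})$ is unchanged by duplication and equals $\Theta(\sqrt{D})$ because $D=(\beta_1+\beta_2)D'$, and the factors $\beta_i$ are absorbed into the $O(\cdot)$ bounds.
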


\paragraph{The Z-gadget graph.} To go with our base graph, we will want a gadget graph $\calH$ on vertex sets $Y_1 \cup Y_2 \cup Z_1 \cup Z_2$, where $Y_1 = Z_1 = [D_1]$ and $Y_2 = Z_2 = [D_2]$, that has edges between only parts $Z_1$ and $Y_2$, $Y_2$ and $Z_2$, and $Z_2$ and $Y_1$. We will also need $\calH$ to respect the special sets of the base graph. The existence of such a gadget graph is given in the following lemma, proven in \Cref{sec:proof-Zgadget}. 

\begin{restatable}[Z-gadget graph]{lemma}{Zgadgetz}
\label{lem:Zgadget}
    Let $D_1, D_2, d_1, d'_1, d_2,k$ be integers such that $D_1 \cdot d_1 = D_2 \cdot d'_1$, and $k \le D^{0.1} \le d_1, d'_1, d_2 \le o(D)$, where $D := D_1 + D_2$. Let $Y_1, Z_1$ be sets of size $D_1$, and let $Y_2, Z_2$ be sets of size $D_2$. Also suppose that for some $s \in \bbN$, we have for all distinct $a,b \in [k]$ an integer $r(a,b)$ and partitions $(\calY_i^{a,b})_{i \in [r(a,b)]}$ of $Y_2$ and $(\calZ_i^{a,b})_{i \in [r(a,b)]}$ of $Z_2$ where each part has size within $\left[ \frac{D_2}{2s}, \frac{2D_2}{s} \right]$. 
    
    Then, there exists a graph $\calH$ on vertex set $(Y_1, Y_2, Z_1, Z_2)$ where $Y_1 = Z_1 = [D_1], Y_2 = Z_2 = [D_2]$, such that the subgraphs $(Y_1, Z_2)$ and $(Z_1, Y_2)$ are $(d_1, d'_1)$-biregular graphs, the subgraph $(Y_2, Z_2)$ is a $(d_2,d_2)$-biregular graph, and there are no edges between $Y_1$ and $Z_1$, also satisfying the following properties:
    \begin{itemize}
    \item {\bf Expansion from $Y_1 \cup Y_2$ to $Z_2$:} 
        For any $\mu = o_D(1)$, there is $\eps = o_D(1)$ such that for any $A_1 \in [Y_1]$ and $A_2 \in [Y_2]$ with $d_1 |A_1| + d_2 |A_2| \le \mu \cdot D_2$, we have 
        \[
            |N_{Z_2}(A_1 \cup A_2)| \ge (1 - \eps) \left( d_1 |A_1| + d |A_2| \right).
        \]
    \item {\bf Expansion from $Z_1 \cup Z_2$ to $Y_2$:}
        For any $\mu = o_D(1)$, there is $\eps = o_D(1)$ such that for any $B_1 \in [Z_1]$ and $B_2 \in [Z_2]$ with $d_1 |B_1| + d_2 |B_2| \le \mu \cdot D_2$, we have 
        \[
            |N_{Y_2}(B_1 \cup B_2)| \ge (1 - \eps) \left( d_1 |B_1| + d_2 |B_2| \right).
        \]
    \item {\bf Spread w.r.t. $(\calZ_{i}^{a,b})_{i \in [r(a,b)]}$:}
        For any $A_1 \subseteq [Y_1]$ and $A_2 \subseteq [Y_2]$, for any distinct $a,b \in [k]$, and for any $W \subseteq [s]$ with $|W| \ge \frac{s \log D}{\min\{ d_1, d_2 \}}$, 
        \[
            \sum_{i \in W} |N_{Z_2}(A_1 \cup A_2) \cap \calZ_i^{a,b}| \le 64 |W| \cdot \max \left\{ \frac{d_1 |A_1| + d |A_2|}{s}, \log D \right\}. 
        \]
    \item {\bf Spread w.r.t. $(\calY_{i}^{a,b})_{i \in [r(a,b)]}$}
        For any $B_1 \subseteq [Z_1]$ and $B_2 \subseteq [Z_2]$, for any distinct $a,b \in [k]$, and for any $W \subseteq [s]$ with $|W| \ge \frac{s \log D}{\min\{ d_1, d_2 \}}$, 
        \[
            \sum_{i \in W} |N_{Y_2}(B_1 \cup B_2) \cap \calZ_i^{a,b}| \le 64 |W| \cdot \max \left\{ \frac{d_1 |B_1| + d |B_2|}{s}, \log D \right\}. 
        \]
    \end{itemize}
\end{restatable}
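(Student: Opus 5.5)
We prove Lemma~\ref{lem:Zgadget} by the probabilistic method. Since $\calH$ has constant size (depending only on $D$), existence suffices; a brute-force search then makes the construction explicit.

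\textbf{Sampling.} We sample the three bipartite pieces independently from the configuration model:
\begin{itemize}
\item a random $(d_1,d_1')$-biregular graph on $(Y_1,Z_2)$;
\item a random $(d_1,d_1')$-biregular graph on $(Z_1,Y_2)$;
\item a random $d_2$-regular graph on $(Y_2,Z_2)$.
\end{itemize}
Each is obtained by matching half-edges uniformly at random. Multi-edges occur with probability bounded away from $1$, or can be handled by conditioning. The two expansion properties are symmetric under $Y\leftrightarrow Z$, and so are the two spread properties. It therefore suffices to show that each of the four properties fails with probability $o(1)$ and then take a union bound. We focus on expansion from $Y_1\cup Y_2$ to $Z_2$ and on spread with respect to $(\calZ_i^{a,b})$.

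\textbf{Expansion.} We follow the third case of the proof of Theorem~\ref{thm:random_lossless_Z_graphs}, but without splitting into cases by the ratio $|A_1|/|A_2|$. Write $e=d_1|A_1|+d_2|A_2|$ for the number of edges leaving $A_1\cup A_2$; they all land in $Z_2$. For fixed $A_1,A_2$ and a fixed target $T\subseteq Z_2$ of size $(1-\eps)e$, the probability that all $e$ half-edges land in $T$ is at most $(|T|/D_2)^{e}$, up to constants coming from sampling without replacement. The union bound then reads
\[
\binom{D_1}{|A_1|}\binom{D_2}{|A_2|}\binom{D_2}{(1-\eps)e}\Bigl(\tfrac{e}{D_2}\Bigr)^{e}\le \Bigl(\tfrac{eD_2}{e'}\Bigr)^{|A_1|+|A_2|}\Bigl(\tfrac{e\,D_2}{(1-\eps)e}\Bigr)^{(1-\eps)e}\Bigl(\tfrac{e}{D_2}\Bigr)^{e},
\]
where $e'$ denotes Euler's number here. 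This expression is roughly $\bigl(O(1/\eps)\cdot(e/D_2)^{\eps}\bigr)^{e}\cdot D^{O(|A_1|+|A_2|)}$.

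Because $d_1,d_2\ge D^{0.1}$, we have $|A_1|+|A_2|\le e/D^{0.1}$. Hence the $D^{O(|A_1|+|A_2|)}$ factor is at most $e^{O(e\log D/D^{0.1})}$, which is absorbed provided $\eps\log(1/\mu)\gg \log D/D^{0.1}+\log(1/\eps)$. Since $\mu=o(1)$, we may take $\eps$ with $\eps\to0$ slowly enough that this holds. Summing over $e\ge1$ gives a failure probability of $o(1)$. The mixed degrees cause no trouble, because everything is expressed in terms of the edge count $e$.

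\textbf{Spread.} This is the step we expect to be the main obstacle. Fix $A_1$, $A_2$, the pair $(a,b)$ and the set $W$. Let $P=\bigcup_{i\in W}\calZ_i^{a,b}$, so $|P|\le 2|W|D_2/s$. Each of the $e$ edges from $A_1\cup A_2$ lands in $P$ with probability about $|P|/D_2\le 2|W|/s$, and the landings are negatively correlated. The quantity $|N_{Z_2}(A_1\cup A_2)\cap P|$ is at most the number of edges landing in $P$, whose mean is $\le 2|W|e/s$.

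A Chernoff bound shows this count exceeds $64|W|\max(e/s,\log D)$ with probability at most $2^{-32|W|\max(e/s,\log D)}$. We must then union bound over all choices:
\begin{itemize}
\item $W$ contributes $\le 2^{s}$ choices, and $s\le |W|\min(d_1,d_2)/\log D$;
\item the pair $(a,b)$ contributes $\le k^2$ choices;
\item the sets $A_1,A_2$ contribute $\le D^{|A_1|+|A_2|}\le D^{e/\min(d_1,d_2)}$ choices.
\end{itemize}

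The delicate point is comparing these counts with the exponent $32|W|\max(e/s,\log D)$. The count over $A$'s costs $e\log D/\min d$ bits. The Chernoff exponent gives at least $32|W|e/s\ge 32e\log D/\min d$, using the hypothesis $|W|\ge s\log D/\min\{d_1,d_2\}$, which is exactly what makes this work. The count over $W$ costs $s$ bits, which is at most $|W|\min d/\log D$. This may exceed $|W|\log D$, so instead we union bound over $W$ only of size $w$, costing $\binom{s}{w}\le s^{w}$, i.e.\ $w\log s\le w\log D$ bits. This is dominated by $32w\log D$.

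When $e$ is large, so that $A$ is not small, we would instead bound $|N\cap P|\le|P|\le 2|W|D_2/s$. Alternatively, we note that the claim is trivial once $e/s\ge D_2/(32s)$.

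After the union bound, we conclude by fixing a sample satisfying all four properties. We then verify the biregularity and the absence of $Y_1$--$Z_1$ edges, which hold by construction.
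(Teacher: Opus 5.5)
The step that fails is your treatment of multi-edges. Because every degree is at least $D^{0.1}$, the expected number of parallel edges in the configuration model grows without bound. It is about $(d_1-1)(d_1'-1)/2$ on $(Y_1,Z_2)$ and about $(d_2-1)^2/2$ on $(Y_2,Z_2)$. A sample is therefore simple only with probability roughly $e^{-\Theta(d_1d_1')}$, so multi-edges occur with probability tending to $1$, not bounded away from it.

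Conditioning does not rescue this. The bound $\Pr[\mathrm{fail}\mid\mathrm{simple}]\le\Pr[\mathrm{fail}]/\Pr[\mathrm{simple}]$ is vacuous here. Already the event that a single vertex of $Y_1$ loses $\epsilon d_1$ of its neighbours to self-collisions has probability $e^{-O(\epsilon d_1\log D)}$. Since $d_1'\ge D^{0.1}$, this is far larger than $e^{-\Theta(d_1d_1')}$.

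As written, you therefore only obtain a biregular \emph{multigraph}, whereas the lemma asks for biregular graphs; the paper works with random biregular graphs throughout. To close the gap you must do one of two things.
\begin{itemize}
\item Verify that multigraphs, with degrees counted with multiplicity, are acceptable both in the overlay defining $\mathcal{G}$ and in the error-reduction analysis.
\item Pass to simple graphs. For expansion, this is exactly what the paper's route supplies: it proves expansion for an Erd\H{o}s--R\'enyi mixture and transfers it through the coupling $E\subseteq G$ of Theorem~\ref{thm:embedding}. That transfer is legitimate because expansion is monotone under adding edges. Spread, however, is monotone \emph{decreasing}, so the same transfer does not apply. It needs either a bound valid in the simple model (the paper imports Lemma~\ref{lem:HLMOZ-spread}) or a switching argument that removes the parallel edges while perturbing neighbourhoods negligibly.
\end{itemize}

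Apart from this, your estimates are correct and give a more self-contained route than the paper's. Your expansion argument is a direct first-moment count over $(A_1,A_2,T)$, indexed by the total edge number $e$. The degree lower bound lets you absorb $\binom{D_1}{|A_1|}\binom{D_2}{|A_2|}\le D^{e/D^{0.1}}$. It yields $\epsilon$ of order $1/\log(1/\mu)$, compared with the paper's roughly $\gamma+\mu+\sqrt{4\log D/\min\{d_1,d_2\}}$; both are $o_D(1)$, which is all that is used.

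For spread, the paper applies Lemma~\ref{lem:HLMOZ-spread} to $(Y_1,Z_2)$ and $(Y_2,Z_2)$ separately and adds the two bounds, which is where $64=2\cdot 32$ comes from. You reprove it in one pass, and your accounting is right. The term $|W|e/s\ge(|A_1|+|A_2|)\log D$ pays for the choice of $A_1,A_2$. The $|W|\log D$ term pays for $\binom{s}{|W|}$ and for the $k^2$ pairs $(a,b)$.
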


\subsubsection{Our Expanding Z-Graph}

For $\eps \in (0, 1)$, integers $\beta_1, \beta_2 \in \bbN$, and constant $\alpha > 0$, let us pick $k \ge 16/\eps^2$ to be a power of $2$, and let us choose $\delta \in (0, 1)$ and large enough $d_1, d'_1, d_2, D_1, D_2 \in \bbN$ such that $\beta_1 d_1 = \beta_2 d'_1$, $d_2 \in (1 \pm 0.001) \alpha \cdot d_1$, $D^{-1/16} \le \delta \le o_D(1) \cdot \frac{1}{k^2}$, and $\frac{D^{1/4} \log^2 D}{\delta} \le d_1, d'_1, d_2 \le \frac{\delta D^{3/8}}{\log D}$, where $D := D_1 + D_2$. 

Let $G$ be a base graph on vertex set $(L_1, L_2, M, R_1, R_2)$ as given in \Cref{lem:our-base-graph} such that $G[L_1, M]$ and $G[R_1, M]$ are $(k, D_1)$-biregular structured bipartite graphs, and $G[L_1, M]$ and $G[R_2, M]$ are $(k, D_2)$-biregular structured bipartite graphs. We have that $G[L_1, M], G[L_2, M], G[R_1, M], G[R_2, M]$ are $O(D^{5/8})$-small-set $2\sqrt{k}$-neighbor expanders. Let $\calH$ be a Z-gadget graph on vertex sets $(Y_1, Y_2, Z_1, Z_2)$ as given in \Cref{lem:Zgadget} such that $Y_1 = Z_1 = [D_1]$, $Y_2 = Z_2 = [D_2]$, and the subgraphs $\calH[Y_1, Z_2]$ and $\calH[Z_1, Y_2]$ are $(d_1, d'_1)$-biregular graphs, and the subgraph $\calH[Y_2, Z_2]$ is a $(d_2, d_2)$-biregular graph. $\calH$ also satisfies spread w.r.t. the special sets of $G[R_2, M]$ and $G[L_2, M]$. 


We define our graph $\calG$ as follows:
\begin{itemize}
    \item $\calG$ is on vertex sets $(L_1 \cup L_2, R_1, \cup R_2)$.
    \item For each $u \in M$, we place a copy of $\calH = (Y_1, Y_2, Z_1, Z_2; E_H)$ on the vertex sets $(N^G_{L_1}(u), N^G_{L_2}(u), N^G_{R_1}(u),$ $ N^G_{R_2}(u))$, where the bijection of the sets $Y_i$ to $N^G_{L_i}(u)$ and $Z_i$ to $N^G_{R_i}(u)$ are given by the orderings of the neighbors of $u$ (\Cref{property:nbr}). We let this copy of $\calH$ on the neighbors of $u \in M$ be denoted $\calH_u$. The edges of $\calG$ are all the edges from the union of all the $\calH_u$, $u \in M$. 
\end{itemize}
Note that each vertex in $L_1 \cup L_2 \cup R_1 \cup R_2$ has exactly $k$ neighbors in $M$ in $G$, so they each partake in exactly $k$ different copies of $H$. Thus $\calG[Y_1, Z_2]$ and $\calG[Z_1, Y_2]$ are $(\Delta_1, \Delta'_1)$-biregular bipartite graphs, where $\Delta_1 = k d_1$ and $\Delta'_1 = k d'_1$, and $\calG[Y_2, Z_2]$ is a $(\Delta_2, \Delta_2)$-biregular bipartite graph, where $\Delta_2 = k d_2$. We also let $n = |L_1| = |R_1| = \beta_1 \cdot \frac{q(q^2-1)D'}{2}$ and $m = |L_2| = |R_2| = \beta_2 \cdot \frac{q(q^2-1)D'}{2}$. Notice that $\frac{n}{m} = \frac{\beta_1}{\beta_2}$. 

\subsubsection{Analysis}

The proof that $\calG$ satisfies the properties in \Cref{thm:Z-lossless} follows via a slight (and simple) modification of the proof of lossless expansion in \cite{HLMRZ25}. In what follows, we give a proof sketch of \Cref{thm:Z-lossless}, referring the reader to \cite{HLMRZ25} for proofs of the claims.

Let us focus on showing expanion from subsets of $L_1 \cup L_2$ to $R_2$, as the $R_1 \cup R_2 \rightarrow L_2$ case is identical. Fix $S_1 \subseteq L_1$, $S_2 \cup L_2$, so that $|S_1|, |S_2| \le \eta D$. We will show that $|N_{R_2}(S_1 \cup S_2)| \ge (1-\eps) (\Delta_1 |S_1| + \Delta_2 |S_2|)$. 

Let $U \subseteq M$ be the neighbors of $S_1$ and $S_2$ in the graph $G$. We split $U$ into its ``high degree'' part $U_h := \{ u \in U : \deg_{G[S_1 \cup S_2, U]}(u) \ge \frac{\tau}{\delta} \}$ and its ``low degree'' part $U_\ell := U \backslash U_h$. Here, $\tau = \tau_1 + \tau_2$, where $G[L_1, M]$ and $G[L_2, M]$ are $\tau_1$ and $\tau_2$-small-set $2\sqrt{k}$-neighbor expanders, respectively, and $\tau_1, \tau_2 = O(D^{5/8})$. It happens that $\frac{\tau_1}{\tau_2} = \frac{\beta_1}{\beta_2}$, so the precise setting of $\tau$ is not important, only that it captures the asymptotic size of $\tau_1, \tau_2$.

The first step is to show that most of the edges from $S_1$ and $S_2$ to $U$ in fact point to $U_\ell$.

\begin{claim}
    For each $i \in \{ 1, 2 \}$, the number of edges in $G[S_i, U]$ incident to $U_\ell$ is at least $\parens*{1-\sqrt{\delta} - 2k^{-1/2}}\cdot k |S_i|$.
\end{claim}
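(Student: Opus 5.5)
We prove the claim for each $i\in\{1,2\}$ separately; the ambient small-set expansion of $G[L_i,M]$ does all the work. Fix $i$. Since $G[L_i,M]$ is $(k,\cdot)$-biregular, $G[S_i,U]$ has exactly $k|S_i|$ edges, so it suffices to show that at most $(\sqrt{\delta}+2k^{-1/2})\,k|S_i|$ of them land in $U_h$. The statement is the direct analogue of the corresponding claim in \cite{HLMRZ25}, and we follow their argument, applied to $G[L_i,M]$ (a $\tau_i$-small-set $2\sqrt{k}$-neighbor expander by \Cref{lem:our-base-graph}).

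\textbf{Step 1: apply neighbor expansion.} The property we use is that for any $S\subseteq L_i$ with $|S|$ small enough (guaranteed by $|S_i|\le \eta D$ with $\eta$ small), all but a $2\sqrt{k}/k=2k^{-1/2}$ fraction of the $k|S|$ edges from $S$ go to vertices of $M$ whose degree into $S$ is at most $\tau_i$. Equivalently, the number of $S$-neighbors $u$ with $\deg_S(u)>\tau_i$ absorbs at most $2\sqrt{k}|S|$ edges. Applying this to $S=S_i$ bounds by $2k^{-1/2}\cdot k|S_i|$ the edges of $G[S_i,U]$ landing on vertices $u$ with $\deg_{S_i}(u)>\tau_i$.

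\textbf{Step 2: the intermediate range.} It remains to count edges from $S_i$ to vertices $u\in U_h$ with $\deg_{S_i}(u)\le\tau_i$. Membership in $U_h$ is defined using $\deg_{S_1\cup S_2}(u)\ge\tau/\delta$. For such $u$, $\deg_{S_i}(u)\le\tau_i\le\tau$, so the other part $S_{3-i}$ must supply degree at least $\tau/\delta-\tau$ at $u$. Hence each such $u$ is a very high-degree vertex for $S_{3-i}$. There are two ways to control these vertices.
\begin{itemize}
\item Apply Step 1 to $S_{3-i}$ in $G[L_{3-i},M]$. This shows that the total number of edges from $S_{3-i}$ into vertices of $S_{3-i}$-degree above $\tau_{3-i}$ is at most $2\sqrt{k}|S_{3-i}|$, so the number of such $u$ is at most $2\sqrt{k}|S_{3-i}|\delta/\tau$. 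Multiplying by $\tau_i$ gives at most $2\sqrt{k}\,\delta\,|S_{3-i}|\cdot\tau_i/\tau$ edges from $S_i$.
\item Alternatively, apply Step 1 to the combined set $S_1\cup S_2$ viewed in the duplicated graph $G[L_1\cup L_2,M]$, which is a $\tau$-small-set expander by \Cref{prop:dup-exp-defs} since $L_1\cup L_2$ is the $(\beta_1+\beta_2)$-duplication of $V'$.
\end{itemize}
The second route is cleaner. Vertices of combined degree at least $\tau/\delta\ge\tau$ receive at most $2\sqrt{k}(|S_1|+|S_2|)$ edges in total, and the degree threshold $\tau/\delta$ (rather than $\tau$) is what yields the $\sqrt{\delta}$ term, via the averaging argument of \cite{HLMRZ25}.

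\textbf{Main obstacle.} The difficulty is converting a bound on total edges into a bound \emph{per part $S_i$}, because edges from $S_{3-i}$ could dominate the high-degree vertices. The plan is to split by the ratio $\deg_{S_i}(u)/\deg_{S_1\cup S_2}(u)$:
\begin{itemize}
\item Vertices where $S_i$ contributes at least a $\sqrt{\delta}$-fraction of the degree have $\deg_{S_i}(u)\ge\tau\sqrt{\delta}/\delta=\tau/\sqrt{\delta}\ge\tau_i$. They are therefore already handled by Step 1 for $S_i$ alone, giving the $2k^{-1/2}$ term.
\item Vertices where $S_i$ contributes less than a $\sqrt{\delta}$-fraction receive at most a $\sqrt{\delta}$-fraction of their edges from $S_i$. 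Charging these edges against the at most $k|S_i|$ edges of $S_i$ yields the $\sqrt{\delta}\,k|S_i|$ term.
\end{itemize}
This charging step needs care, since the $S_{3-i}$-edges are not bounded by $|S_i|$. If the naive version fails, we fall back on the skeleton expansion of \Cref{lem:our-base-graph}, as \cite{HLMRZ25} do, to bound the number of $U_h$ vertices.
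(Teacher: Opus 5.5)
The gap is in the second bullet of your ``main obstacle'' paragraph, and it cannot be closed. Take $u\in U_h$ with $\deg_{S_i}(u)<\sqrt{\delta}\,\deg_{S_1\cup S_2}(u)$. The $\sqrt{\delta}$-fraction is relative to the \emph{combined} degree of $u$, which is then supplied mostly by $S_{3-i}$. Summing over such $u$ therefore bounds the $S_i$-edges by about $\sqrt{\delta}\,k|S_{3-i}|$, not by $\sqrt{\delta}\,k|S_i|$. Nothing stops all $k|S_i|$ edges of $S_i$ from landing on such vertices.

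In fact, with $U_h$ defined through the combined degree, the per-part statement is false. Take $S_1=\{v\}$ with $v\in L_1$, and $S_2=\bigcup_{u\in N_G(v)}N_{L_2}(u)$. Then $|S_2|\le kD_2$ is a constant, so it meets the size bound $\eta m_q$ once $q$ is large. Every $u\in N_G(v)$ has $\deg_{S_1\cup S_2}(u)\ge D_2>\tau/\delta$, since the paper itself uses $\tau/\delta\le o_D(1)\cdot D_2/(d_1+d_2)$. Hence $N_G(v)\subseteq U_h$, and $e(S_1,U_\ell)=0<(1-\sqrt{\delta}-2k^{-1/2})\,k$. Swapping the roles of $L_1$ and $L_2$ (using $D_1=\Theta(D)$) breaks the case $i=2$ in the same way. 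So the claim as written cannot be proved by your charging argument, by the skeleton-expansion fallback, or by any argument that, as the paper's remark suggests, uses only the expansion of $G[L_i,M]$.

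What does hold is the combined bound $e(S_1\cup S_2,U_\ell)\ge(1-\sqrt{\delta}-2k^{-1/2})\,k(|S_1|+|S_2|)$, and your ``second route'' is exactly how to get it. $G[L_1\cup L_2,M]$ is the $(\beta_1+\beta_2)$-duplication of $G'$. By Proposition~\ref{prop:dup-exp-defs} it is therefore a $\tau$-small-set $2\sqrt{k}$-neighbor expander with $\tau=\tau_1+\tau_2$, which is why the threshold defining $U_h$ uses $\tau$. The single-set argument of \cite{HLMRZ25} then applies to $S_1\cup S_2$ unchanged.

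This combined statement is what you should prove, and the subsequent lower bound on the number of red edges should be derived from it. The $d_i$-weighted sum needed there follows with a loss of at most a factor $\max(d_1,d_2)/\min(d_1,d_2)$ in the error term.

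Separately, your Step~1 takes as the definition of neighbor expansion an edge-count statement that is already the single-set claim at threshold $\tau_i$. If that were the definition, the threshold $\tau/\delta$ and the $\sqrt{\delta}$ loss would be superfluous. Check it against the actual definition in \cite{HLMRZ25} before relying on it.
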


The proof of this claim is identical to Claim 2.10 in \cite{HLMRZ25} so we omit it. It relies only on the fact that $G[L_i, M]$ is a $\tau_i$-small-set $2\sqrt{k}$-neighbor expander. Essentially, this claim says that most edges in $G$ coming out of $S_1$ and $S_2$ point to low-degree middle vertices, whose gadgets experience good expansion into $R_2$. 

\begin{definition}
    For $S_1 \subseteq L_1$, $S_2 \subseteq L_2$, and $U = N_{G}(S_1 \cup S_2) \subseteq M$, if a vertex $v \in R_2$ is a neighbor of $S$ in the final product due to connections from the gadget $H_u$ for $u\in U$, then we color the edge $(u, v)$ red.
    The red edges form a subgraph of $G[R_2, M]$, which we denote as $\RED(S_1 \cup S_2)$.
\end{definition}

By the choice of the threshold, we have $\frac{\tau}{\delta} \le o_D(1) \cdot \frac{D_2}{d_1 + d_2}$, and hence if we inspect the gadget $\calH_u$ around $u \in U_\ell$, letting $S_1(u) := \Nbr^{G[L_1, M]}_u(N^G_{S_1}(u)) \subseteq Y_1$ and $S_2(u) := \Nbr^{G[L_2, M]}_u (N^G_{S_2}(u)) \subseteq Y_2$, it holds that $|N^{\calH_u}_{Z_2}(S_1(u) \cup S_2(u))| \ge (1 - o_D(1)) (d_1 |S_1(u)| + d_2 |S_2(u)|)$. In particular, this implies that
\begin{align*}
    e(\RED(S_1 \cup S_2)) 
    &\ge \sum_{u \in U_\ell} (1 - o_D(1)) (d_1 |S_1(u)| + d_2 |S_2(u)|) \\
    &= (1 - o_D(1)) \cdot \left( d_1 \cdot e_{G}(S_1, U_\ell) + d_2 \cdot e_G(S_2, U_\ell) \right) \\
    &\ge (1 - o_D(1)) (1 - \sqrt{\delta} - 2k^{-1/2}) \cdot (d_1 \cdot e_G(S_1, U) + d_2 \cdot e_G(S_2, U)).
\end{align*}

The remainder of the argument is to show that there are very few collisions between the neighborhoods of different gadgets, so that most of the $\RED$ edges are actually going to distinct elements of $R_2$. 

We construct the \emph{collision graph} $C$ --- the multi-graph $C$ on vertex set $U \subseteq M$ by placing a copy of the edge $\{u,v\}$ for each $u \neq v \in U$, and $r\in R_2$ such that $\{u,r\}$ and $\{v,r\}$ are red edges in $\RED$.
The number of neighbors of $S_1 \cup S_2$ in $R_2$ in the final product $\calG$ is at least
\begin{align*}
    e(\RED) - e(C),
\end{align*}
since a vertex $v \in R_2$ with degree $d_v$ in $\RED$ contributes one neighbor, but it is counted $d_v$ times in $e(\RED)$ and $\binom{d_v}{2}$ times in $e(C)$, and $d_v - \binom{d_v}{2} \leq 1$ for all $d_v \in \bbN$.

\begin{claim}[\cite{HLMRZ25}] \label{claim:middle-to-right}
    Suppose $k\delta^2 \leq o_D(1)$, $\lambda \leq s \delta$, and $d_1, d_2 \geq \frac{1}{\delta} \max\{\lambda, \sqrt{s}\} \log D$.
    Then, 
    \[
        e(C) \leq o_D(1) \cdot k(d_1 |S_1| + d_2 |S_2|) = o_D(1) \cdot (\Delta_1 |S_1| + \Delta_2 |S_2|).
    \]
\end{claim}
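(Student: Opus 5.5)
We bound $e(C)$ by following the proof of the corresponding collision bound in \cite{HLMRZ25}, with the gadget now fed two kinds of input. Every collision edge $\{u,v\}$ of $C$ comes from a vertex $r\in R_2$ receiving red edges from both $u$ and $v$. By the structured-graph property (Definition~\ref{def:structured-bipartite-graph}) of $G[R_2,M]$, if $u\in M_a$ and $v\in M_b$ then $a\neq b$, and $N(u)\cap N(v)=\Nbr_u(Q_i^{a,b})$ for a unique special set index $i$. So the collisions between $u$ and all its partners in $M_b$ are exactly the red neighbours of $u$ inside $\Nbr_u(Q_i^{a,b})$ that are also red for the matching $v_i$. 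We therefore bound $e(C)$ by a sum over ordered pairs $(a,b)$ and over $u\in U\cap M_a$ of $\sum_i \min\{|\RED_u\cap \calZ_i^{a,b}|,|\RED_{v_i}\cap \calZ_i^{a,b}|\}$. Here $\RED_u=N^{\calH_u}_{Z_2}(S_1(u)\cup S_2(u))$.

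The first key step is a per-gadget estimate from the spread property of Lemma~\ref{lem:Zgadget}. This is the one place where the $Z$-gadget differs from the ordinary gadget, and the change is only notational: the quantity $d|A|$ is replaced by the weighted load $w(u):=d_1|S_1(u)|+d_2|S_2(u)|$. Spread then says that for any window $W$ of $\ge s\log D/\min\{d_1,d_2\}$ special sets, the red mass of $u$ in those sets is at most $64|W|\max\{w(u)/s,\log D\}$. In other words, red edges of $u$ are spread almost uniformly over the $\Theta(s)$ special sets.

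The second step follows \cite{HLMRZ25}: charge each collision to the endpoint of smaller load and use the skeleton expansion of $G[R_2,M]$ to control the collisions. Skeleton expansion holds because it is preserved under duplication (Proposition~\ref{prop:dup-exp-defs} and Lemma~\ref{lem:our-base-graph}), and we would use it with parameter $\lambda=O(D^{1/4})$. Concretely, we partition $U$ by the dyadic level of $w(u)$. Pairs where both loads are $\lesssim s\log D$ are handled by the $\log D$ term of spread together with the bound on the number of such pairs coming from skeleton expansion (at most $\lambda$ times the number of vertices, plus $k\delta^2$-type error). Pairs with a large load use the $w(u)/s$ term, so that the total collision contribution of $u$ is $O(|W|\cdot w(u)/s)$. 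Summing over the at most $s\delta$-sized windows gives $o_D(1)\,w(u)$.

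Summing over $u$ and using $\sum_u w(u)=d_1 e_G(S_1,U)+d_2e_G(S_2,U)=k(d_1|S_1|+d_2|S_2|)$ yields the claim. The hypotheses $k\delta^2=o_D(1)$, $\lambda\le s\delta$ and $d_i\ge \delta^{-1}\max\{\lambda,\sqrt s\}\log D$ are used exactly as in \cite{HLMRZ25}. They make the windows admissible for spread, since $|W|$ exceeds the threshold because $d_1,d_2\ge\sqrt s\log D/\delta$, and they make the additive $\log D$ terms negligible against $\min\{d_1,d_2\}$.

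The main obstacle is checking that mixed degrees cause no loss. The argument in \cite{HLMRZ25} tacitly uses a single gadget degree $d$. Here a low-load vertex may carry only $S_1$-mass, scaled by $d_1$, while its partner carries $S_2$-mass, scaled by $d_2$. I would handle this by working throughout with the weighted load $w(u)$ and with $\min\{d_1,d_2\}$ in the thresholds. Because $d_2\in(1\pm0.001)\alpha d_1$ and $d_1'=\beta_1d_1/\beta_2$, all gadget degrees are within constant factors of one another. So every inequality of \cite{HLMRZ25} survives with constants depending only on $\alpha,\beta_1,\beta_2$, which are absorbed into the $o_D(1)$.
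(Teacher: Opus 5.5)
Your proposal follows the paper's route. The paper omits the proof and defers to Claim 2.13 of \cite{HLMRZ25}, using small-set skeleton expansion of $G[R_2,M]$ together with the gadget's spread property; your adaptation via the weighted load $w(u)=d_1|S_1(u)|+d_2|S_2(u)|$ (noting $w(u)\ge\min\{d_1,d_2\}$ for $u\in U$) with thresholds in $\min\{d_1,d_2\}$ is exactly the modification mixed degrees require. One correction: the hypothesis $d_i\ge\sqrt{s}\log D/\delta$ does not make every window exceed the spread threshold; it makes the threshold $s\log D/\min\{d_1,d_2\}\le\delta\sqrt{s}$ small, so each vertex's window in each of the $k-1$ classes $M_b$ can be padded up to it at relative cost $O(ks\log^2 D/\min\{d_1,d_2\}^2)\le O(k\delta^2)$, which is precisely where $k\delta^2=o_D(1)$ is used (not in the skeleton-expansion count).
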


The proof of this claim is analogous to the proof of Claim 2.13 in \cite{HLMRZ25}, so we omit it here. It relies on the graph $G[R_2, M]$ being a $O(D^{1/4})$-small-set skeleton expander. 

Now, to finish up the proof of \Cref{thm:Z-lossless}, we have that for $\delta \le o_D(1) \cdot \frac{1}{k^2}$ and $k \ge 16/\eps^2$, 
\begin{align*}
    e(\RED(S_1 \cup S_2)
    &\ge (1 - o_D(1)) (1 - \sqrt{\delta} - 2k^{-1/2}) \cdot (d_1 \cdot e_G(S_1, U) + d_2 \cdot e_G(S_2, U)) \\
    &\ge (1-\eps/2) \cdot k(d_1 |S_1| + d_2|S_2|) \\
    &= (1-\eps/2) \cdot (\Delta_1 |S_1| + \Delta_2 |S_2|).
\end{align*}
The number of neighbors of $S_1 \cup S_2$ in the final graph $\calG$ is at least $e(\RED(S_1 \cup S_2)) - e(C)$, and from \Cref{claim:middle-to-right} we have that $e(C) \le o_D(1) \cdot (\Delta_1 |S_1| + \Delta_2 |S_2|)$. Thus, choosing $D_1, D_2$ large enough, we have that 
\[
    |N_{R_2}(S_1 \cup S_2)| \ge (1 - \eps) \cdot (\Delta_1 |S_1| + \Delta_2 |S_2|).
\]
The proof for expansion from $R_1 \cup R_2$ to $L_2$ is identical.

\subsubsection{The Existence of Good Z-Gadget Graphs} \label{sec:proof-Zgadget}

In this section, we prove \Cref{lem:Zgadget}, restated below.

\Zgadgetz*

\noindent
Define a distribution $\calG_{D_1,D_2,d_1,d'_1,d_2}$ of random graphs as follows. 
\distbox{$\calG_{D_1,D_2,d_1,d'_1,d_2}$}{
\begin{itemize}
    \item The vertex set consists of $Y_1 \cup Y_2 \cup Z_1 \cup Z_2$, where $|Y_1| = |Z_1| = D_1$ and $|Y_2| = |Z_2| = D_2$.
    \item Sample a random $(d_1, d'_1)$-biregular graph on $D_1 + D_2$ vertices and place it on vertex sets $(Y_1, Z_2)$ and $(Z_1, Y_2)$.
    \item Sample a random $(d_2, d_2)$-biregular graph on $D_2 + D_2$ vertices and place it on vertex sets $(Y_2, Z_2)$.
\end{itemize}
}
\noindent
We will show that a random graph sampled from $\calG_{D_1,D_2,d_1,d'_1,d_2}$ will with high probability satisfy both expansion and spread as stated in \Cref{lem:Zgadget}. We start with spread.

\paragraph{Spread.}
Showing that a random graph sampled from $\calG_{D_1,D_2,d_1,d'_1,d_2}$ satisfies spread is a simple corollary of the following lemma from~\cite{HLMOZ25}.

\begin{lemma}[\cite{HLMOZ25}, Lemma 5.3] \label{lem:HLMOZ-spread}
    Let $H$ be a random $(d_1, d_2)$-biregular graph on vertex sets $Y \cup Z$, and let $n_1 = |Y|$, $n_2 = |Z|$, and $n = n_1 + n_2$. Let $Z = Z_1 \cup \dots \cup Z_s$ be a fixed partition of $Z$ such that $\frac{|Z|}{2s} \le |Z_i| \le \frac{2|Z|}{s}$ for each $i \in [s]$. Then, with probability $1 - \Theta(1/n)$, we have that for all $A \subseteq Y$ and all $W \subseteq [s]$ with $|W| \ge \frac{s \log n}{d_1}$, 
    \[
        \sum_{i \in W} |N_Z(A) \cap Z_i| \le 32 |W| \cdot \max \left\{ \frac{d_1}{s} |A|, \log n \right\} 
    \]
\end{lemma}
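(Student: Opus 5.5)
The plan is a first-moment (union bound) argument in the configuration model for the random $(d_1,d_2)$-biregular graph $H$. Each vertex of $Y$ carries $d_1$ half-edges and each vertex of $Z$ carries $d_2$ half-edges, and these are matched by a uniformly random perfect matching. Since $d_1 n_1 = d_2 n_2$, the uniform biregular graph is obtained by conditioning on simplicity, which costs only a constant factor for constant degrees; alternatively the lemma can be read directly in the multigraph model. Fix $A \subseteq Y$ with $|A| = a$ and $W \subseteq [s]$ with $|W| = w \ge \frac{s\log n}{d_1}$, and write $Z_W := \bigcup_{i\in W} Z_i$.

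The key observation is that $\sum_{i\in W}|N_Z(A)\cap Z_i| = |N_Z(A)\cap Z_W|$ is at most the number $X$ of half-edges leaving $A$ that land in $Z_W$. The part sizes give $|Z_W| \le 2w|Z|/s$, so $Z_W$ holds at most a $2w/s$ fraction of the right half-edges. Exposing the $d_1 a$ half-edges of $A$ one at a time, each lands in $Z_W$ with conditional probability at most about $2w/s$. Hence $X$ is stochastically dominated by (or is negatively associated like) a hypergeometric variable with mean $\mu \le 2w d_1 a/s$.

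Set the threshold $t := 32 w \max\{d_1 a/s, \log n\}$. Then $t \ge 16\mu$, and a Chernoff bound for the hypergeometric/negatively associated setting gives
\[
\Pr[X \ge t] \le 2^{-t} \le \exp\bigl(-c\, w\max\{d_1 a/s,\log n\}\bigr),
\]
with an absolute constant $c$ that is comfortably large, roughly $20$.

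Next we union bound over $A$ and $W$. There are at most $\binom{n_1}{a} \le e^{a\log n}$ choices of $A$ and at most $\binom{s}{w} \le e^{w\log s} \le e^{w\log n}$ choices of $W$. It therefore suffices that
\[
c\, w\max\{d_1 a/s, \log n\} \;\ge\; a\log n + w\log n + 2\log n + \Omega(\log n).
\]
The hypothesis $w \ge s\log n/d_1$ is what makes this work. It gives $w d_1 a/s \ge a\log n$, so the first term of the max pays for the choices of $A$. The second term pays directly for the choices of $W$. Since $\max\{x,y\} \ge (x+y)/2$, both costs are covered simultaneously once $c \ge 4$ or so. Summing the resulting failure probabilities over the $\le n^2$ pairs $(a,w)$ leaves total failure probability $O(1/n)$, as claimed.

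The main obstacle is making the concentration step rigorous. The landings of the $d_1 a$ half-edges are dependent because the matching is sampled without replacement, so I would invoke negative association (or Hoeffding's comparison of hypergeometric to binomial tails) to justify the Chernoff bound. I would also need to handle $a$ large enough that $d_1 a$ is a constant fraction of all half-edges; there the bound $t \ge |Z_W|$ can make the event trivially impossible, or the same estimate still applies. The remaining work is constant-chasing to confirm that $32$ leaves enough slack, including any factor lost in passing from the multigraph to the simple biregular graph.
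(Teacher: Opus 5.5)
The paper gives no proof of this lemma; it is quoted from \cite{HLMOZ25}. So I can only judge your argument on its own terms.

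The configuration-model core is sound. The obstacle you single out (dependence, and very large $a$) is not really one. The $d_1a$ right half-edges matched to $A$ form a uniformly random $d_1a$-subset of all $d_2n_2$ right half-edges. Hence $X$ is exactly hypergeometric with mean $d_1a|Z_W|/n_2\le 2wd_1a/s$ for every $a$, and Hoeffding's comparison with the binomial gives $\Pr[X\ge t]\le (e\mu/t)^t\le (e/16)^t$. Your union bound is also right. The hypothesis $w\ge s\log n/d_1$ gives $w\max\{d_1a/s,\log n\}\ge\frac12(a+w)\log n$, which pays for $\binom{n_1}{a}\binom{s}{w}\le n^{a+w}$ (using $s\le |Z|$, since every part is nonempty).

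The genuine gap is the transfer from the configuration multigraph to the random simple biregular graph $H$. You justify it by $\Pr[\text{simple}]=\Omega(1)$, which needs bounded degrees. The lemma has no such hypothesis, and this paper invokes it in the proof of Lemma~\ref{lem:Zgadget} with $d_1,d_2\ge D^{0.1}$. In the actual construction, $d_1,d_2\ge D^{1/4}\log^2 D/\delta$ and $s=\Theta(\sqrt D)$.

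In that regime $\Pr[\text{simple}]$ decays like $\exp(-\Theta(d_1d_2))$, while your per-event bound is only $\exp(-\Theta((a+w)\log n))$. For the extremal pairs ($w\approx s\log n/d_1$, $a$ as small as the event permits), that exponent is about $s\log^2 n/d_1=O(\delta D^{1/4})$. This is far below $d_1d_2\ge D^{1/2}$, so conditioning on simplicity destroys the bound rather than costing a constant factor.

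Nor can you simply read the lemma in the multigraph model. The same gadget sample must also satisfy the expansion property, which the paper obtains from Theorem~\ref{thm:embedding} for the uniform simple biregular distribution.

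To close the gap you need a tail estimate valid directly in the simple model. One option is a McKay-type switching bound $\Pr[F\subseteq E(H)]\le (Cd_1/n_2)^{|F|}$ for edge sets $F$ with distinct $Z$-endpoints, valid when $|F|$ and $d_1d_2$ are small compared with the number of edges. Combine it with a union bound over which $t$ vertices of $Z_W$ are hit and by which vertices of $A$. This gives $\binom{|Z_W|}{t}(Cad_1/n_2)^t\le (eC/16)^t$, and the constant $32$ leaves enough room to beat the $n^{a+w}$ choices of $(A,W)$.
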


Let us apply \Cref{lem:HLMOZ-spread} to the random graphs $(Y_1, Z_2)$ and $(Y_2, Z_2)$ sampled in $\calG_{D_1,D_2,d_1,d'_1,d_2}$ and to the partitionings $(\calZ_i^{a,b})_{i \in [s(a,b)]}$. We obtain that with probability $1 - \Theta(k^2/D)$, for all $A_1 \subseteq Y_1$ and $A_2 \subseteq Y_2$ and all $W \subseteq [s]$ with $|W| \ge \frac{s \log D}{\min \{ d_1, d_2 \}}$,
\begin{align*}
    \sum_{i \in W} |N_{Z_2}(A_1 \cup A_2) \cap \calZ_i^{a,b}|
    &\le \sum_{i \in W} |N_{Z_2}(A_1) \cap \calZ_i^{a,b}| 
     + \sum_{i \in W} |N_{Z_2}(A_2) \cap \calZ_i^{a,b}|\\
    &\le 32 |W| \cdot \max \left\{ \frac{d_1}{s} |A_1|, \log D \right\}
     + 32 |W| \cdot \max \left\{ \frac{d_2}{s} |A_2|, \log D \right\} \\
    &\le 64 |W| \cdot \max \left\{ \frac{d_1 |A_1| + d_2 |A_2|}{s}, \log D \right\},
\end{align*}
as desired. The analogous statement for sets in $Z_1 \cup Z_2$ expanding to $Y_2$ follows via the exact same proof.

\paragraph{Expansion.} 
To show expansion, we will follow the outline given in~\cite{HMMP24}. We first show the desired statement for the appropriate mixture of Erd\H{o}s-Renyi graphs, and then transfer the result to the mixed-regular random graphs that we are sampling from.

The Erd\H{o}s-Renyi type distribution of graphs we consider is the following, denoted $\calE_{D_1,D_2,p_1,p_2}$.

\distbox{$\calE_{D_1,D_2,p_1,p_2}$}{
\begin{itemize}
    \item The vertex set consists of $Y_1 \cup Y_2 \cup Z_1 \cup Z_2$, where $|Y_1| = |Z_1| = D_1$ and $|Y_2| = |Z_2| = D_2$.
    \item For pair of vertices in $Y_1 \times Z_2$ and in $Z_1 \times Y_2$, place an edge between them with probability $p_1$.
    \item For each pair of vertices in $Y_2 \times Z_2$, place an edge between them with probability $p_2$.
\end{itemize}
}

\begin{lemma} \label{lem:erdos-renyi-lossless}
    Let $E \sim \calE_{D_1,D_2,p_1,p}$. Then, with probability $1 - O(1/D)$, for all sets $A_1 \subseteq Y_1$ and $A_2 \subseteq Y_2$ of sizes $t_1$ and $t_2$ respectively,
    \[
        |N_{Z_2}(A_1 \cup A_2)| \ge (1 - \eps_{t_1,t_2}) \cdot (t_1p_1 + t_2p_2) D_2,
    \]
    where $\eps_{t_1,t_2} = t_1p_1 + t_2p_2 + \sqrt{\frac{4(t_1+t_2)\log D}{(t_1p_1 + t_2p_2) D_2}}$.
    The same bound holds simultaneously for subsets of $Z_1, Z_2$ expanding into $Y_2$.
\end{lemma}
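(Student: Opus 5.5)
The plan is a first-moment argument: fix the sets $A_1, A_2$ and the vertex $z$, compute the probability that $z$ is a neighbour, apply Chernoff to the count, and union bound over all sets. Fix $A_1 \subseteq Y_1$, $A_2 \subseteq Y_2$ of sizes $t_1, t_2$. The edges of $Y_1 \times Z_2$ and $Y_2 \times Z_2$ are independent in $\calE_{D_1,D_2,p_1,p_2}$. Hence, for each $z \in Z_2$, the event $z \in N_{Z_2}(A_1\cup A_2)$ is independent across $z$, with probability
\[
    q := 1-(1-p_1)^{t_1}(1-p_2)^{t_2}.
\]
Thus $X := |N_{Z_2}(A_1\cup A_2)|$ is $\mathrm{Bin}(D_2, q)$.

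First I will lower bound the mean. Let $\mu_0 := t_1p_1 + t_2p_2$. Since $(1-p)^t \le e^{-pt}$ and $1-e^{-x} \ge x - x^2/2 \ge x(1-x)$, we get $q \ge \mu_0(1-\mu_0)$. Therefore $\mathbb{E}X \ge (1-\mu_0)\mu_0 D_2$. This accounts for the first term $t_1p_1+t_2p_2$ in $\eps_{t_1,t_2}$. (If $\mu_0 \ge 1$ the claim is vacuous, so assume $\mu_0<1$.)

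Next comes concentration. A multiplicative Chernoff bound gives $\Pr[X \le (1-\theta)\mathbb{E}X] \le \exp(-\theta^2\mathbb{E}X/2)$. I will choose $\theta$ so that the bound is at most $D^{-2(t_1+t_2)}$, i.e. $\theta^2 \mathbb{E}X/2 \ge 2(t_1+t_2)\log D$. With $\mathbb{E}X \approx \mu_0 D_2$ this gives
\[
    \theta \approx \sqrt{\tfrac{4(t_1+t_2)\log D}{\mu_0 D_2}},
\]
which is the second term in $\eps_{t_1,t_2}$. Combining, $X \ge (1-\mu_0)(1-\theta)\mu_0 D_2 \ge (1-\mu_0-\theta)\mu_0 D_2$.

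The main obstacle is making the constants match exactly. The mean is $\mathbb{E}X \ge (1-\mu_0)\mu_0D_2$ rather than $\mu_0 D_2$, which inflates $\theta$ by a factor $(1-\mu_0)^{-1/2}$. I would first try to absorb this by choosing the union-bound exponent slightly more carefully. Failing that, I would use the fact that when $\theta$ is not small the claimed bound is trivially implied (the right-hand side is nonpositive), so only the regime of small $\mu_0,\theta$ matters, where the slack suffices. Alternatively, the Chernoff exponent may be applied directly to $qD_2$ and then $q$ lower bounded afterwards.

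Finally, take a union bound. The number of pairs $(A_1,A_2)$ with sizes $(t_1,t_2)$ is at most $D^{t_1+t_2}$, so the total failure probability is at most
\[
    \sum_{(t_1,t_2)\ne(0,0)} D^{t_1+t_2}D^{-2(t_1+t_2)} = \sum_{t\ge1}(t+1)D^{-t} = O(1/D).
\]
The same argument applies verbatim to subsets of $Z_1\cup Z_2$ expanding into $Y_2$, since $Z_1\times Y_2$ has edge probability $p_1$ and $Z_2\times Y_2$ has edge probability $p_2$, independent. A final union bound over the two directions keeps the failure probability at $O(1/D)$.
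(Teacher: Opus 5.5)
Your proposal is correct and follows the paper's argument. You lower-bound the neighbour probability by $\mu_0(1-\mu_0)$ (the paper counts the exactly-one-edge event, you use $1-e^{-x}$, to the same effect), apply Chernoff to each fixed $(A_1,A_2)$ with failure probability $D^{-2(t_1+t_2)}$, and union bound over all sizes to get $O(1/D)$. The constant-matching issue you flag is resolved as in the paper: the deviation term $\theta\,\mathbb{E}X=\sqrt{4(t_1+t_2)\log D\cdot \mathbb{E}X}$ is bounded using $\mathbb{E}X\le \mu_0 D_2$ (by a union bound), and $\mathbb{E}X\ge \mu_0(1-\mu_0)D_2$ is used only in the main term, which gives exactly $(1-\mu_0-\theta_0)\mu_0 D_2$.
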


\begin{proof}
    For $A_1 \subseteq Y_1$ with $|A_1| = t_1$ and $A_2 \subseteq Y_2$ with $|A_2| = t_2$, we have that
    \[
        |N_{Z_2}(A_1 \cup A_2)| = \sum_{v \in Z_2} \mathbbm{1}[v \in N_{Z_2}(A_1 \cup A_2)].
    \]
    For each $v \in Z_2$, the probability that there is an edge between $v$ and $A_1 \cup A_2$ is at least the probability that there is \emph{exactly one} edge between $v$ and $A_1 \cup A_2$, which is $q_{t_1,t_2} := t_1p_1(1-p_1)^{t_1-1}(1-p_2)^{t_2} + t_2p_2(1-p_1)^{t_1}(1-p_2)^{t_2-1} \ge (t_1p_1 + t_2p_2)(1 - t_1p_1 - t_2p_2) := q_{t_1,t_2}$. Then, by the Chernoff bound (noting that the event that $v$ has an edge from $A_1 \cup A_2$ is independent for each $v$), 
    \[
        \Pr[|N_{Z_2}(A_1 \cup A_2)| \le q_{t_1,t_2} D_2 - r \sqrt{q_{t_1,t_2} D_2}] \le \exp(-r^2/2),
    \]
    which implies that 
    \[
        |N_{Z_2}(A_1 \cup A_2)| \ge q_{t_1,t_2} D_2 - \sqrt{4(t_1+t_2)q_{t_1,t_2} D_2 \log D}
    \] 
    except with probability at most $D^{-2(t_1+t_2)}$. This in particular implies  that 
    \begin{align*}
        |N_{Z_2}(A_1 \cup A_2)| 
        &\ge (t_1p_1 + t_2p_2)(1 - t_1p_1 - t_2p_2) D_2 - \sqrt{4(t_1+t_2)(t_1p_1 + t_2p_2) D_2 \log D} \\
        &= (1 - \eps_{t_1,t_2}) (t_1p_1 + t_2p_2) D_2, \numberthis \label{eqn:exp-bound}
    \end{align*}
    where $\eps_{t_1,t_2} = t_1p_1 + t_2p_2 + \sqrt{\frac{4(t_1+t_2)\log D}{(t_1p_1 + t_2p_2)D_2}}$, except with probability at most $D^{-2(t_1+t_2)}$. Then, by a union bound over all sets $A_1 \subseteq Y_1$ of size $t_1$ and $A_2 \subseteq Y_2$ of size $t_2$, we obtain that the probability that \Cref{eqn:exp-bound} holds simultaneously for \emph{all} $A_1 \subseteq Y_1$, $A_2 \subseteq Y_2$ of sizes $t_1,t_2$ respectively is at least $1-D_1^{t_1}D_2^{t_2} D^{-2(t_1+t_2)} \ge 1 - D^{t_1+t_2}$. Finally, we union bound over all $t_1, t_2 \ge 0$, not both $0$, to obtain that the the probability that \Cref{eqn:exp-bound} holds for \emph{all} sets $A_1, A_2$, not both empty, is at least $1 - \sum_{\substack{t_1,t_2 \ge 0 \\ (t_1, t_2) \not= (0,0)}} D^{t_1+t_2} = 1 - O(1/D)$.
\end{proof}

Now that we know that with high probability a random graph sampled from $\calE_{D_1,D_2,p_1,p_2}$ satisfies the property of expansion, we'd like to say that the same thing holds for random graphs sampled from $\calG_{D_1,D_2,d_1,d_1',d_2}$. To do this, we will use the following theorem stating that a random Erd\H{o}s-Renyi graph is with high probability a subset of a slightly larger regular graph. The point is that this allows us to argue that the slightly larger regular graph inherits the properties of the Erd\H{o}s-Renyi graph. For us, we will apply this theorem to the graphs on $(Y_1, Z_2)$, $(Z_1, Y_2)$, and $(Y_2,Z_2)$ to show that with high probability a random graph sampled from $G_{D_1,D_2,d_1,d_1',d_2}$ contains a random graph sampled from $\calE_{D_1,D_2,p_1,p_2}$. 

\begin{theorem}[\cite{HMMP24}, Theorem C.1] \label{thm:embedding}
    Fix $n_1, n_2, d_1, d_2$ such that $m = n_1 d_1 = n_2 d_2$. There is a universal constant $C$ such that if $\gamma \in (0, 1)$ satisfies $\gamma \ge C \left( \frac{d_1d_2}{m} + \frac{\log m}{\min \{ d_1, d_2 \}} \right)^{1/3}$, 
    then for $p = \frac{(1-\gamma)m}{n_1n_2}$, there is a joint distribution of $E \sim \calE_{n_1, n_2, p}$ and $G \sim \calG_{n_1, n_2, d_1, d_2}$ such that
    \[
        \Pr[E \subset G] = 1 - o(1). 
    \]
    Here, $\calE_{n_1,n_2,p}$ denotes the distribution of bipartite graphs on $(n_1, n_2)$ vertices, where each edge is sampled independently with probability $p$, and $\calG_{n_1,n_2,d_1,d_2}$ is the distribution of random $(d_1, d_2)$-biregular bipartite graphs on $(n_1,n_2)$ vertices.
\end{theorem}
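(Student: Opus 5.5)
The statement is Theorem C.1 of \cite{HMMP24}, and in the paper we use it only as a black box. Were I to reprove it, I would follow the ``sandwiching'' strategy of Kim--Vu and Dudek--Frieze--Ruci\'nski--\v{S}ileikis, adapted to the bipartite biregular setting. The idea is to grow the random biregular graph $G$ edge by edge, and to couple this growth with the independent coin flips defining $E$. The coupling should be arranged so that, with probability $1-o(1)$, every edge accepted by $E$ is also placed in $G$.

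The first step is a deterministic enumeration estimate. For a bipartite graph $H$ on $(n_1,n_2)$ in which every left vertex has degree at most $d_1$ and every right vertex degree at most $d_2$, let $N(H)$ denote the number of $(d_1,d_2)$-biregular completions of $H$. I would use McKay-type switching arguments (the sparse biregular enumeration with forbidden edges) to show the following. Whenever $H$ has residual degrees $a_v=d_1-\deg_H v$ and $b_w=d_2-\deg_H w$, the fraction of completions of $H$ containing a further non-edge $vw$ equals
\[
  \frac{a_v b_w}{m-e(H)}\bigl(1+O(\tfrac{d_1d_2}{m}+\theta)\bigr),
\]
where $\theta$ measures how unbalanced the residual degrees are. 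This is exactly where the hypothesis term $d_1d_2/m$ enters.

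The second step defines the coupled process. I would reveal the pairs $vw\in[n_1]\times[n_2]$ in uniformly random order. For each pair, the step samples $G$ from the exact conditional probability that $vw\in G$ given what has been revealed so far, and independently samples $E$ with probability $p$. The two are coupled monotonically: whenever the conditional probability for $G$ is at least $p$, the coupling forces ``$vw\in E$ implies $vw\in G$''. By the estimate from the first step, this conditional probability is approximately $a_vb_w/(m-e)$. Along a typical trajectory the residual degrees stay concentrated around $(1-t)d_1$ and $(1-t)d_2$, where $t$ is the fraction of pairs revealed so far. These concentration statements would follow from Chernoff bounds together with a martingale (Freedman) argument, and this is where the term $\log m/\min\{d_1,d_2\}$ in $\gamma$ appears. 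Under them the conditional probability is
\[
  \frac{(1-t)^2d_1d_2}{(1-t)\,m}\,(1\pm\gamma/3)=\frac{m}{n_1n_2}(1\pm\gamma/3),
\]
using $d_1d_2/m=m/(n_1n_2)$. This exceeds $p=(1-\gamma)m/(n_1n_2)$ by a margin of order $\gamma\, m/(n_1n_2)$. The cube-root dependence on $\gamma$ comes from balancing three quantities: the deviation tolerance, the enumeration error, and the late-stage time window.

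The main obstacle is the end of the process, when $t\to 1$. There the residual degrees are small, so both concentration and the enumeration estimate degrade. I would handle this by stopping the coupling at $t=1-\gamma^{1/3}$ or so. In the remaining window, $E$ is a binomial with mean about $\gamma^{1/3}p\,n_1n_2$ extra edges. I would show that, with probability $1-o(1)$, each of these lands on a pair that $G$ still accepts, possibly after rerouting via the switching bound. Alternatively, I would argue that the total failure probability in this window is $o(1)$ by summing the per-step discrepancies.
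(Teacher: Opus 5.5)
The paper does not prove this statement. It imports it from \cite{HMMP24}, where it is the bipartite analogue of the Dudek--Frieze--Ruci\'nski--\v{S}ileikis embedding theorem, so citing it is all the paper needs. Judged as a reconstruction, your sketch has a genuine gap, exactly at the ``main obstacle'': the pair-by-pair coupling cannot be completed, and neither proposed fix works.

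\textbf{Why your coupling fails.} $E$ tosses a fresh $p$-coin on \emph{every} one of the $n_1n_2$ pairs, so every late-revealed pair must still satisfy $vw\in E\Rightarrow vw\in G$. Your scheme has no slack there. Fix a left vertex $v$. The reveal order is uniform and independent of $G$, so the $d_1$ edges of $G$ at $v$ occupy uniformly random positions among the $n_2$ pairs at $v$. After the last of them, on average $(n_2-d_1)/(d_1+1)$ pairs at $v$ remain unrevealed, and at each of them the exact conditional probability is $q=0$. The $E$-coin at each such pair is still a fresh $\mathrm{Bernoulli}(p)$, so the expected number of violations at $v$ is about $p\,n_2/(d_1+1)\approx 1-\gamma$. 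Thus each left vertex produces a violation with probability bounded away from $0$, nearly independently over $v$, and $\Pr[E\subset G]\to 0$ rather than $1-o(1)$.

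\textbf{Why the fixes do not help.} If you stop the coupling at $t=1-\gamma^{1/3}$, the $\approx\gamma^{1/3}m$ edges of $E$ on late pairs must all be edges of $G$. Uncoupled, each lands in $G$ with probability only $\approx d_1/n_2$, and ``rerouting via switchings'' would change the law of $G$. Summing per-step discrepancies just reproduces the $\Theta(n_1)$ count above, not $o(1)$.

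\textbf{The missing idea.} This is what the cited proof adapts: reveal the \emph{edges} of $G$, not all pairs, in uniformly random order $\eta_1,\dots,\eta_m$.
Given $\eta_1,\dots,\eta_t$ with residual degrees $a_v,b_w$, switchings give $\Pr[vw\in G\mid\eta_1,\dots,\eta_t]\approx a_vb_w/(m-t)$. Hence $\Pr[\eta_{t+1}=vw\mid\eta_1,\dots,\eta_t]\approx a_vb_w/(m-t)^2\approx 1/(n_1n_2)$ for every pair not yet chosen.
For a small constant $c$, one shows this is at least $(1-c\gamma)/(n_1n_2)$ for all $t\le(1-c\gamma)m$. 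At that stage the residual degrees are still about $c\gamma d_i$, so concentration requires $\sqrt{\log m/(\gamma d_i)}\lesssim\gamma$; this is the real source of the cube root, and the switching error there gives the $d_1d_2/m$ term.
At each such step, with probability $1-c\gamma$ you set $\eta_{t+1}$ equal to an i.i.d.\ uniform pair $\xi$ whenever $\xi$ is new; a repeated $\xi$ is already in $G$. So every $\xi$ drawn lies in $G$.
A Poisson number of i.i.d.\ uniform pairs has independent $\mathrm{Bernoulli}(p)$ hit indicators, which yields $E\sim\mathcal{E}_{n_1,n_2,p}$. The last $\approx c\gamma m$ edges of $G$ are never used, and they supply exactly the slack your scheme lacks.

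\textbf{Secondary errors.} In your process the history also fixes $\approx t\,n_1n_2$ revealed non-edges. The correct normalisation is therefore $a_vb_w/\bigl((1-t)(m-e)\bigr)$, and it involves a dense forbidden set outside the sparse McKay regime. Separately, your display $\frac{(1-t)^2d_1d_2}{(1-t)m}$ equals $(1-t)\frac{m}{n_1n_2}$, not $\frac{m}{n_1n_2}$.
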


Applying Theorem~\ref{thm:embedding} to the graphs $(Y_1, Z_2)$, $(Z_1, Y_2)$, and $(Y_2, Z_2)$, we obtain the following corollary:

\begin{corollary} \label{cor:embedding}
    For $D_1, D_2, d_1, d'_1, d_2$ such that $D_1 d_1 = d_2 d'_1$, let $m_1 = D_1d_1$ and $m_2 = D_2d_2$. There is a universal constant $C$ such that if $\gamma \in (0, 1)$ satisfies $\gamma \ge C\left( \frac{d_1d'_1}{m_1} + \frac{d_2^2}{m_2} + \frac{\log (m_1m_2)}{\min\{ d_1, d'_1, d_2 \}} \right)^{1/3}$, then for $p_1 = \frac{(1-\gamma)m_1}{D_1D_2}$ and $p_2 = \frac{(1-\gamma)m_2}{D_2^2}$, there is a joint distribution of $E \sim \calE_{D_1,D_2,p_1,p_2}$ and $G \sim \calG_{D_1,D_2,d_1,d'_1,d_2}$ such that
    \[
        \Pr[E \subset G] = 1 - o(1).
    \]
\end{corollary}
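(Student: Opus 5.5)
The statement to prove is Corollary~\ref{cor:embedding}. I will apply Theorem~\ref{thm:embedding} independently to each of the three random biregular pieces that make up $\calG_{D_1,D_2,d_1,d'_1,d_2}$, then couple them. By construction, the distribution $\calG_{D_1,D_2,d_1,d'_1,d_2}$ samples three independent graphs:
\begin{itemize}
\item a random $(d_1,d'_1)$-biregular graph on $(Y_1,Z_2)$;
\item an independent copy on $(Z_1,Y_2)$;
\item an independent random $(d_2,d_2)$-biregular graph on $(Y_2,Z_2)$.
\end{itemize}
Likewise, $\calE_{D_1,D_2,p_1,p_2}$ is a product of three independent Erd\H{o}s--R\'enyi bipartite graphs on the same three vertex pairs, with edge probabilities $p_1,p_1,p_2$. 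Since both distributions are products over the same three parts, it suffices to build a coupling for each part separately and take the product coupling.

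For the pieces $(Y_1,Z_2)$ and $(Z_1,Y_2)$, I invoke Theorem~\ref{thm:embedding} with $n_1=D_1$, $n_2=D_2$, degrees $(d_1,d'_1)$ and edge count $m_1=D_1d_1=D_2d'_1$. (The corollary's hypothesis presumably intends $D_1d_1=D_2d'_1$, matching Lemma~\ref{lem:Zgadget}.) The theorem requires $\gamma\ge C\bigl(\frac{d_1d'_1}{m_1}+\frac{\log m_1}{\min\{d_1,d'_1\}}\bigr)^{1/3}$. This follows from the corollary's hypothesis, because the quantity inside the cube root there dominates this one: $\log m_1\le \log(m_1m_2)$, and $\min\{d_1,d'_1,d_2\}\le\min\{d_1,d'_1\}$. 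The theorem then gives, with $p=\frac{(1-\gamma)m_1}{D_1D_2}=p_1$, a coupling with $\Pr[E\subset G]=1-o(1)$ on that piece.

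For the piece $(Y_2,Z_2)$, I apply the theorem with $n_1=n_2=D_2$, both degrees equal to $d_2$, and $m_2=D_2d_2$. The required bound $\gamma\ge C(\frac{d_2^2}{m_2}+\frac{\log m_2}{d_2})^{1/3}$ is again implied by the hypothesis, and the resulting edge probability is $p=\frac{(1-\gamma)m_2}{D_2^2}=p_2$. The same universal constant $C$ from Theorem~\ref{thm:embedding} works for all three applications.

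Finally, I take the product of the three couplings. This is a valid joint distribution of $E\sim\calE_{D_1,D_2,p_1,p_2}$ and $G\sim\calG_{D_1,D_2,d_1,d'_1,d_2}$, because the marginals of each side are products of the correct marginals. The event $E\subset G$ is the intersection of the three per-piece containment events. A union bound over the three failure events bounds the total failure probability by $3\cdot o(1)=o(1)$.

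The only point needing care is that ``$o(1)$'' is asymptotic in a common parameter. I will interpret all three $o(1)$ terms as $D\to\infty$, with $d$'s growing as in Lemma~\ref{lem:Zgadget}. This makes $m_1,m_2\to\infty$, so each term vanishes and so does their sum.
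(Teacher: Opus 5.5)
Your proof is correct and follows the paper's argument exactly: the paper just says the corollary follows by applying Theorem~\ref{thm:embedding} to the three pieces $(Y_1,Z_2)$, $(Z_1,Y_2)$ and $(Y_2,Z_2)$. You supply the details it omits, namely that the corollary's $\gamma$-condition dominates each per-piece condition, that the resulting edge probabilities are exactly $p_1$ and $p_2$, and that a product coupling over the disjoint edge sets plus a union bound gives $\Pr[E\subset G]=1-o(1)$. Your reading of the hypothesis as $D_1d_1=D_2d'_1$, and of the $(Y_1,Z_2)$ and $(Z_1,Y_2)$ pieces as independent samples, is the one needed for the product coupling to have the marginals $\calE_{D_1,D_2,p_1,p_2}$ and $\calG_{D_1,D_2,d_1,d'_1,d_2}$.
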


Now, let us finish off the proof of expansion for a random graph sampled from $\calG_{D_1,D_2,d_1,d'_1,d_2}$. Let $\gamma = o_D(1)$ satisfy the bounds in \Cref{cor:embedding}, and let $p_1 = \frac{(1-\gamma)d_1}{D_2}$ and $p_2 = \frac{(1-\gamma)d_2}{D_2}$. Sample $(E, G)$ from the joint distribution of $\calE_{D_1,D_2,p_1,p_2}$ and $\calG_{D_1,D_2,d_1,d'_1,d_2}$ as given by \Cref{cor:embedding}. With probability $1-o(1)$, $E \subset G$. Furthermore, by \Cref{lem:erdos-renyi-lossless}, with probability $1-O(1/D)$, $E$ satisfies the property that for all $A_1 \subseteq Y_1, A_2 \subseteq Y_2$, 
\begin{align*}
    |N_{Z_2}(A_1 \cup A_2)| 
    &\ge (1 - \eps_{|A_1|,|A_2|}) \cdot (p_1|A_1| + p_2|A_2|) D_2 \\
    &= (1 - \eps_{|A_1|,|A_2|})(1-\gamma) \cdot (d_1|A_1| + d_2|A_2|) \\
    &\ge (1 - \gamma - \eps_{|A_1|,|A_2|}) \cdot (d_1|A_1| + d_2|A_2|) \\
    &\ge \left( 1 - \gamma - \frac{d_1|A_1| + d_2|A_2|}{D_2} - \sqrt{\frac{4\log D}{(1-\gamma) \min \{ d_1, d_2 \}}} \right) \cdot (d_1|A_1| + d_2|A_2|) \\
    &=: (1 - \eps) \cdot (d_1|A_1| + d_2|A_2),
\end{align*}
thus because $E \subset G$ the same bound holds for $G$ as well. Further, with probability $1 - O(1/D)$, the same bound holds for subsets of $Z_1 \cup Z_2$ expanding to $Y_2$. For $d_1|A_1| + d_2|A_2| \le \mu |D_2|$ where $\mu = o_D(1)$, and $d_1,d_2 = \omega(\log D)$,  $\eps = o_D(1)$.

\section{Putting it All Together}\label{sec:all_together}

In this section, we briefly summarise how our results may be put together to prove our main results, Theorems~\ref{thm:main_randomised} and~\ref{thm:main_explicit}. The main aim here is to confirm that the various parameters may be chosen of the correct size and in the correct order to ensure each required condition is satisfied.

\begin{proof}[Proof of Theorem~\ref{thm:main_randomised}]
    We may choose $\delta = 1/2$ in Theorem~\ref{thm:random_lossless_Z_graphs} to obtain infinite families of $(n,m,\Delta_1, \Delta_2, \eta_1, \eta_2, \epsilon_1, \epsilon_2)$ lossless $Z$-graphs, for constants $\eta_1, \eta_2$, for any constant integers $\Delta_1, \Delta_2$, and with any constant imbalance $\frac{n}{m}$. Crucially, we also have $\epsilon_i < \frac{2}{\Delta_i}$ for $i = 1,2,$. In choosing our lossless $Z$-graphs, we first choose the constant imbalance $\frac{n}{m}$ to be large enough, then we choose the integer $\Delta_1$ to be large enough, and finally we choose the integer $\Delta_2$ to be large enough. 
    
    The rate of the quantum error-reduction code resulting from the lossless $Z$-graph, given the construction of Section~\ref{sec:construct_quantum_error_reduction_codes}, is $\frac{n}{n+2m} = \frac{n/m}{n/m+2}$, and so choosing $\frac{n}{m}$ to be a large enough constant enables our quantum error-reduction codes to have any rate in $(0,1)$. For the algorithms associated with our quantum error-reduction codes, we note that their encoding and unencoding may always take place with a linear number of quantum gates and in a constant quantum depth. Similarly, the classical error-reduction algorithms always use a linear total number of classical gates, and the parallel algorithms may be run in constant depth. As for the classical error-reduction algorithms, we may refer to Lemmas~\ref{lem:X_seq_reduction},~\ref{lem:Z_seq_reduction},~\ref{lem:parallel_X_reduction} and~\ref{lem:parallel_Z_reduction} to see that error reduction may be achieved by taking a large enough $\Delta_1$, and then a large enough $\Delta_2$. We emphasise that the parallel algorithms crucially rely on the ability of our randomised lossless $Z$-graphs to achieve $\epsilon_i < \frac{2}{\Delta_i}$, as well as being able to construct them for any constant integers $\Delta_1, \Delta_2$.

    To establish the properties of the final error-correcting codes $\mathcal{Q}_k$ constructed from the error-reduction codes, we first turn to Proposition~\ref{prop:ecc_code_rate} to establish the rate of $\mathcal{Q}_k$; suppose we have some target rate $R \in (0,1)$. For the base case of the induction, the quantum error-correcting code $\mathcal{Q}_0$, one may simply take, for example, a random quantum CSS code~\cite{calderbank1996good} of rate $R$. For the induction, we may start by choosing $r^{(1)}$ and $r^{(2)}$ large enough to achieve the positive rate $R$ of $\mathcal{Q}_k$ (see Equation~\eqref{eq:qerc_to_qecc_rate}); in particular, we must choose $r^{(1)} > 1/2$. Next, we turn to Lemma~\ref{lem:seq_erc_to_ecc} to handle the sequential error correction algorithms. One may choose $\Delta_1, \Delta_2$ sufficiently large in order to achieve a sufficiently small $\epsilon^{(2)}$ (the factor of error reduction used in Lemma~\ref{lem:seq_erc_to_ecc}); this factor is determined in Lemmas~\ref{lem:X_seq_reduction} and~\ref{lem:Z_seq_reduction}. The constant fraction of errors that the code may then correct, $\Delta$, is then determined by the constants $\alpha, \beta, \gamma$ given to us by Lemmas~\ref{lem:X_seq_reduction} and~\ref{lem:Z_seq_reduction}, as well as the numbers $R$, $r^{(1)}$ and $\delta_0$ (the fraction of errors correctable by the constant-size code $\mathcal{Q}_0$).

    Finally, for the parallel algorithms, we turn to Lemma~\ref{lem:parallel_erc_to_ecc}. Again, we may take $\Delta_1, \Delta_2$ to be large enough to achieve small enough $\epsilon^{(1)}$ and $\epsilon^{(2)}$, which are determined by Lemmas~\ref{lem:parallel_X_reduction} and~\ref{lem:parallel_Z_reduction}. The numbers $\delta^{(1)}$ and $\delta^{(2)}$ are again determined by the constants $\alpha, \beta, \gamma$ in those lemmas, and these $\delta^{(i)}$ determine the fraction of correctable errors $\Delta$ in Lemma~\ref{lem:parallel_erc_to_ecc}.
\end{proof}

\begin{proof}[Proof of Theorem~\ref{thm:main_explicit}]
    We again begin by choosing parameters for our family of lossless $Z$-graphs from which we form our explicit quantum error-reduction codes. Referring to Theorem~\ref{thm:Z-lossless}, we see that the imbalance $\frac{n}{m}$ may be chosen to be any constant controlled by the integers $\beta_1, \beta_2$; in particular, $\frac{n}{m} = \frac{\beta_1}{\beta_2}$. In addition, one may choose a constant imbalance of the two degrees $\Delta_1$ and $\Delta_2$; in particular, $\frac{\Delta_2}{\Delta_1} > 0.999\alpha$. In Theorem~\ref{thm:Z-lossless}, we therefore choose $\epsilon = \epsilon_1 = \epsilon_2 = \frac{1}{8}$. Next, we choose the constant imbalance $\frac{n}{m}$ to be large enough (by choosing integers $\beta_1, \beta_2$ such that $\frac{\beta_1}{\beta_2}$ is large enough), and then we choose the constant imbalance $\frac{\Delta_2}{\Delta_1}$ to be large enough (by choosing the constant $\alpha$ to be large enough). Finally, we choose the degree $\Delta_1$ to be large enough.

    We note that, unlike for our randomised construction, the explicit construction does not provide lossless $Z$-graphs for every $n$ large enough, and so some care must be taken to ensure that the lossless $Z$-graphs can fit with the concatenation of quantum error-reduction codes to quantum error-correcting codes. We may, however, proceed in essentially the same way as Spielman~\cite{spielman1995linear}. That is, we start by noting that, given any choice of $\epsilon, \alpha, \beta_1, \beta_2$ in Theorem~\ref{thm:Z-lossless}, the resulting family of lossless $Z$-graphs is dense, i.e., their sizes obey $n_q - n_{q-1} = o(n_q)$, which may be seen by considering the density of primes in arithmetic progressions.

    Next, we consider the families of quantum error-reduction codes in Section~\ref{sec:concat_structure} that we require. First, we require the family $\mathcal{R}_k^{(1)}$ with $n_k = n_0\left(\frac{r^{(1)}}{1-r^{(1)}}\right)^k$ message qubits and $n_{k-1}$ check qubits, for all $k \geq 1$. If we take the ratio $\frac{\beta_1}{\beta_2}$ to be large enough, by the density of the explicit lossless $Z$-graphs, there is some $n_0$ such that, for all $k \geq 1$, there is a lossless $Z$-graph of the desired parameters with $n_q \geq n_k$ and $2m_q \leq n_{k-1}$. To produce a quantum error-reduction code of the desired parameters with $n_k$ message qubits and $n_{k-1}$ check qubits, one may then simply remove some message qubits by setting them to $\ket{0}$ before the encoding circuit, and add in some ``dummy'' check qubits also in the state $\ket{0}$, that do not interact with the codeblock. The family $\mathcal{R}_k^{(2)}$ may be obtained similarly.

    With this, the proof of Theorem~\ref{thm:main_explicit} goes through in essentially the same way as that of Theorem~\ref{thm:main_randomised}. Indeed, choosing a large enough $\frac{n}{m}$ ensures that our quantum error-reduction codes may have any desired rate in $(0,1)$, which again translates into achieving any desired rate $R \in (0,1)$ for the final quantum error-correcting codes by Proposition~\ref{prop:ecc_code_rate} and Equation~\eqref{eq:qerc_to_qecc_rate}. Second, we may demonstrate the performance of the sequential classical decoding algorithms by referring to Lemmas~\ref{lem:X_seq_reduction} and~\ref{lem:Z_seq_reduction}, emphasising again that the order of operations is to first choose a large enough constant $\frac{n}{m}$, and then a large enough constant $\frac{\Delta_2}{\Delta_1}$, and then a large enough constant $\Delta_1$.
    
\end{proof}

\section*{Acknowledgements}

AW acknowledges inspiring discussions with L. Golowich. AW acknowledges funding from NSF grant PHY-2325080. AW acknowledges funding from the MIT-IBM Watson AI Lab. AW acknowldges support from the MIT Department of Physics. This preprint is assigned number MIT-CTP/5994.

TCL is supported in part by funds provided by the U.S. Department of Energy (D.O.E.) under cooperative research agreement DE-SC0009919, and by the Simons Collaboration on Ultra-Quantum Matter, which is a grant from the Simons Foundation (652264, JM).

RYZ is supported by a Schwarzman College of Computing Future Research Fellowship with support from Google.


\bibliographystyle{alpha}
\bibliography{refs.bib}

\end{document}